\renewcommand*{\thefootnote}{\fnsymbol{footnote}}
\newcommand{\join}{\vee}
\newcommand{\R}{\mathbb{R}}
\newcommand{\be}{\begin{equation}}
\newcommand{\ee}{\end{equation}}
\newcommand{\tr}{{\rm tr}}
\theoremstyle{plain}
\newtheorem{thm}{Theorem}
\newtheorem{lem}[thm]{Lemma}
\newtheorem{assumptions}[thm]{Assumptions}
\newtheorem{defn}[thm]{Definition}
\newtheorem{definition}[thm]{Definition}
\newtheorem{observation}[thm]{Observation}
\theoremstyle{remark}
\def\p@subsection{}\makeatother
\begin{document}
\title{Thermodynamics and the structure of quantum theory}
\author{Marius Krumm}
\affiliation{Department of Applied Mathematics, University of Western Ontario, London, ON N6A 5BY, Canada}
\affiliation{Department of Theoretical Physics, University of Heidelberg, Heidelberg, Germany}
\affiliation{Perimeter Institute for Theoretical Physics, Waterloo, ON N2L 2Y5, Canada}
\author{Howard Barnum}
\affiliation{Department of Physics and Astronomy, University of New Mexico, Albuquerque, NM, USA}
\affiliation{Institute for Theoretical Physics and Riemann Center for Geometry 
and Physics, Leibniz University Hannover, Appelstra\ss e 2, 30163 Hannover, 
Germany}
\author{Jonathan Barrett}
\affiliation{Department of Computer Science, University of Oxford, Oxford, UK}
\author{Markus P. M\"uller}
\affiliation{Department of Applied Mathematics, University of Western Ontario, London, ON N6A 5BY, Canada}
\affiliation{Department of Philosophy, University of Western Ontario, London, ON N6A 5BY, Canada}
\affiliation{Perimeter Institute for Theoretical Physics, Waterloo, ON N2L 2Y5, Canada}
\affiliation{Department of Theoretical Physics, University of Heidelberg, Heidelberg, Germany}

\begin{abstract}
Despite its enormous empirical success, the formalism of quantum theory still raises fundamental questions: why is nature described in terms of complex Hilbert spaces, and what modifications of it could we reasonably expect to find in some regimes of physics? Here we address these questions by studying how compatibility with thermodynamics constrains the structure of quantum theory. We employ two postulates that any probabilistic theory with reasonable thermodynamic behavior should arguably satisfy. In the framework of generalized probabilistic theories, we show that these postulates already imply important aspects of quantum theory, like self-duality and analogues of projective measurements, subspaces and eigenvalues. However, they may still admit a class of theories beyond quantum mechanics. Using a thought experiment by von Neumann, we show that these theories admit a consistent thermodynamic notion of entropy, and prove that the second law holds for projective measurements and mixing procedures. Furthermore, we study additional entropy-like quantities based on measurement probabilities and convex decomposition probabilities, and uncover a relation between one of these quantities and Sorkin's notion of higher-order interference.
\end{abstract}

\date{April 5, 2017}

\maketitle

\normalem

\bigskip

\renewcommand*{\thefootnote}{\arabic{footnote}}

\section{Introduction}

 Quantum mechanics has existed for about 100
years now, but despite its enormous
 success in experiment and
application, the meaning and origin of its
 counterintuitive
formalism is still widely considered to be difficult
 to grasp. Many attempts to put quantum mechanics on a more intuitive footing have been made over the decades, which includes the development of a variety of interpretations of quantum physics (such as the many-worlds interpretation~\cite{Everett}, Bohmian mechanics~\cite{Bohm}, QBism~\cite{QBism}, and many others~\cite{Madness}), and a thorough analysis of its departure from classical physics (as in Bell's Theorem~\cite{Bell} or in careful definitions of notions of contextuality~\cite{Spekkens}).
In more recent
 years, researchers, mostly coming
  from and inspired by the field of quantum information processing
  (early examples include~\cite{Hardy, BarnumLogic, BarnumConvex}), 
have taken as a starting point the set of all probabilistic theories.
Quantum theory is one of them and can be uniquely determined by specifying
 some of its characteristic properties~\cite{Fuchs} (as in e.g.~\cite{Hardy,DakicBrukner,MasanesMueller,InformationalDerivation,Hardy2011,Postulates,WilceDeriv1,MMAP2013,Hoehn,HoehnWever}).

While the origins of this framework date back at
  least to the 1960s~\cite{Mackey,MielnikGeo,LudwigArticles}, it was the development
  of quantum information theory with its emphasis on simple
  operational setups that led to a new wave of interest in
  ``generalized probabilistic theories'' (GPTs)~\cite{Hardy,Barrett}.
This framework turned out to be very fruitful for fundamental
investigations of quantum theory's information-theoretic and operational properties. For example, GPTs make it possible to contrast quantum
  information theory with other possible theories of information
  processing, and in this way to gain a deeper understanding of its
  characteristic properties in terms of computation or communication.

In a complementary approach, there has been a wave of attempts to
find simple physical principles that single out quantum correlations
from the set of all non-signalling correlations in the
device-independent formalism~\cite{Popescu}. These include non-trivial
communication complexity~\cite{vanDam}, macroscopic
locality~\cite{ML}, or information
causality~\cite{Pawlowski}. However, none of these principles so far
turns out to yield the set of quantum correlations exactly. This led
to the discovery of ``almost quantum
correlations''~\cite{AlmostQuantum} which are more general than those
allowed by quantum theory, but satisfy all the aforementioned
principles. Almost quantum correlations seem
to appear naturally in the context of quantum gravity~\cite{Craig}.
 
A relation to other fields of physics can also be drawn from
information causality, which can be understood as the requirement
that
 a notion of
entropy~\cite{WehnerEntropy,BarnumEntropy,KimuraEntropy,LogicEntropy}
exists which has some natural properties like the
 data-processing
inequality~\cite{Dahlsten}. These emergent connections to entropy
and quantum gravity are particularly interesting since they
 point to
an area of physics where modifications of quantum theory are
well-motivated: Jacobson's results~\cite{Jacobson} and holographic
duality~\cite{Ryu} relate thermodynamics, entanglement, and (quantum)
gravity, and modifying quantum theory has been discussed as a means
to
 overcome apparent paradoxes in black-hole
physics~\cite{HorowitzMaldacena}.

While generalized probabilistic theories provide a way to generalize
quantum theory and to study more general correlations and physical
theories, they still leave open the question as to which principles
should guide us in applying the GPT formalism
  for this purpose. The considerations above
 suggest taking, as a
guideline for such modifications, the principle that they 
 support a well-behaved notion of
thermodynamics. As A.\ Einstein \cite{Einstein} put it,
 
\textit{``A theory is the more impressive the greater the simplicity
  of its premises, the more different kinds of things it relates, and
  the more extended its area of applicability. Therefore the deep
  impression that classical thermodynamics made upon me. It is the
  only physical theory of universal content which I am convinced will
  never be overthrown, within the framework of applicability of its
  basic concepts.''}
 
 Along similar lines, A.\ Eddington
\cite{Eddington} argued that \textit{``The law that
 entropy always
  increases holds, I think, the supreme position among
 the laws of
  Nature. If someone points out to you that your pet
 theory of the
  universe is in disagreement with Maxwell's equations —
 then so
  much the worse for Maxwell's equations. If it is found to be
  contradicted by observation — well, these experimentalists do
  bungle
 things sometimes. But if your theory is found to be against
  the
 second law of thermodynamics I can give you no hope; there is
  nothing for it but to collapse in deepest humiliation.''}
 
Here we take this point of view
 seriously. We investigate
what kinds of probabilistic theories,
 including but not limited to
quantum theory, could peacefully coexist
 with
thermodynamics. We present two postulates that formalize 
  important physical properties which can be expected 
 to hold in any such theory. On the one hand, these two postulates allow for a
class of theories
 more general than quantum or classical theory,
which thus describes potential
 alternative physics consistent with
important parts of thermodynamics as 
 we know it. Indeed, by considering a thought experiment originally conceived by
von Neumann, we show that these theories all give rise to a unique, consistent
 form of thermodynamical entropy. Furthermore, we show that this entropy satisfies several other important properties, including two instances of the
 second law. On the other hand, we show that these
postulates already 
 imply many structural properties which are also
present in quantum
 theory, for example self-duality and the
existence of
 analogues of projective measurements, observables,
eigenvalues and eigenspaces.
 
 In summary, our analysis shows that
important structural aspects of
 quantum and classical theory are
already implied by these aspects of thermodynamics,
 but on the other
hand it suggests that there is still some ``elbow room'' for
 modification
within these limits dictated by thermodynamics.

Thermodynamics in GPTs has been considered in some earlier
works. In~\cite{GPTBH1,GPTBH2}, the authors introduced a notion of
(R\'enyi-2-)entanglement entropy, and studied the phenomenon of
thermalization by
entanglement~\cite{PopescuShortWinter,Goldstein,Adlam} and the
black-hole information problem (in particular the Page
curve~\cite{Page}) in generalizations of quantum theory. H\"anggi and
Wehner~\cite{HaenggiWehner} have related the uncertainty principle to
the second law in the framework of GPTs. Chiribella and Scandolo
(\cite{diagonal,ChiribellaScandolo}, see also~\cite{ScandoloMaster})
have considered the notion of diagonalization and majorization in
general theories, leading to a resource-theoretic approach to
thermodynamics in GPTs. There are various
  connections between their results and ours, but there are essential
  differences. In particular, they assume the purification postulate
  (which is arguably a strong assumption that in particular excludes
  classical thermodynamics), whereas we are not making any assumption
  on composition of systems whatsoever, and in this sense work in a
  more general framework. Furthermore, while Chiribella and Scandolo
  take a resource-theoretic approach motivated by quantum information
  theory, our analysis relies on a more traditional thermodynamical
  thought experiment (namely von Neumann's). We presented 
  results related to some of those in the present paper in the conference
  proceedings~\cite{BBKM}; here we use different
  assumptions and obtain additional results.
 
 Our paper is organized as follows. We start with an overview of the
 framework of generalized probabilistic theories. Then we present von
 Neumann's thought experiment on thermodynamic entropy, and a
 modification of it due to Petz~\cite{Petz}. Although it relies on
 very mild assumptions, it already rules out all theories that admit a
 state space known as the \textit{gbit} or \textit{squit} (a
 square-shaped state space that can be used to describe one of the two
 local subsytems of a composite system known as the
 PR-box~\cite{PopescuRohrlich}, exhibiting stronger-than-quantum
 correlations).  Then we present our two postulates, and show that
 they imply many structural features of quantum theory. We show that
 theories that satisfy both postulates behave consistently in von
 Neumann's thought experiment and admit a notion of thermodynamic
 entropy which satisfies versions of the second law.

Because entropies
are an important bridge between information theory and thermodynamics,
in the
 final section we investigate the consequences of our postulates for generalizations of quantities of known
  significance in quantum thermodynamics~\cite{SecondLaws}, defined by applying R\'enyi entropies to probabilities in
  convex decompositions of a state, or of measurements made on a
  state. In particular, we show a relation between max-entropy 
and Sorkin's notion of higher-order interference~\cite{Sorkin}:   
equality of the preparation and measurement 
based max-entropies implies the absence of 
higher-order interference.
Most proofs are deferred to the
  appendix. Several results of this paper have been announced in the Master thesis of one of the authors~\cite{MariusMaster}.

\section{The mathematical framework}
\label{SecFramework}
Our results are obtained in the framework of generalized probabilistic
theories (GPTs)~\cite{Mielnik,Barrett,Hardy,MasanesMueller,BBLW}. The
goal of this framework is to capture all probabilistic theories,
i.e.\ all theories that use states to make predictions for
probabilities of measurement outcomes. Although the framework is based
on very weak and natural assumptions, we can only provide a short
introduction of the main notions and results here. For more detailed
explanations of the framework, see
e.g.\ ~\cite{MariusMaster,Barrett,MasanesMueller,Hardy,Janotta,PfisterNat}. The
framework contains quantum theory and also the application of
probability theory to classical physics, often referred to as
classical probability theory, as special cases. It also contains theories which
  differ substantially from classical or quantum probability theory,
  for example boxworld~\cite{Barrett}, which allows superstrong
  nonlocality, and theories that allow higher-order interference
  ~\cite{Sorkin}.
 
 A central notion is that of the state and the set of states, the
 state space $\Omega_A$. A state contains all information necessary to
 calculate all probabilities for all outcomes of all possible
 measurements. One possible and convenient representation would be to
 simply list the probabilities of a set of ``fiducial'' measurement
 outcomes which is sufficient to calculate all outcome probabilities
 for all measurements~\cite{Barrett,Hardy}. An example is given in
 Figure~\ref{fig_gbitcone}.
 
\begin{figure}
\begin{center}
\includegraphics[width=0.4\textwidth]{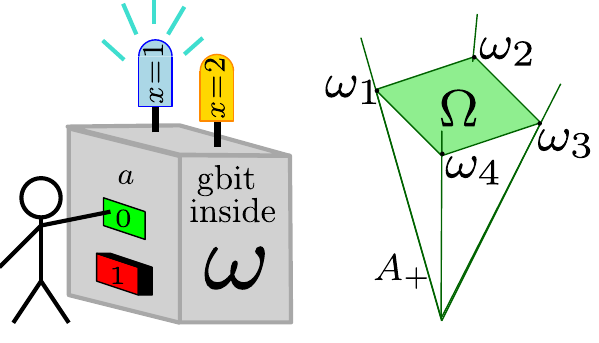}
\end{center}
\caption{\small \textit{An example state space, $A$,
    modelling a so-called ``gbit''~\cite{Barrett} which is often used
    to describe one half of a PR-box. The operational setup is
    depicted on the left, and the mathematical formulation is sketched
    on the right. An agent (``Alice'') holds a black box $\omega$ into
    which she can input one bit, $a\in\{0,1\}$, and obtains one
    output, $x\in\{1,2\}$. The box is described by a conditional
    probability $p(x|a)$. In the GPT framework, $\omega$ becomes an
    actual state, i.e.\ an element of some state space
    $\Omega$. Concretely, $\omega=(1,p(1|0),p(1|1))\in\mathbb{R}^3$,
    where the first entry $1$ is used to describe the normalization,
    $p(1|0)+p(2|0)=p(1|1)+p(2|1)$. In this case, all probabilities are
    allowed by definition, so that the state space $\Omega$ becomes
    the square, i.e.\ the points $(1,s,t)$ with $0\leq s,t \leq
    1$. Alice's input $a$ is interpreted as a ``choice of
    measurement'', and the two measurements are $e_{x=1}^{(a=0)},
    e_{x=2}^{(a=0)}$ resp.\ $e_{x=1}^{(a=1)}, e_{x=2}^{(a=1)}$ such
    that $\sum_{x=1}^2 e_x^{(a)}(\omega)=1$ for all states
    $\omega\in\Omega$. If we describe effects by vectors by using the
    standard inner product, we have, for example,
    $e_{x=1}^{(a=0)}=(0,1,0)$, since
    $e_{x=1}^{(a=0)}(\omega)=P(1|0)=(0,1,0)\cdot\omega$. There are
    four pure states, labelled $\omega_1,\ldots,\omega_4$. Every pure
    state $\omega_i$ is perfectly distinguishable from every other
    pure state $\omega_j$ for $j\neq i$, but no more than two of them
    are jointly distinguishable in a single measurement. More
    generally, every state on one side of the square is perfectly
    distinguishable from every state on the opposite side. The unit
    effect is $u_A=(1,0,0)$.}}\label{fig_gbitcone}
\end{figure}
 
 It is possible to create statistical mixtures of states: Let us
 assume a black box device randomly prepares a state $\omega_1$ with
 probability $p_1$ and a state $\omega_2$ with probability $p_2$. In
 agreement with the representation of states as lists of probabilities
 and the law of total probability, the appropriate state to describe
 the resulting measurement statistics is $\omega = p_1 \omega_1 + p_2
 \omega_2$. This means that the state space $\Omega_A$ is convex and
 is embedded into a real vector space $A$ (to be described below). Due
 to the interpretation of states as lists of probabilities (which are
 between $0$ and $1$) we demand that $\Omega_A$ is bounded. Any state
 that cannot be written as a convex decomposition of other states is
 called a pure state. As pure states cannot be interpreted as
 statistical mixtures of other states, they are also called states of
 maximal knowledge. Furthermore, there is no physical distinction
 between states that can be prepared exactly, and states that can be
 prepared to arbitrary accuracy.  Thus, we also assume that $\Omega_A$
 is topologically closed.  In order to not obscure the physics by the
 mathematical technicalities introduced by infinite dimensions, we
 will assume that $A$ is finite-dimensional. Thus $\Omega_A$ is
 compact.  Consequently, every state can be obtained as a statistical mixture of finitely many pure states~\cite{Webster}.

Furthermore, it turns out to be convenient to introduce
  unnormalized states $\omega$, defined as the non-negative multiples
  of normalized states. They form a closed convex cone $A_+ := \mathbb
  R_{\ge 0} \cdot \Omega_A$. For simplicity of description, we choose
  the vector space containing the cone of states to be of minimal
  dimension, i.e.\ ${\rm span}(A_+)=A$.
 
 We introduce the
  normalization functional $u_A : A \rightarrow \mathbb R$ which
  attains the value one on all normalized states,
  i.e.\ $u_A(\omega)=1$ for all $\omega\in\Omega_A$.
 It is linear,
non-negative on the whole cone, zero only for the origin, and
$\omega\in A_+$ is an element of $\Omega_A$ if and only if
$u_A(\omega)=1$. The normalization $u_A(\omega)$ can be interpreted as
the probability of success of the preparation procedure. For states
with $u_A(\omega) < 1$, the preparation succeeds with probability
$u_A(\omega)$. The states with normalization $>1$ do not have a
physical interpretation, but adding them allows us to take full
advantage of the notion of cones from convex geometry.
 
 Effects are functionals
that map (sub)normalized states to probabilities, i.e.\ into
$[0,1]$. To each measurement outcome we assign an effect that
calculates the outcome probability for any state. Effects have to be
linear  for consistency with
the statistical mixture interpretation of
convex combinations of states. A \emph{measurement} (with $n$ outcomes) is a collection of effects $e_1,\ldots,e_n$ such that $e_1+\ldots+e_n=u_A$. Its interpretation is that performing the measurement on some state $\omega\in\Omega_A$ yields outcome $i$ with probability $e_i(\omega)$.

A set of states $\omega_1,\ldots,\omega_n$ is called \emph{perfectly distinguishable} if there exists a measurement $e_1,\ldots,e_n$ such that $e_i(\omega_j)=\delta_{ij}$, that is, $1$ if $i=j$ and $0$ otherwise. A collection of $n$ perfectly distinguishable pure states is called an \emph{$n$-frame}, and a frame is called \emph{maximal} if it has the maximal number $n$ of elements possible in the given state space. In quantum theory, for example, the maximal frames are exactly the orthonormal bases of Hilbert space. In more detail, a frame on an $N$-dimensional quantum system is given by $\omega_1=|\psi_1\rangle\langle\psi_1|,\ldots,|\psi_N\rangle\langle\psi_N|$, where $|\psi_1\rangle,\ldots,|\psi_N\rangle$ are orthonormal basis vectors.

Transformations are maps $T:A\to A$ that map states to states, i.e.\ $T(A_+)\subseteq A_+$. Similarly as effects, they also have to be linear in order to preserve statistical mixtures. They cannot increase the total probability, but are allowed to decrease it (as is the case, for example, for a filter), thus $u_A \circ T(\omega) \le u_A(\omega)$ for all $\omega \in
A_+$.
 
 Instruments\footnote{Some authors have recently 
begun referring to instruments as \emph{operations}, but long-standing
convention in  quantum information
theory (including~\cite{NielsenChuang}) uses the term ``operation'' 
for the quantum case of what we are calling transformations
(which are completely positive maps).  Also, Davies and Lewis~\cite{Davies}  define instrument more generally, to allow 
for continuously-indexed transformations, where we only consider finite collections $T_j$.}
~\cite{Davies} 
are collections of
transformations $T_j$ such that $\sum_j u_A \circ T_j = u_A$.
If an
instrument is applied to a state $\omega$, one obtains outcome $j$
(and post-measurement state $T_j(\omega)/p_j$) with probability
$p_j:=u_A(T_j(\omega))$. 
Each instrument corresponds to a measurement
given by the effects $u_A \circ T_j$. We will say 
it ``induces'' this measurement.

The framework
  of GPTs does not assume \emph{a priori} that all mathematically
  well-defined states, transformations and measurements can actually
  be physically implemented.
 Here, we will assume that a
measurement constructed from physically allowed effects is also
physically allowed. Moreover, we assume that the set of allowed
effects has the same dimension as $A_+$, because otherwise there would
be distinct states that could not be distinguished by any
measurement.

\section{Von Neumann's thought experiment}
\label{SecThoughtExp}
The following thought experiment has been applied by von
Neumann~\cite{Neumann} to find a notion of thermodynamic entropy for
quantum states $\rho$.  The result turns out to equal von Neumann
entropy, $H(\rho)=-\tr(\rho\log\rho)$. We apply the thought experiment
to a wider class of probabilistic theories.

We adopt the physical picture  used by von
Neumann~\cite{Neumann} to describe the thought experiment\footnote{Our thought experiment is identical to von Neumann's, up to two differences: first, we translate all quantum notions to more general GPT notions; second, while von Neumann implements the transition from (5) to (6) in Figure~\ref{FIG:NeumannTotal} via sequences of projections, we implement this transition directly via reversible transformations.}; we will comment on some idealizations used in this model at the end of this section. We consider a GPT ensemble $[S_1,...,S_N]$, where $S_i$ denotes the $i$-th physical system, and $N_j$ of the systems
are in state $\omega_j$, where $j=1,\ldots,n$ and $\sum_j N_j=N$. This ensemble is described by the state $\omega = \sum_{j=1}^n p_j
\omega_j$, where $p_j=N_j/N$, which describes the effective state of a system that is drawn uniformly at random from the ensemble.

 We introduce $N$ small, indistinguishable, hollow
 boxes\footnote{For a more detailed discussion of the physical properties of these small boxes, we refer the reader to von Neumann`s original work~\cite{Neumann}.}, and we put
each ensemble system $S_j$ into one of the boxes
 such that the
system is completely isolated from the outside.
 Furthermore, we
assume that the boxes form an ideal gas, which will
 allow us to use
the ideal gas laws in the following derivation. This
 gas will be called the $\omega$-gas. We
    will denote the total thermodynamic entropy of a system by $H$,
    with a subscript which may indicate whether it is the total
    entropy of a gas, which potentially depends both on the states of
    the GPT systems in the boxes and on the classical degrees of
    freedom (positions, momenta) of the boxes, or just the entropy
of the GPT or of the classical degrees of freedom.

At first we need to investigate how the entropy of the gas
  and of the
 ensemble are related to each other because later on, we
  will only consider the gas. So we consider also a second GPT
  ensemble $[S'_1,...,S'_N]$
 (described by $\omega' \in \Omega_A$)
  implanted into a gas the same way. At
 temperature $T=0$, the
  movement of the boxes freezes out and we are
 left with the GPT
  ensembles. In this case, the thermodynamic entropies of
    the gases  and the GPT ensembles must satisfy:
  $H_{\omega\text{-gas}}-H_{\omega'\text{-gas}} =
  H_{\omega\text{-ensemble}} -
  H_{\omega'\text{-ensemble}}$. Remember that the heat capacity is $C
  =
 \delta Q/\mathrm d T$, and as the gases only differ in their
  internal
 systems, which are isolated, $C$ is the same for both
  gases. With
 $\mathrm d H = \delta Q/T$ we thus find that
  $H_{\omega\text{-gas}}-H_{\omega'\text{-gas}}$ is constant in $T$,
  i.e. $H_{\omega\text{-gas}}-H_{\omega'\text{-gas}} =
  H_{\omega\text{-ensemble}} -
 H_{\omega'\text{-ensemble}}$ for all
  temperatures.
 
 The central tool for the thought experiment is a
  semi-permeable
 membrane. Whenever a box reaches the membrane, the
  membrane opens that
 box and measures the internal system. Depending
 on the result, a window is opened to let
  the box pass, or the window
 remains closed. It is crucial to note
  that this membrane will not cause problems in the style of Maxwell's
  demon, as was already discussed by von Neumann himself, because the
  membrane does not distinguish between its two sides.  
\begin{figure*}
\begin{center}
   \includegraphics[width=0.75\textwidth]{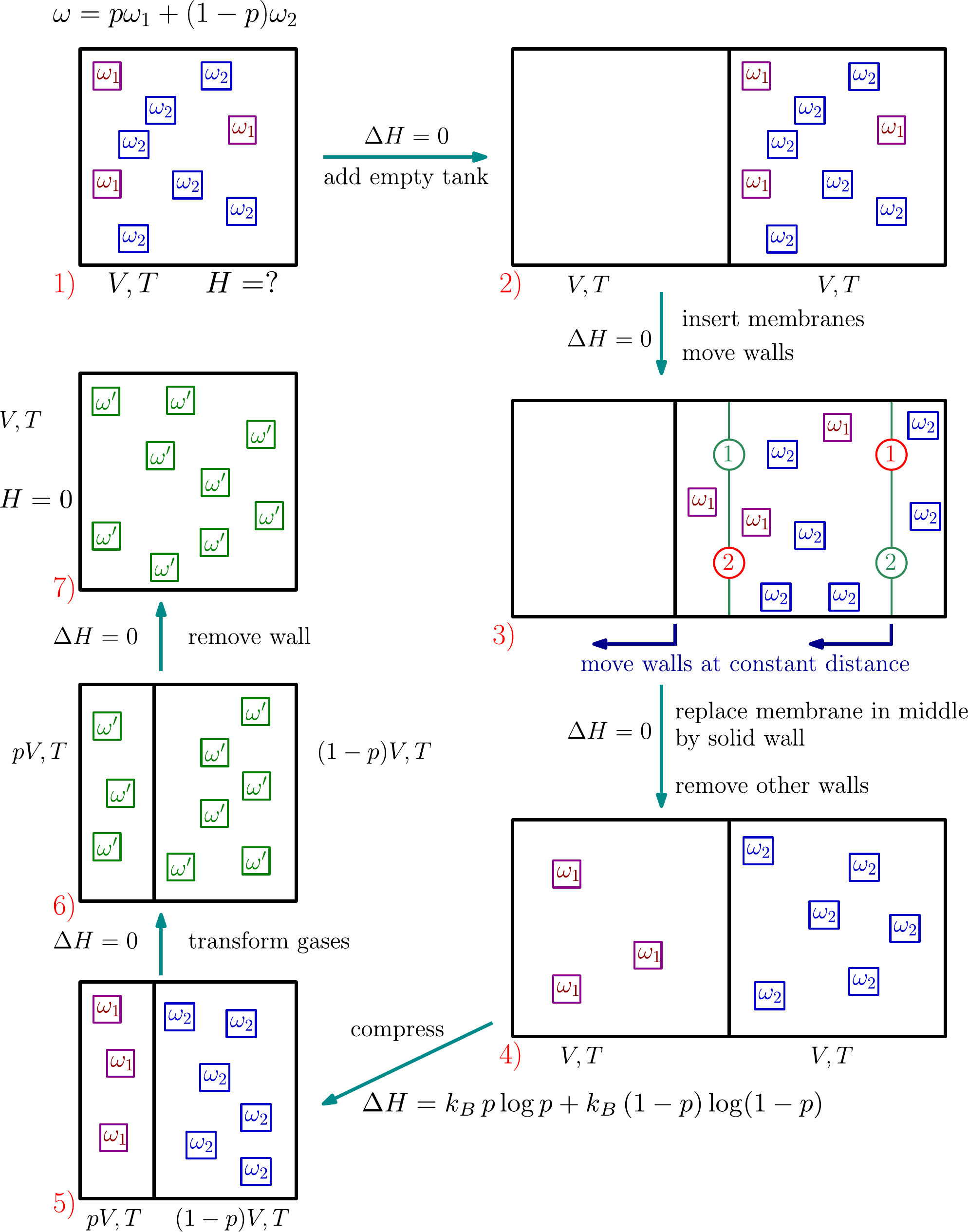}
	\caption{\small \textit{This figure shows von Neumann's thought experiment, as described in the main text. Stages 1)--5) also feature in Petz' version.}}\label{FIG:NeumannTotal}
\end{center}
\end{figure*}

Now we begin with the experiment itself; see
Figure~\ref{FIG:NeumannTotal}. We consider a state $\omega=
\sum_{j=1}^n p_j \omega_j$ where $\omega_j$ are perfectly
distinguishable pure states, and $p_j = N_j/N$, where $N_j$ boxes
contain a system in the state $\omega_j$. We assume that the
$\omega$-gas is confined in a container of volume $V$. Let there be a
second container which is identical to the first one, but empty. The
containers are merged together, the wall of the non-empty container
separating the containers replaced by a semi-permeable membrane which
lets only $\omega_1$ pass.  At the opposite wall of the non-empty
container we insert a semi-permeable membrane which only blocks
$\omega_1$. The solid wall in the middle and the outer semi-permeable
membrane are moved at constant distance until the solid wall hits the
other end.

Once this is accomplished, i.e.\ in stage 4) in Fig.~\ref{FIG:NeumannTotal}, one container has all $\omega_1$-boxes and the other one
contains all the rest. Note that this procedure is possible without
performing any work as can be seen via Dalton's Law~\cite{Schwabl}:
The work needed to push the semi-permeable membrane against 
the $\omega_1$-gas can be recollected at the other
side from the moving solid wall, which is pushed by the $\omega_1$-gas
into empty space. Thus we have separated the $\omega_1$-boxes from the
rest. We repeat a similar procedure until all the $\omega_j$-gases are
separated into separate containers of volume $V$.

Next we compress the containers isothermally to the volumes $p_j V$, respectively. Denoting the pressure by $P$, and using the ideal gas law, we obtain the required work
\[
   \int_V^{p_j V} P\, {\mathrm d} V =\int_V^{p_j V} N_j k_B T/V \,{\mathrm d} V = p_j N k_B T \log p_j,
\]
where $\log$ denotes the natural logarithm.
As the temperature and thus the internal energy
remain constant, we extract heat $N k_B T \sum_j p_j \log
p_j$.

At this point, we have achieved that every container contains a pure
state $\omega_j$. We now transform every $\omega_j$ to another pure
state $\omega'$ which we choose to be the same for all containers. This is achieved by opening the boxes and applying  a reversible transformation $T_j$ in every container $j$ which satisfies $T_j\omega_j=\omega'$. These transformations exist due to Postulate 1. Since the same transformation $T_j$ is applied to all small boxes in any given container $j$ (without conditioning on the content of the small box), this operation is thermodynamically reversible.

Now we merge the containers, ending with a pure
$\omega'$-gas in the same condition as the initial
$\omega$-gas. This merging is reversible, because the density is not
changed and because all states are the same, so one can just put in
the walls again. The only step that caused an entropy difference was
the isothermal compression. Thus, the difference of the entropies between
the $\omega$-gas and the $\omega'$-gas (which are equal to the
entropies of the respective GPT
ensembles) is $N k_B \sum_j p_j \log p_j$. Therefore $H_{\omega\text{-ensemble}} =
H_{\omega'\text{-ensemble}}- N k_B \sum_j p_j \log p_j$. 
If we assume that pure states have entropy zero, we thus end up with
\begin{equation} \label{eq:Entropy}
	H_{\omega\text{-ensemble}} = - N k_B \sum_j p_j \log p_j
\end{equation}
and with the following entropy per system of the ensemble:
\begin{equation}
	H(\omega) := \frac 1 N H_{\omega\text{-ensemble}} = - k_B \sum_j p_j \log p_j.
	\label{eqVonNeumannEntropy}
\end{equation}
In summary, we have made the following assumptions to arrive at this notion of thermodynamic entropy:
\begin{assumptions}
\label{AssThoughtExp}
\begin{itemize}
	\item[(a)] Every (mixed) state can be prepared as an ensemble/statistical mixture of perfectly distinguishable pure states.
	\item[(b)] A measurement that perfectly distinguishes those pure states can be implemented as a semi-permeable membrane, which in particular does not disturb the pure states that it distinguishes.
	\item[(c)] All pure states can be reversibly transformed into each other.
	\item[(d)] Thermodynamical entropy $H$ is continuous in the state. (Since ensembles must have rational coefficients $p_j=N_j/N$, we need this to approximate arbitrary states in the thought experiment.)
	\item[(e)] All pure states have entropy zero.
\end{itemize}	
\end{assumptions}

A generalized version of the thought experiment presented by Petz~\cite{Petz} is applicable to more general decompositions: suppose that $\omega_1,\ldots, \omega_n \in \Omega_A$ are perfectly distinguishable, but not necessarily pure. Let $p_1,...,p_n$ be a probability distribution. Then Petz' thought experiment implies that
\begin{equation}
	H\left(\sum_j p_j \omega_j\right) = \sum_j p_j H(\omega_j) - k_B \sum_j p_j \log p_j.
	\label{eqPetz}
\end{equation}
The main idea is that steps 1)--5) of von Neumann's
  thought experiment can be run even if the perfectly distinguishable
  states $\omega_1,\ldots,\omega_n$ are mixed and not pure (as long as
  the membrane will still keep them undisturbed). Then the entropy of
  the state in 5) can be computed by making an additional
  \emph{extensivity assumption}: denote the GPT entropy of an
  $\omega$-ensemble of $N$ particles in a volume $V$ by
  $H_{\omega\text{-ensemble}}(N,V)$, then this assumption is that
\[
   H_{\omega\text{-ensemble}}(\lambda N,\lambda V)=\lambda\, H_{\omega\text{-ensemble}}(N,V)
\]
for $\lambda\geq 0$. Assuming in addition that the entropy of the $n$ containers adds up, the total entropy of the configuration in step 5) is $N\sum_j p_j H(\omega_j)$, from which Petz obtains~(\ref{eqPetz}). While this approach needs this additional extensivity assumption, it does not need to postulate that all pure states can be reversibly transformed into each other (in contrast to von Neumann's version). Under the assumption that all pure states have entropy zero, it reproduces eq.~(\ref{eqVonNeumannEntropy}) as a special case.

We conclude this section with a few comments on the idealizations used in the thought experiments above.
The use of gases in which the exact numbers of particles with
  each internal state is known parallels von Neumann's argument in
  \cite{Neumann}.  We rarely if ever have such precise
  knowledge of particle numbers in real physical gases, so our
  argument involves a strong idealization, but one that is common in thermodynamics and that has also been made by von
  Neumann.\footnote{Here, von Neumann's thought experiment is formulated in terms of a frequentist view on probabilities, which is standard in most treatments on thermodynamics. A treatment involving a finite ensemble
    where the frequencies (and perhaps the total particle number) are
    stochastic might seem more suitable from a Bayesian point of view; it would likely raise
    issues about whether the amount of work extracted from a finite system is subject to fluctuations. For systems that are finite or out of equilibrium, measures such as Shannon's are known not to be
  the whole story (cf. \cite{SecondLaws} and references therein).  But
  even for finite systems with a more realistic treatment of
  uncertainty about particle numbers, the von Neumann entropy still
  gives the \emph{expected} work in the protocol he considers. We defer these issues to future work, although
    we note that \cite{SecondLaws} suggests the operational entropies
    discussed in Section~\ref{SubsecEntropies} are among the relevant tools for tackling
    them.}

Although fluctuations in work  are significant for small
  particle numbers, in the thermodynamic limit of large numbers of
  particles there is concentration about the expected value 
  given, in von Neumann's protocol, by the von Neumann entropy, and
  therefore our arguments (and von Neumann's) have the most physical
  relevance in this large-$N$ situation.  This is of course true for
  classical thermodynamics as well---indeed, the use made of the ideal
  gas law and Dalton's law in von Neumann's argument are additional
  places where large $N$ is needed if one wants fluctuations to be
  negligible.  We expect finer-grained considerations to be required
  for a thorough study of fluctuations in finite systems, which is one
  reason for interest in the additional entropic measures studied in Subsection~\ref{SubsecEntropies}, but von Neumann's argument does not
  concern these finer-grained aspects of the thermodynamics of finite systems.

\section{Why the ``gbit'' is ruled out}
\label{SecGbit}
In Section~\ref{SecFramework}, we have introduced the ``gbit'', a
system for which the state space $\Omega$ is a square. Gbits are particularly interesting because
  they correspond to ``one half'' of a Popescu-Rohrlich
  box~\cite{PopescuRohrlich} which exhibits correlations that are
  stronger than those allowed by quantum theory~\cite{Popescu}. One
might wonder whether the thought experiments of
Section~\ref{SecThoughtExp} allow us to define a notion of
thermodynamic entropy for the gbit. We will now show
  that this is not the case, which can be seen as a thermodynamical
  argument for why we do not see superstrong correlations of the Popescu-Rohrlich type in our universe.

Since not all states of a gbit can be written as a mixture of perfectly distinguishable \emph{pure} states, von Neumann's original thought experiment cannot be of direct use here. However, we may resort to Petz' version: every mixed state $\omega$ of a gbit can be written as a mixture of perfectly distinguishable \emph{mixed} states, as illustrated in Figure~\ref{FIG:GbitEntropy}. Furthermore, the other crucial assumption on the state space is satisfied, too: for every pair of perfectly distinguishable mixed states, there is an instrument (a ``membrane'') that distinguishes those states without disturbing them. We even have that all pure states can be reversibly transformed into each other (namely by a rotation of the square).

Thus, we can analyze the behavior of a gbit state space in Petz'
version of the thought experiment. Any continuous notion of
thermodynamic entropy $H$ consistent with this thought experiment
would thus have to satisfy~(\ref{eqPetz}). However, we will now show
that the gbit does not admit any notion of entropy
 that
satisfies~(\ref{eqPetz}). Consider different decompositions of the state
 $\omega=\frac 1 2
\omega_a + \frac 1 2 \omega_b$ in the center of the
 square, where
$\omega_a = p \omega_1 + (1-p)\omega_2$ as well as
 $\omega_b = p
\omega_3 + (1-p) \omega_4$. It is geometrically clear
 that every
choice of $0<p<1$ corresponds to a valid decomposition. We find
(applying Eq. (\ref{eqPetz}) to $\omega$ for the first equality, 
and to $\omega_a$ and $\omega_b$ for the second): 
\begin{eqnarray*}
	H(\omega)&=& \frac 1 2 H(\omega_a) + \frac 1 2 H(\omega_b) - 2 k_B
 \frac 1 2 \log \frac 1 2 \\
	&=& \frac 1 2 p\, H(\omega_1)+\frac 1 2 (1-p) H(\omega_2)\\
	&& + \frac 1 2 p\, H(\omega_3)+\frac 1 2 (1-p) H(\omega_4)\\
	&& -k_B p\log p -k_B (1-p)\log(1-p)+k_B\log 2.
\end{eqnarray*}
This expression can never be constant in $p$, no matter what value of entropy of the four pure states $H(\omega_i)$ we assume. Thus, the entropy $H(\omega)$ of the center state $\omega$ is not well-defined, since it depends on the choice of decomposition.

\begin{figure}
\begin{center}	\includegraphics[width=0.2\textwidth]{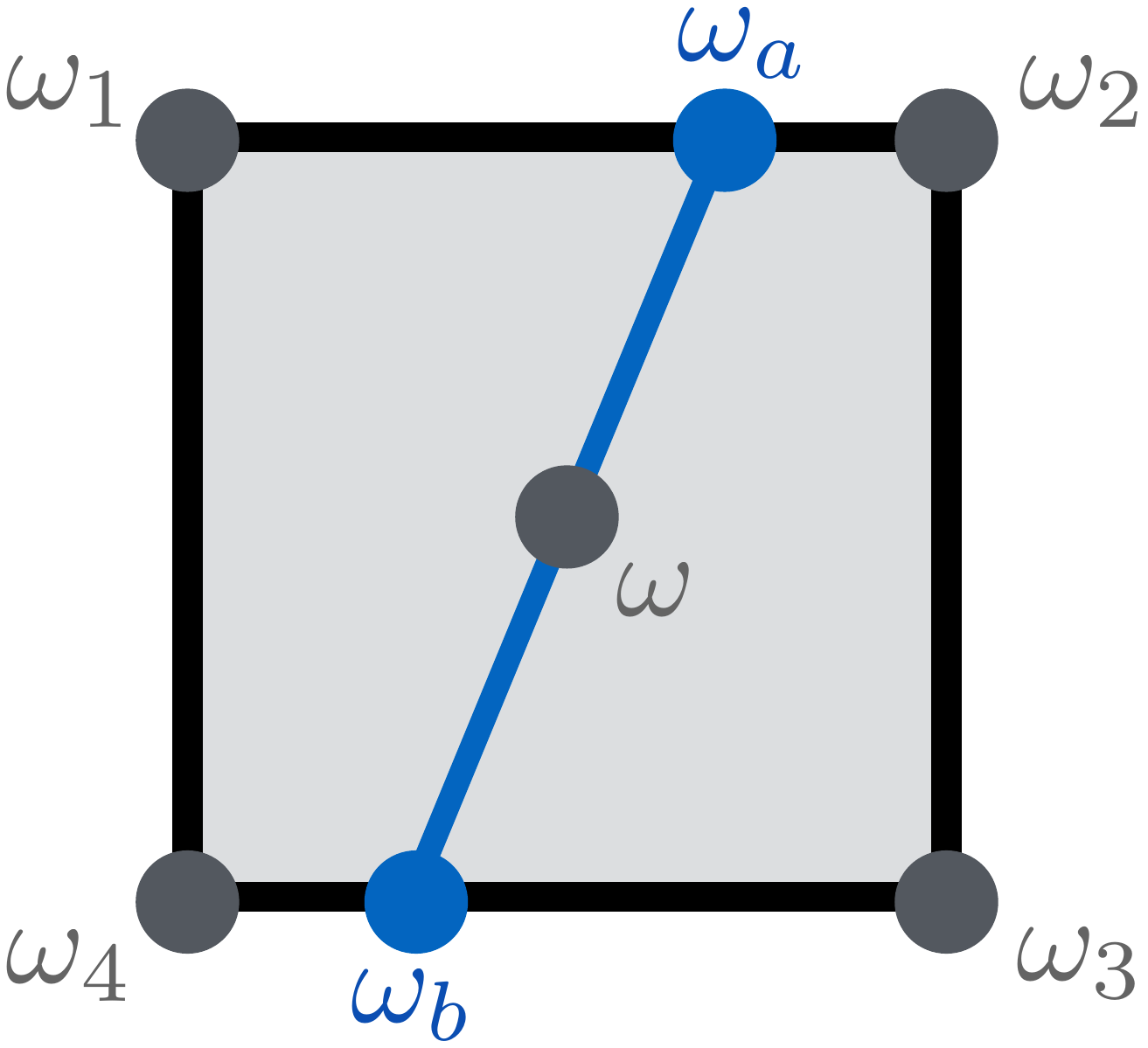}
	\caption{\small\textit{In an attempt to define a notion of thermodynamic entropy for the gbit, we can decompose any state into perfectly distinguishable states. This is done in two steps, as explained in the main text.}}
	\label{FIG:GbitEntropy}
\end{center}
\end{figure}

In other words, the structure of the gbit state space enforces that any meaningful notion of thermodynamic entropy $H$ will not only be a function of the \emph{state}, but a function of the \emph{ensemble that represents the state}. If a state $\omega$ is represented by different ensembles, then this will in general give different values of entropy.

So what goes wrong for the gbit? Clearly, all we can say with
certainty is that the combination of assumptions made in von Neumann's
thought experiment turns out not to yield a unique notion of entropy,
while a deeper physical interpretation seems only possible under
further assumptions on the interplay between the gbit and the
thermodynamic operations. However, a comparison with quantum theory
motivates at least one further speculative attempt 
 at
interpretation. In the example above, we have decomposed a state
$\omega$ into two perfectly distinguishable states $\omega_a$ and
$\omega_b$, which can themselves be decomposed into pairs of perfectly
distinguishable states $\omega_1$ and $\omega_2$, or $\omega_3$ and
$\omega_4$ respectively. In quantum theory, this would only be
possible if $\omega_a$ and $\omega_b$ are orthogonal, which would then
imply that all four states $\omega_1,\ldots,\omega_4$ are pairwise
orthogonal. This would enforce that there exists a unique projective
measurement (a ``membrane'') that distinguishes all these four states
jointly. This membrane could feature in von Neumann's thought experiment (or other similar thermodynamical settings), yielding a unique notion of thermodynamic entropy.

On the other hand, in the gbit, the four pure states
$\omega_1,\ldots,\omega_4$ are \emph{not} jointly perfectly
distinguishable. Hence there is no canonical choice of ``membrane''
that could be used in the thought experiment to define a unique
natural notion of entropy for the gbit states. Entropy will be
``contextual'', depending on the choice of membrane resp.\ ensemble decomposition that is used in any given specific thermodynamical setting. Therefore, the implication \emph{``pairwise distinguishability$\Rightarrow$joint
  distinguishability''}, which is true for quantum theory, has thermodynamic relevance. This implication, if suitably interpreted, leads to the ``exclusivity principle''~\cite{Acin, LSW11,
    Cabello2012}, namely that the sum of the probabilities of pairwise exclusive propositions cannot exceed $1$ (in this case these propositions correspond to the outcomes of the jointly distinguishing measurement). This suggests that the exclusivity principle, which has so far been considered only in the realm of contextuality, may be thermodynamically relevant. This observation is also closely related to the notion of ``dimension mismatch'' described in~\cite{Brunner}, and to orthomodularity in quantum logic (see for example~\cite{CassinelliLahti}).

\section{A class of theories with consistent thermodynamic behavior}

\subsection{The two postulates}
\label{SubsecPostulates} 
In this section we introduce the two postulates that express key operational
concepts from thermodynamics. The first postulate is motivated by the
universality of thermodynamics and the distinction between microscopic
and macroscopic behaviour. At first we consider the universality of
thermodynamics, in the sense that thermodynamics is a very general
theory whose basic principles can be applied to many possible
implementations, as already noticed by N.\ Carnot \cite{Carnot}:
 
\textit{``In order to consider in the most general way the principle
  of the production of motion by heat, it must be considered
  independently of any mechanism or any particular agent. It is
  necessary to establish principles applicable not only to steam
  engines but to all imaginable heat-engines, whatever the working
  substance and whatever the method by which it is operated.''}

Recalling von Neumann's thought experiment in the case of quantum theory, we can think of thermodynamical protocols (which will ultimately also include heat engines) as acting on a given ensemble, defined as a probabilistic mixture of pure states chosen from a fixed basis. If we interpret ensembles with different choices of basis as different ``working substances'', then Carnot's principle should apply: protocols that can be implemented on one ensemble (say, ensemble 1) can also be implemented on the other (say, ensemble 2).\footnote{Here we only consider ensembles of identical Hilbert space dimensions. If the dimensions are different (say, $2$ versus $3$), then one can implement different sets of protocols on the ensembles (say, ones involving semipermeable membranes that distinguish $3$ alternatives in the latter, but not the former case). One could then still discuss a notion of universality in Carnot's spirit, by referring to the equivalence of, say, a state space with $N=3$ alternatives to a subspace of a state space with $N=2\times 2$ alternatives, but we will not discuss this further here.} In quantum theory, this universality is ensured by the existence of unitary transformations: all orthonormal
       bases can be translated into each other by a unitary and
       therefore reversible map. In this sense, the state of ensemble 1 can in principle be transferred to ensemble 2, then the thermodynamic protocol of ensemble 2 can be performed (if we have also transformed the projectors describing the membranes accordingly), and then one can transform back. Even if this cannot always be achieved in practice, the corresponding unitary symmetry of the quantum state space (considered as passive transformations between different descriptions) enforces the aforementioned universality.\footnote{In classical thermodynamics, the analog of a choice of basis is the labelling of the distinguishable configurations. Clearly, the availability of thermodynamic protocols does not change under relabelling.}
 
 This universality of implementation, as well as independence of the choice of labels and descriptions, should continue to hold in all generalized
       theories that we consider. An orthonormal basis from quantum
       theory is nothing else than a set of perfectly distinguishable
       pure states, i.e.\ an $n$-frame. Therefore, in our generalized theories, we expect that this universality of implementation is achieved by the existence of reversible transformations that, in analogy to unitary maps, transform any given $n$-frame into any other:\\
 
       \textbf{Postulate 1:} For each $n \in \mathbb N$, all sets of
       $n$ perfectly distinguishable pure states are equivalent. That
       is, if $\{\omega_1, \ldots, \omega_n\}$ and
       $\{\varphi_1,...,\varphi_n \}$ are two such sets, then there
       exists a reversible transformation $T$ with $T \omega_j =
       \varphi_j$ for all $j$.
 
       Furthermore, Postulate 1 expresses a physical property that is
       crucial for thermodynamics: that of \emph{microscopic
         reversibility}. Many characteristic properties of
       thermodynamics arise from limited experimental access to the
       microscopic degrees of freedom, which by themselves undergo
       reversible time evolution. This reversibility, for example,
       forbids evolving two microstates into
       one, which is at the heart of the non-decrease of entropy. If
       the experimenter had full access to the microscopic degrees of
       freedom, then he or she could convert any state of maximal
       knowledge to any other one as long as he or she preserved distinguishability. Postulate 1 formalizes this
       microscopic basis of thermodynamics by demanding the existence
       of ``enough'' distinguishability-preserving, microscopic
       transformations $T$, which can be understood as reversible time
       evolutions.

Postulate 1 has substantial information-theoretical justifications and consequences. The basic concepts of both thermodynamics and information processing are independent of the choice of implementation. For information processing this is formalized by the Turing machine which admits a multitude of physical realizations. Perfectly distinguishable pure states can be taken as bits, and Postulate 1 expresses that all bits (or their higher-dimensional analogues) are equivalent. It is for this reason that Postulate 1 was called \textit{generalized bit symmetry} in~\cite{MariusMaster}, and its restriction to pairs of distinguishable states was called \textit{bit symmetry} in~\cite{Skalarprodukt}. Starting with Landauer's principle, ``thermodynamics of computation''~\cite{Bennett} has become a fruitful paradigm that relates the two apparently disjoint fields. The two complementary interpretations of Postulate 1 are one instance of this.

Now we turn to our second postulate. We are looking for theories very similar to the thermodynamics we are used to;
  thus it is essential that we can adopt basic notions of standard
  thermodynamics  unchanged or with only very small alterations. Two such
  notions of great importance are (Shannon) entropy $S = -k_B
  \sum_j p_j \log p_j$ and majorization theory. In classical and
  quantum thermodynamics, these notions operate on the coefficients in 
  a decomposition of a state into perfectly distinguishable pure states (in quantum theory, the eigenvalues). In 
  order to not change thermodynamic theory too much, we would also like this
  to be possible in our more general state spaces. Thus, we demand that every state
  has a convex decomposition into perfectly distinguishable pure
  states.
  
Note that this was indeed one of our assumptions in von
  Neumann's thought experiment in Section~\ref{SecThoughtExp}. There,
  it allowed us to realize any state $\omega$ as a ``quasiclassical
  ensemble'', i.e.\ as an ensemble of states that behave like
  classical labels. This gives us a further justification of our
  second postulate: thermodynamic (thought) experiments require that
  states have an ensemble interpretation.  An unambiguous notion of
  ``counting of microstates'' demands that the ensembles consist of
  perfectly distinguishable, pure states.  Without this,  
obtaining a phenomenological thermodynamics for which the theory is 
the underlying microscopic theory seems problematic.
Thus, our second postulate is\\

\textbf{Postulate 2}: Every state $\omega \in \Omega_A$ has a convex decomposition $\omega= \sum_j p_j \omega_j$ into perfectly distinguishable pure states $\omega_j$.\\

It is tempting to interpret the two postulates as reflecting the \emph{microscopic} and the \emph{macroscopic} aspects of thermodynamics, respectively: while Postulate 1 describes microscopic reversibility of the pure states that may describe single particles in thermodynamics, Postulate 2 ensures that mixed states can be interpreted macroscopically as descriptions of quasiclassical ensembles, composed of a large number of particles that are separately in unknown but distinguishable microstates.

We will not introduce any further postulates. In particular, we will
not make any assumptions on the \emph{composition} of systems. All our
results are therefore independent from notions like \emph{tomographic
  locality}~\cite{Hardy} (which is arguably dispensable in many important
situations~\cite{BarnumCategories}) or \emph{purification}~\cite{ChiribellaPurification} (which is a
rather strong assumption); we do not assume either of the two.

\subsection{Some consequences of Postulates 1 and 2}
\label{SubsecConsequences}
Postulates 1 and 2 have been analyzed in~\cite{Postulates}, but in a 
different context: instead of  investigating thermodynamics, the goal in~\cite{Postulates} was to
  obtain a reconstruction of quantum theory, by supplementing
  Postulates 1 and 2 with further postulates. Some of the insights
  from~\cite{Postulates} will be important here, and are 
therefore briefly
  discussed below. Starting with Subsection~\ref{SubsecObservables},
  we will also obtain new results that are interesting in a
  thermodynamic context.

 In contrast to Hilbert space,
there is no apriori notion of inner product for GPTs. However, as
shown in~\cite{Skalarprodukt}, we get a natural inner product
$\langle\cdot,\cdot\rangle$ as a consequence of Postulates 1 and 2: it
satisfies $\braket{T\omega,T\varphi} = \braket{\omega,\varphi}$ for
all reversible transformations $T$, and $0 \le \braket{\omega,\varphi}
\le 1$ for all states $\omega,\varphi \in \Omega_A$. Furthermore,
$\braket{\omega,\omega} = 1$ for all pure $\omega\in \Omega_A$ and
$\braket{\varphi,\varphi} < 1$ for all mixed $\varphi \in \Omega_A$,
and $\braket{\omega,\varphi} = 0$ if $\omega,\varphi \in \Omega_A$ are
perfectly distinguishable. Thus, all perfectly distinguishable states
are orthogonal, as in quantum theory.

 Moreover, the cone of unnormalized states becomes \emph{self-dual}
with this choice of inner product. In particular, every effect $e$ can
be taken as a vector in $A_+$, such that $e(\omega)=\langle
e,\omega\rangle$.
 In standard quantum theory, this is the Hilbert
Schmidt inner product on the real vector space of Hermitian matrices:
$\langle X,Y\rangle=\tr(XY)$ for $X=X^\dagger$, $Y=Y^\dagger$.

Quantum theory has more structure: the convex set of density matrices
$\Omega_A$ has faces\footnote{A \emph{face} of a convex set $C$ is a convex subset $F\subseteq C$ with the property that $\lambda x+(1-\lambda)y\in F$ with $0<\lambda<1$ and $x,y\in C$ implies $x,y\in F$~\cite{Webster}. We say that $F$ is \emph{generated by} $\omega_1,\ldots,\omega_n$ if $F$ is the smallest face that contains $\omega_1,\ldots,\omega_n$.}, and these faces are in one-to-one correspondence
to subspaces of Hilbert space (namely, a face $F$ contains all density
matrices that have support on the corresponding Hilbert subspace). To
every face $F$, we can associate a number $|F|$ which is the dimension
of the corresponding Hilbert subspace, and $F\subsetneq G$ implies
$|F|<|G|$. Every face $F$ is generated by $|F|$ pure and perfectly
distinguishable states in $F$ (an $|F|$-frame in $F$), and every (smaller) frame that is a subset of $F$ can be completed, or extended, to a frame which has
$|F|$ elements and thus generates $F$.
 
In all theories that
satisfy Postulates 1 and 2, all these properties hold in complete
analogy~\cite{Postulates}. However, since faces do not any more correspond to Hilbert
spaces, the numbers $|F|$ do not have an interpretation as the dimension of a subspace. Instead, we call $|F|$ the \emph{rank} of $F$.  
If von Neumann's thought experiment is supposed to make sense for
these theories, we need a way to formalize the working of a
semipermeable membrane, which in quantum theory is done via projective measurements.
 
Since we are dealing with unnormalized states, the corresponding analog in GPTs
will be formulated in terms of the set of unnormalized states
$A_+$. As one can see in the case of the gbit, it is not automatic
that we have any notion of ``projective measurements'' for any given state space. However, Postulates 1 and 2 turn out to ensure
that projective measurements exist. For any face $F$ of $A_+$ (the
non-negative multiples of the corresponding face of $\Omega_A$),
consider the orthogonal projector $P_F$ onto the span of $F$. One can
show that $P_F$ is \emph{positive}, i.e.\ maps (unnormalized) states
to (unnormalized) states~\cite{Postulates}. Moreover, $P_F$ does not disturb the states
in the face $F$.
 
 Thus, to a given set of mutually orthogonal
faces $F_1,\ldots, F_m$ such that $|F_1|+\ldots+|F_m|=N_A$, we can
associate an instrument with transformations $T_i:=P_{F_i}$, which
describes a projective measurement, as in a semipermeable
membrane. Transformation $T_i$ leaves the states in face $F_i$
unperturbed, but fully blocks out states in the other faces, i.e. $T_i
\omega=0$ for $\omega\in F_j$, $i\neq j$. In standard quantum theory,
these transformations are $P_{F_i}\rho= \pi_i \rho \pi_i$, where
$\pi_i$ is the orthogonal Hilbert space projector onto the $i$-th
Hilbert subspace. The rank condition becomes
$\tr(\pi_1)+\ldots+\tr(\pi_m)= N_A$ (the total Hilbert space
dimension), and mutual orthogonality is
$\pi_i\pi_j=\delta_{ij}\pi_i$. We will show in Subsection
\ref{SubsecObservables} that the mutually orthogonal faces
replace the eigenspaces from quantum theory and that the projective
measurement described here can be interpreted as measuring an
observable.\\

 The Hilbert space projector $\pi_i$ therefore
also has an interpretation as an \emph{effect} in standard quantum
theory: it yields the probability of outcome $i$ in the projective
measurement on a state $\rho$, namely $\tr(\pi_i\rho)$. The analogous effect in a GPT that satisfies Postulates 1
and 2, corresponding to a face $F$, is
\[
   u_F:= P_F u_A
\]
(identifying the effect $u_A$ with a vector via the inner product). The effect $u_F$ is sometimes called the ``projective unit'' of $F$. In quantum theory, we can write $\pi_i = \sum_j |\psi_j\rangle\langle\psi_j|$, where the $|\psi_j\rangle$ are an orthonormal basis of the corresponding Hilbert subspace. The same turns out to be true in our GPTs: we have
\begin{equation}
   u_F=\sum_{j=1}^{|F|} \omega_i,
   \label{eqDecompUnit}
\end{equation}
where $\omega_1,\ldots,\omega_{|F|}$ is any frame that generates $F$. Therefore, the probability to obtain outcome $i$ in the projective measurement above on state $\omega$ is $\langle u_{F_i},\omega\rangle=\langle u_A,P_{F_i} \omega\rangle$.

\subsection{State spaces satisfying Postulates 1 and 2}
It is easy to see that both quantum and classical state spaces satisfy Postulates 1 and 2. By a ``classical state space'', we mean a state space that consists of discrete probability distributions. Concretely, for any number $N\in\mathbb{N}$ of mutually exclusive alternatives, consider the state space
\[
   \Omega:=\{(p_1,\ldots,p_N)\,\,|\,\,p_i\geq 0,\enspace \sum_i p_i=1\}.
\]
Any pure state is given by a deterministic probability vector, i.e.\ $\omega_i=(0,\ldots,0,1,0,\ldots,0)$ (where $1$ is on the $i$-th place). If we have two equally sized sets of such vectors (as in Postulate 1), then there is always a permutation that maps one set to the other. In fact, the reversible transformations correspond to the permutations of the entries. Postulate 2 is then simply the statement that
\[
   (p_1,\ldots,p_N)=p_1 \omega_1+p_2\omega_2+\ldots+p_N\omega_N.
\]
Which state spaces are there, in addition to standard complex quantum theory and classical probability theory, that satisfy Postulates 1 and 2? We think
that this question is very difficult to answer. Thus, we formulate the following\\

\noindent
\textbf{Open Problem 1.} \textit{
	Classify all state spaces that satisfy Postulates 1 and 2.}\\

From the results in~\cite{Postulates}, we know which state spaces
satisfy Postulates 1 and 2 \emph{and} one additional property: the
absence of third-order interference. The notion of higher-order
interference has been introduced by Sorkin~\cite{Sorkin}, and has
since been the subject of intense
theoretical~\cite{UdudecBarnumEmerson,Henson,LeeSelby1} and
experimental~\cite{Sinha,Sinha2009,Soellner,Kauten,Hickmann,Park}
interest.

For the main idea, think of three mutually exclusive alternatives in quantum theory (such as three slits in a triple-slit experiment), described by orthogonal projectors $\pi_1,\pi_2,\pi_3$. The event that alternative 1 \emph{or} alternative 2 takes place is described by the projector $\pi_{12}=\pi_1+\pi_2$; similarly, we have $\pi_{13}$, $\pi_{23}$ and $\pi_{123}$. Their actions on density matrices are described by superoperators
\[
   \rho\mapsto P_{12}(\rho):=\pi_{12} \rho \pi_{12}
\]
(and similarly for the other projectors). As a consequence, we obtain that $P_{12}\neq P_1+P_2$, which expresses the phenomenon of \emph{interference}. However, it is easy to check that
\begin{equation}
	P_{123}=P_{12}+P_{13}+P_{23}-P_1-P_2-P_3,
	\label{eqNoThirdOrder}
\end{equation}
which means that interference over three alternatives can be reduced to contributions from interferences of pairs of alternatives. Similar identities hold for an arbitrary number $n\geq 4$ of alternatives: \emph{quantum theory admits only pairwise interference}, and no ``third-order interference'' which would be characterized by a violation of this equality.

In the context of Postulates 1 and 2, we have an analogous notion of orthogonal projectors, and thus we can consider~(\ref{eqNoThirdOrder}) and its generalization to $n\geq 4$ alternatives on a state space with $N\geq n$ perfectly distinguishable states. Postulating this ``absence of third-order interference'' in addition to Postulates 1 and 2 gives us the following:

\begin{thm}[Lemma 33 in~\cite{Postulates}]
\label{LemList}
	The possible state spaces which satisfy Postulates 1 and 2 \emph{and} which do not admit third-order interference, in addition to classical state spaces, are the following. First, for $N\geq 4$ perfectly distinguishable states, there are only three possibilities:
	\begin{itemize}
		\item Standard complex quantum theory.
		\item Quantum theory over the real numbers. That is, only real entries are allowed in the $N\times N$ density matrices.
		\item Quantum theory over the quaternions. The state spaces are the self-adjoint $N\times N$ quaternionic matrices of unit trace.
	\end{itemize}
	For $N=3$ perfectly distinguishable states, all of the above \emph{and} one exceptional solution are possible, namely quantum theory over the octonions (but only for the case of $3\times 3$ unit trace density matrices).
	
	For $N=2$ (the ``bit'' case), we have the $d$-dimensional Bloch ball state spaces $\Omega_d := \{(1,r)^T|r\in \mathbb R^d, \|r\|\le 1\}$ with $d\geq 2$. They are analogous to the standard Bloch ball $\Omega_3$ of quantum theory, with very similar descriptions of effects etc. Their group of reversible transformations may either be ${\rm SO}(d)$ (which corresponds to ${\rm PU}(2)$ for $d=3$), or some subgroup of ${\rm O}(d)$ which is transitive on the sphere (such as ${\rm SU}(2)$ for $d=4$).
\end{thm}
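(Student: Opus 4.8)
The plan is to identify the state spaces satisfying Postulates 1 and 2 with the cones of Euclidean (formally real) Jordan algebras, and then to use the absence of third-order interference to pin down exactly which such algebras occur. The starting point is the structure already available from Postulates 1 and 2: the cone $A_+$ is self-dual, every face $F$ carries a rank $|F|$ and a projective unit $u_F = P_F u_A$, and the orthogonal projectors $P_F$ are positive and induce genuine projective measurements. First I would use these projectors, together with the reversible transformation group furnished by Postulate 1, to equip the ambient space $A$ with a commutative bilinear product. The natural candidate is a Jordan product defined so that the rank-one projectors (the pure states, viewed as effects via self-duality) are the primitive idempotents and so that $P_F$ implements multiplication by the idempotent $u_F$; self-duality guarantees the associated trace form $\langle x, y\rangle$ is symmetric and positive definite, which is the defining feature of a Euclidean Jordan algebra.

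Second, and this is where the no-third-order-interference hypothesis enters, I would show that equation~(\ref{eqNoThirdOrder}) and its $n\geq 4$ analogues force the candidate product to satisfy the Jordan identity, so that $A$ becomes a Euclidean Jordan algebra and $A_+$ its cone of squares (equivalently, by the Koecher--Vinberg theorem, that $A_+$ is a symmetric cone). The point is that Postulates 1 and 2 alone only guarantee a self-dual cone with a rich symmetry group; they do not by themselves force an algebraic product to close. The interference identities provide precisely the compatibility of the lattice of faces needed for power-associativity of the filters $P_F$, hence for the Jordan product to be well defined. I expect this step---promoting an order-theoretic and convex structure with symmetry to a bona fide algebraic product---to be the main obstacle, since it is the one place where one must manufacture new algebraic data rather than read it off from the given convex geometry.

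Third, once $A$ is known to be a Euclidean Jordan algebra, I would invoke the Jordan--von Neumann--Wigner classification: every such algebra is a direct sum of simple ones, and the simple Euclidean Jordan algebras are exactly the self-adjoint $n\times n$ matrix algebras over $\R$, $\C$ and $\mathbb{H}$, the spin factors, and the single exceptional Albert algebra of $3\times 3$ octonionic self-adjoint matrices. A direct-sum decomposition with more than one block corresponds to a classical mixture of subsystems, yielding the classical state spaces listed separately in the theorem; transitivity of the reversible group on $N$-frames (Postulate 1) excludes the mixed-rank reducible cases among the non-classical solutions. Reading off the rank gives the case analysis: the Albert algebra has rank exactly $3$ and the spin factors have rank $2$, so for $N\geq 4$ only the matrix algebras over $\R$, $\C$, $\mathbb{H}$ survive; for $N=3$ the octonionic algebra is additionally allowed; and for $N=2$ the admissible algebras are precisely the spin factors, whose cones are the Bloch balls $\Omega_d$ for all $d\geq 2$ (the $2\times 2$ matrix algebras over $\R$, $\C$, $\mathbb{H}$ appearing as the special cases $d=2,3,5$).

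Finally, I would settle the $N=2$ symmetry groups. Here the reversible transformations must act transitively on the sphere $S^{d-1}$ of pure states, as demanded by Postulate 1 for $2$-frames, and preserve the inner product, hence form a closed subgroup of $\mathrm{O}(d)$ acting transitively on $S^{d-1}$. Such subgroups are classified, giving $\mathrm{SO}(d)$ together with the sporadic transitive actions on spheres---among them $\mathrm{SU}(2)$ on $S^3$ for $d=4$---and one checks that each is realizable as the reversible group of a Bloch-ball state space. Collecting the cases completes the classification.
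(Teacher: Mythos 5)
First, a point of reference: the paper does not prove Theorem~\ref{LemList} at all --- it is imported verbatim as Lemma 33 of \cite{Postulates} (as the theorem's bracketed attribution indicates), so the comparison is necessarily with the proof given there. Your skeleton does match that proof's strategy: Postulates 1 and 2 supply self-duality, projective units and positive projections $P_F$; absence of third-order interference is then used to upgrade this convex/order-theoretic structure to that of a Euclidean (formally real) Jordan algebra; the Jordan--von Neumann--Wigner classification plus frame-transitivity (which excludes non-classical direct sums) yields the list; and the $N=2$ groups come from the classification of compact groups acting transitively on spheres. Your rank bookkeeping (the Albert algebra has rank exactly 3, spin factors rank 2, so only the $\R$, $\C$, $\mathbb{H}$ matrix algebras survive for $N\geq 4$, and the $2\times 2$ matrix algebras over $\R,\C,\mathbb{H}$ reappear as the balls $d=2,3,5$) is also correct.

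The genuine gap is in your second step, which you rightly identify as the crux but then only assert. The known mechanism is not ``power-associativity of the filters'': it is that absence of third-order interference is equivalent to the atomic covering property, equivalently to the statement that every projection $P_F$ maps pure states to multiples of pure states --- precisely the equivalences $(i)\Leftrightarrow(iv)\Leftrightarrow(v)$ of Theorem~\ref{ThmAllEquiv} of this paper, proved in \cite{Postulates}. Once one has a projective, self-dual state space with symmetric transition probabilities and the covering property, the heavy lifting is done by the Alfsen--Shultz characterization of JB-algebra state spaces (\cite{ASBook}, Chapter 9), rather than by constructing the Jordan product by hand and verifying the Jordan identity directly. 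Two further technical slips: (a) $P_F$ is not ``multiplication by the idempotent $u_F$'' --- in Jordan language it is the Peirce (quadratic) projection $U_{u_F}$, whereas the multiplication operator $L_{u_F}$ generically has $\tfrac{1}{2}$ in its spectrum; (b) your Koecher--Vinberg alternative requires homogeneity of the cone $A_+$, which does not follow from Postulates 1 and 2 (frame-transitivity of the reversible group is far weaker than transitivity of cone automorphisms on the interior of $A_+$), so that route would demand an additional, substantial argument. With the middle step replaced by the covering-property/Alfsen--Shultz argument, your outline becomes the proof of \cite{Postulates}.
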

Mathematically, these examples correspond to the state spaces of the finite-dimensional irreducible formally real Jordan algebras~\cite{ASBook,Postulates}.
We do not know whether there are theories that satisfy Postulates 1
and 2 but admit higher-order interference and therefore do not
appear on this list. In Theorem~\ref{ThmAllEquiv}, we will show that the question whether a theory 
has third-order interference is related to the properties of its R\'enyi entropies.

\subsection{Observables and diagonalization}
\label{SubsecObservables}
A central part of physics are observables and how they can be
measured. In standard quantum theory, we can introduce observables in
two different ways, which both equivalently lead to the prescription
that \emph{observables are described by Hermitian operators/matrices}. 

First, in finite dimensions, we can characterize observables as \emph{those objects that
 linearly assign real expectation values to states}. In the case of quantum theory it follows
that observables are represented by matrices $X$, and Hermiticity
$X=X^\dagger$ implies that expectation values $\tr(\rho X)$ are always
real. Linearity is enforced by the statistical interpretation of states, for the same reason that effects in GPTs are linear.

Second, we can introduce observables by saying that there is a projective measurement $\pi_1,\ldots,\pi_n$ that measures this observable, and which has outcomes $x_1,\ldots,x_n\in\R$. This leads to the Hermitian operator $X=\sum_{i=1}^n x_i \pi_i$. Since every Hermitian operator can be diagonalized, these two definitions are equivalent.

Our two postulates provide the structure to introduce observables in a completely analogous way. First, using the inner product, we can define observables as linear maps of the form
\[
   \omega\mapsto \langle x,\omega\rangle
\]
and thus identify them with elements $x\in A$ of the vector space that carries the states (as in quantum theory, where this vector space is the space of Hermitian matrices). As noticed in~\cite{UdudecDoktor}, every such vector has a representation of the form
\begin{equation}
   x=\sum_i x_i u_i,
   \label{eqDiag}
\end{equation}
where the $u_i$ are projective units corresponding to mutually orthogonal faces $F_i$, $x_i\in\R$, and $x_i\neq x_j$ for $i\neq j$. The analogy with quantum theory goes even further: due to~(\ref{eqDecompUnit}), we have
$x=\sum_i x_i \sum_j \omega_i^{(j)}$, whenever $\omega_i^{(1)},\ldots,\omega_i^{(|F_i|)}$ is a frame on $F_i$. This corresponds to the identity $X=\sum_i x_i \sum_j |\psi_i^{(j)}\rangle\langle\psi_i^{(j)}|$ in standard quantum theory. In analogy to quantum theory we will call the $F_i$ eigenfaces and the $x_i$ eigenvalues.
To further justify this terminology, note that the $x_i$ are eigenvalues of the map $\sum_i x_i P_i$, where $P_i$ are the orthogonal projectors onto the spans of the faces $F_i$.

\begin{thm}
\label{theorem:Observables}
If Postulates 1 and 2 hold, then every element $x \in A$ has a representation of the form $x=\sum_{j=1}^n x_j u_j$ where $x_j \in \mathbb R$ are pairwise different and the $u_j$ are the projective units of pairwise orthogonal faces $F_j$ such that $\sum_j u_j = u_A$. This decomposition $x=\sum_{j=1}^n x_j u_j$ is unique up to relabelling. In analogy to quantum theory, we will call the $x_j$ eigenvalues and the $F_j$ eigenfaces.

Furthermore, for every real function $f$ with suitable domain of definition, we can define
\begin{equation}
   f(x):=\sum_{j=1}^n f(x_j) u_j
   \label{eqSpectralCalculus}
\end{equation}
as in spectral calculus.

If $P_j$ is the orthogonal projector onto the span of $F_j$, then $(P_1,...,P_n)$ is a well-defined instrument with induced measurement $(u_1,...,u_n)$ which leaves the elements of $\mathrm{span}(F_j)$ invariant:
\[
	P_j(\omega) = \delta_{jk} \cdot \omega \quad \mbox{for all } \omega \in F_k.
\]
In analogy to quantum theory, we will call this  instrument the projective measurement of the observable $x$.
\end{thm}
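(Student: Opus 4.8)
The plan is to establish the three assertions in turn, drawing on the facts already assembled above: self-duality of the inner product, orthogonality of perfectly distinguishable states, additivity of projective units over orthogonal faces (a consequence of~(\ref{eqDecompUnit})), positivity and non-disturbance of the projectors $P_F$, and the existence of a decomposition $x=\sum_i x_i u_i$ into orthogonal eigenfaces noted in~\cite{UdudecDoktor}. For existence I would start from~(\ref{eqDiag}), whose faces $F_i$ are mutually orthogonal with pairwise distinct real coefficients; the only missing ingredient is $\sum_i u_i=u_A$, i.e.\ that the $F_i$ exhaust the space. I would take $G:=\bigvee_i F_i$ together with its orthocomplementary face $G^\perp$, whose existence and subspace-like behaviour come from~\cite{Postulates}. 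Since projective units add over orthogonal faces, $u_G+u_{G^\perp}=u_A$. If $0$ is not among the $x_i$, I append the term $0\cdot u_{G^\perp}$; if some $x_{i_0}=0$, I instead enlarge that eigenface to $F_{i_0}\vee G^\perp$, whose projective unit is $u_{i_0}+u_{G^\perp}$. Either way the coefficients stay pairwise distinct and the projective units sum to $u_A$.

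The substantive step is uniqueness, which I would prove by an intrinsic, variational characterization of the eigendata that refers to no chosen decomposition. The key observation is that for normalized $\omega$ one has $\langle x,\omega\rangle=\sum_j x_j\langle u_j,\omega\rangle$, a convex combination of the $x_j$, because $\langle u_j,\omega\rangle=\langle u_A,P_j\omega\rangle\ge 0$ and $\sum_j\langle u_j,\omega\rangle=\langle u_A,\omega\rangle=1$ (here $\sum_j u_j=u_A$ is used). Moreover this coefficient equals $1$ exactly when $\omega\in F_j$ and $0$ when $\omega$ lies in an orthogonal eigenface. Hence $\max_{\omega\in\Omega_A}\langle x,\omega\rangle$ is the largest eigenvalue and the set of maximizers is precisely the top eigenface, both determined by $x$ and the inner product alone. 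Ordering $x_{(1)}>\cdots>x_{(n)}$, I would peel off eigenvalues inductively: having identified $F_{(1)},\ldots,F_{(i)}$, the next eigenvalue is the maximum of $\langle x,\cdot\rangle$ over states orthogonal to these faces, with maximizer $F_{(i+1)}$. Since every eigenvalue and eigenface is thereby singled out intrinsically, any two admissible decompositions coincide up to relabelling, and $f(x):=\sum_j f(x_j)u_j$ is consequently well-defined.

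The instrument claim is then mostly bookkeeping. Each $P_j$ is positive and leaves $F_j$ undisturbed while annihilating the orthogonal eigenfaces, which is exactly $P_j(\omega)=\delta_{jk}\omega$ for $\omega\in F_k$. Normalization follows from self-adjointness of the orthogonal projectors: as functionals $u_A\circ P_j=P_j u_A=u_{F_j}=u_j$, so $\sum_j u_A\circ P_j=\sum_j u_j=u_A$, showing $(P_1,\ldots,P_n)$ is an instrument inducing the measurement $(u_1,\ldots,u_n)$.

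I expect the main obstacle to be the uniqueness argument, specifically justifying that the orthocomplement of the already-identified eigenfaces is again a face carrying a sub-state-space satisfying the postulates, on which the induction may be run and on which $x$ restricts to $\sum_{k>i}x_{(k)}u_{(k)}$, so that the variational characterization of the next eigenvalue is legitimate. This rests on the subspace-like behaviour of faces and their orthocomplements established in~\cite{Postulates}, which I would invoke rather than re-derive. By comparison, the existence bookkeeping (absorbing $G^\perp$ with or without a pre-existing zero eigenvalue) and the instrument normalization are routine.
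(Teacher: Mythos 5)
Your proposal is correct, but it takes a genuinely different route from the paper's proof, for both existence and uniqueness. For existence, the paper does not start from~(\ref{eqDiag}): it starts one level lower, from Lemma 5.46 of~\cite{UdudecDoktor}, which gives $x=\sum_j x_j'\omega_j$ for some frame $\{\omega_j\}$; it then extends that frame to a \emph{maximal} one by appending terms with coefficient $0$ and groups equal coefficients, so that completeness $\sum_j u_j=u_A$ holds by construction (a maximal frame sums to $u_A$), and mutual orthogonality of the resulting faces is proved inline from $\braket{\nu,\nu}=1$ for pure $\nu$ together with nonnegativity of the inner product. Your route instead leans on~(\ref{eqDiag}) as given plus the orthocomplementation machinery from~\cite{Postulates} (existence of $G^\perp$, additivity of projective units over orthogonal joins) --- legitimate, but a heavier import, whereas the paper's construction sidesteps orthocomplements entirely. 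For uniqueness, the paper peels eigenvalues from the \emph{bottom}: ordering $a_1<a_2<\cdots$, it shows $a_1=b_1$ by evaluating $x$ on frame elements of the two decompositions (the same convex-combination inequality you exploit), identifies $F_1^{(a)}=F_1^{(b)}$ via Proposition 5.29 of~\cite{UdudecDoktor} (namely $u_F(\omega)=1\Leftrightarrow\omega\in F$ for normalized $\omega$), and then iterates via the shift trick $x\mapsto x+L\,u_1^{(a)}$ with $L$ large enough to move the identified eigenvalue to the top. Your variational characterization --- largest eigenvalue as $\max_{\omega\in\Omega_A}\braket{x,\omega}$, top eigenface as the maximizer set, then recursion over states orthogonal to the identified faces --- is top-down and more intrinsic: it exhibits the eigendata as functions of $x$ alone, in Rayleigh-quotient style, which makes uniqueness and the well-definedness of $f(x)$ transparent. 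Note also that your self-identified obstacle is milder than you fear: you never need the orthocomplement to be a sub-state-space satisfying the postulates. It suffices that for any normalized $\omega$ orthogonal to the already-identified faces one has $\braket{x,\omega}=\sum_{k>i}x_{(k)}\braket{u_{(k)},\omega}$ with $\sum_{k>i}\braket{u_{(k)},\omega}=1$ (immediate from $\sum_j u_j=u_A$ and $\braket{u_{(k)},\omega}=0$ for $k\le i$), plus the same Proposition 5.29 to pin down the maximizer set; the recursion then runs without any appeal to the face structure of the orthocomplement. The instrument step is essentially identical in both proofs ($u_A\circ P_j=u_j$ by self-adjointness of $P_j$ and $u_F=P_Fu_A$, positivity of $P_j$ from~\cite{Postulates}, and annihilation of orthogonal faces from orthogonality of their spans).
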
	

We will give a proof in the appendix.\footnote{This can also also obtained by
  combining the fact that Postulates 1 and 2 imply the state space is
  projective (first part of Theorem 17 in \cite{Postulates}) and
  self-dual (Proposition 3 in \cite{Postulates}) with results such as
  Theorem 8.64 in \cite{ASBook}.}
Eq.~(\ref{eqSpectralCalculus}) allows us to define a notion of entropy, in full analogy to quantum mechanics.

\begin{definition}[Spectral entropy]
\label{DefSpectralEntropy}
If $A$ is a state space that satisfies Postulates 1 and 2, we define the \emph{spectral entropy} for any state $\omega\in\Omega_A$ as
\[
   S(\omega):=-\sum_i p_i \log p_i,
\]
where $\omega=\sum_i p_i \omega_i$ is any convex decomposition of $\omega$ into pure and perfectly distinguishable states $\omega_i$, and $0\log 0:=0$.
\end{definition}
Theorem~\ref{theorem:Observables} tells us that this definition is independent of the choice of decomposition: it is easy to check that
\[
   S(\omega)=-\langle \omega,\log\omega\rangle,
\]
where $\log\omega$ is understood in the sense of spectral calculus as in~(\ref{eqSpectralCalculus}). The right-hand side is manifestly independent of the decomposition. It can also be written $S(\omega)=u_A(\eta(\omega))$, where $\eta(x)=-x\log x$ for $x>0$ and $\eta(0)=0$. In particular,
\begin{equation}
\omega\mbox{ is a pure state }\Leftrightarrow S(\omega)=0.	
\label{eqPureZero}
\end{equation}
To see this, note that any pure state $\omega_1=\omega$ can be
extended to a set of perfectly distinguishable pure states
$\omega_1, \omega_2,\ldots,\omega_{N_A}$ such that $\omega=1\cdot
\omega_1+0\cdot\omega_2+\ldots+0\cdot\omega_{N_A}$. Conversely, if
$S(\omega)=0$, then any decomposition of $\omega$ must have
coefficients $(1,0,\ldots,0)$.

\subsection{Thermodynamics in the context of Postulates 1 and 2}
If a state space satisfies Postulates 1 and 2, then it
  also satisfies all the assumptions that we have made in von
  Neumann's thought experiment. It is easy to check all items in
  Assumptions~\ref{AssThoughtExp}: (a) is simply Postulate 2, and
  (c) is a consequence of Postulate 1. As we have seen in the
  previous section, our two postulates imply that we have orthogonal
  projectors sharing important 
properties with those of standard quantum theory. If we
  make the physical assumption that we can actually implement them by
  means of semipermeable membranes (as in quantum theory), we obtain
  (b). Item (e) is the same as~(\ref{eqPureZero}). Note that
  assumption (d) is not a mathematical assumption about the state space, but a physical assumption
  about thermodynamic entropy. This shows part of the following (the full proof will be given in the appendix):
\begin{observation}
\label{ObsEntropy}
	Von Neumann's thought experiment, as explained in Section~\ref{SecThoughtExp}, can be run for every state space that satisfies Postulates 1 and 2. The notion of thermodynamic entropy $H$ that one obtains from that thought experiment turns out to equal spectral entropy $S$ as given in Definition~\ref{DefSpectralEntropy},
	\[
	   H(\omega)=S(\omega)\qquad\mbox{for all states }\omega.
	\]
	This is consistent with Assumptions~\ref{AssThoughtExp}. Furthermore, it is also consistent with Petz' version of the thought experiment, because spectral entropy satisfies
	\begin{equation}
	   S(\omega)=\sum_j p_j S(\omega_j)-\sum_j p_j \log p_j
	   \label{eqSDecomp}
	\end{equation}
	for every convex decomposition $\omega=\sum_j p_j\omega_j$ of $\omega$ into perfectly distinguishable, not necessarily pure states $\omega_j$.
\end{observation}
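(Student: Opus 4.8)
The plan is to establish the three assertions in turn, with essentially all the work concentrated in the grouping identity~(\ref{eqSDecomp}). For the claim that the protocol of Section~\ref{SecThoughtExp} runs, I would simply collect the verifications of Assumptions~\ref{AssThoughtExp} already sketched above: Postulate 2 supplies~(a), Postulate 1 supplies~(c), equation~(\ref{eqPureZero}) supplies~(e), and the orthogonal projectors $P_{F_i}$ built in Subsection~\ref{SubsecConsequences}—together with the physical hypothesis that they are realizable as semipermeable membranes—supply~(b), while~(d) is taken as a physical postulate on $H$. With every ingredient of the protocol thereby available, the thought experiment yields~(\ref{eqVonNeumannEntropy}).

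For the identification $H=S$, I would note that~(\ref{eqVonNeumannEntropy}) computes $H(\omega)=-k_B\sum_j p_j\log p_j$ for a decomposition $\omega=\sum_j p_j\omega_j$ into perfectly distinguishable \emph{pure} states. Working in units with $k_B=1$, this is exactly the defining expression of the spectral entropy of Definition~\ref{DefSpectralEntropy}. The one thing to check is that the protocol's output is insensitive to the choice of such a decomposition; but this is precisely the content of Theorem~\ref{theorem:Observables}, which fixes the eigenvalue data of $\omega$—and hence the multiset $\{p_j\}$—uniquely. Therefore $H(\omega)=S(\omega)$ for every state.

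The substantive step is~(\ref{eqSDecomp}). Let $\omega=\sum_j p_j\omega_j$ with the $\omega_j$ perfectly distinguishable but possibly mixed, lying in mutually orthogonal faces $F_j$. Diagonalizing each piece via Postulate 2, write $\omega_j=\sum_k q_{jk}\,\omega_{jk}$ with the $\omega_{jk}$ pure and perfectly distinguishable inside $F_j$, so that $\omega=\sum_{j,k} p_j q_{jk}\,\omega_{jk}$. The key lemma is that the full family $\{\omega_{jk}\}_{j,k}$ is itself a set of perfectly distinguishable pure states: within a fixed face this is true by construction, and across distinct faces it follows from orthogonality of the $F_j$—the instrument $(P_{F_1},\ldots,P_{F_m})$ sorts systems by face without disturbance, and composing it with the in-face distinguishing measurements produces a single instrument with $e_{jk}(\omega_{j'k'})=\delta_{jj'}\delta_{kk'}$. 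Granting this, $\{p_j q_{jk}\}$ is a valid spectral decomposition of $\omega$, so by the well-definedness from Theorem~\ref{theorem:Observables},
\[
   S(\omega)=-\sum_{j,k} p_j q_{jk}\log(p_j q_{jk}).
\]
Splitting the logarithm and using $\sum_k q_{jk}=1$ yields $-\sum_j p_j\log p_j$ from the $\log p_j$ part and $\sum_j p_j\bigl(-\sum_k q_{jk}\log q_{jk}\bigr)=\sum_j p_j S(\omega_j)$ from the $\log q_{jk}$ part, which is exactly~(\ref{eqSDecomp}).

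I expect the main obstacle to be precisely this key lemma—verifying that pure states drawn from mutually orthogonal faces remain \emph{jointly} perfectly distinguishable. This is the step that genuinely uses the orthogonal-face structure, namely that the ranks obey $|F_1|+\ldots+|F_m|=|G|$ for the face $G$ they generate and that the projectors $P_{F_j}$ are mutually compatible; I would import this machinery from~\cite{Postulates} to assemble the joint distinguishing instrument. Everything past that point reduces to the classical Shannon chain rule and is routine.
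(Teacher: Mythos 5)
The genuine gap is in your handling of Assumption~\ref{AssThoughtExp}(d). You write that (d) ``is taken as a physical postulate on $H$'', but that does not discharge the consistency claim. The thought experiment pins down $H$ only on states admitting ensemble decompositions with rational weights $p_j=N_j/N$; continuity is exactly what must be invoked to extend $H$ to all states. So to conclude $H(\omega)=S(\omega)$ \emph{for all} states, and for the identification to be consistent with (d) at all, you must verify that the candidate function $S$ is itself continuous: a continuous $H$ that agrees with $S$ on the dense set of rational-weight states can only equal $S$ everywhere if $S$ is continuous too. This verification is precisely the one thing the paper's proof adds, beyond the main-text checks of (a), (b), (c), (e) that you repeat: it invokes Theorem~\ref{TheRenyi} to identify $S$ with the measurement entropy $\widehat S$, whose continuity is immediate from its definition (and noted in~\cite{BarnumEntropy}). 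Nothing in your Theorem~\ref{theorem:Observables}-based well-definedness argument supplies continuity, so your second paragraph's conclusion ``$H(\omega)=S(\omega)$ for every state'' is not yet justified.

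By contrast, your proof of~(\ref{eqSDecomp}) is correct and is essentially the paper's: refine each $\omega_j$ inside its face into a classical decomposition, show the combined family $\{\omega_{jk}\}$ is a single family of perfectly distinguishable pure states, invoke uniqueness of the spectral coefficients, and split the logarithm. The only real difference is how joint distinguishability is justified. The paper argues via orthogonality: perfectly distinguishable states lie in orthogonal faces, hence all cross inner products $\braket{\omega_{jk},\omega_{j'k'}}$ vanish for $(j,k)\neq(j',k')$, and mutually orthogonal pure states are jointly distinguishable (machinery from~\cite{Postulates}); the grouping computation is then done with the spectral calculus $\eta(x)=-x\log x$ of Theorem~\ref{theorem:Observables}, which is your chain-rule step in different notation. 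Your composed-instrument route works too, but has a small repairable defect: $(P_{F_1},\ldots,P_{F_m})$ is a valid instrument only if $\sum_j u_{F_j}=u_A$, which fails when the faces $F_j$ do not jointly span the state space; you must first adjoin the projector onto the orthocomplementary face (or work inside the face generated by $F_1,\ldots,F_m$) before composing with the in-face measurements. With that repair, and with the continuity point above filled in, your argument matches the paper's.
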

Thus, spectral entropy $S$ gives meaningful and consistent physical predictions in situations like von Neumann's and Petz' thought experiments. However, we clearly do not know whether $S$ is a consistent notion of physical entropy in \emph{all} thermodynamical situations.

It turns out that there are further properties of $S$ that encourage
its physical interpretation as a thermodynamical entropy. In
particular, we will now show that the \emph{second law} holds in two
important situations.  We start by considering projective measurements
$P_1,\ldots,P_n$. Projective measurements can model semipermeable
membranes as in von Neumann's thought experiment, or they describe the
measurement of an observable as explained in
Subsection~\ref{SubsecObservables}. Consider the action of this
measurement on a given state $\omega$. With probabilities $\left(u_A
\circ P_j\right)(\omega)$, this measurement yields the outcome $j$
with post-measurement state $\omega_j:=P_j \omega/\left(u_A \circ
P_j(\omega)\right)$. Performing this measurement on every particle of
an ensemble (without learning the outcomes) yields a new ensemble, described by the
post-measurement state
\[
	\omega' = \sum_{ j:\,\, u_A \circ P_j(\omega) \ne 0  } \big(u_A \circ P_j\big)(\omega) \cdot \omega_j
	= \sum_j P_j \omega.
\]

Projective measurements 
 do not decrease the entropy of the ensemble:

\begin{thm}
\label{theorem:SecondLaw}
	Suppose Postulates 1 and 2 are satisfied. Let $P_1,\ldots,P_n$ be orthogonal projectors which form a valid instrument. Then the induced measurement with post-measurement ensemble state $\omega'=\sum_j P_j w$ does not decrease entropy: $S(\omega')\ge S(\omega)$.
\end{thm}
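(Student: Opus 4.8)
The plan is to mirror the standard quantum proof that a pinching (dephasing) map cannot decrease entropy, but to carry it out purely at the level of the self-dual inner product and the spectral structure guaranteed by Theorem~\ref{theorem:Observables}, so that no Hilbert-space machinery is needed. Write $\Phi := \sum_j P_j$ for the dephasing map, so that $\omega' = \Phi(\omega)$. Since each $P_j$ is the orthogonal projector onto $\mathrm{span}(F_j)$ and the faces $F_j$ are mutually orthogonal, $\Phi$ is self-adjoint with respect to $\langle\cdot,\cdot\rangle$ and satisfies $P_jP_k=\delta_{jk}P_j$; in particular $\Phi$ acts as the identity on each $\mathrm{span}(F_j)$ and annihilates the other faces. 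First I would diagonalize both states. Let $\omega=\sum_i p_i\phi_i$ be a spectral decomposition into a maximal frame $\{\phi_i\}$ (padding with zero-weight pure states if necessary), so $S(\omega)=-\sum_i p_i\log p_i$. For $\omega'$, diagonalize each conditional state $\omega'_j:=P_j\omega/u_A(P_j\omega)\in F_j$ inside its own face and collect the resulting pure states into a single maximal frame $\{\psi_a\}$; by construction every $\psi_a$ lies in some $F_j$, and $\omega'=\sum_a\lambda_a\psi_a$ with $S(\omega')=-\sum_a\lambda_a\log\lambda_a$.

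The heart of the argument is a doubly stochastic relation between the two eigenvalue distributions. Because $\{\psi_a\}$ is an orthonormal frame, $\lambda_a=\langle\psi_a,\omega'\rangle$. Using $\omega'=\Phi(\omega)$, the self-adjointness of $\Phi$, and $\Phi(\psi_a)=\psi_a$ (as $\psi_a\in F_j$ for some $j$), I get $\lambda_a=\langle\psi_a,\Phi(\omega)\rangle=\langle\Phi(\psi_a),\omega\rangle=\langle\psi_a,\omega\rangle=\sum_i D_{ai}\,p_i$, where $D_{ai}:=\langle\psi_a,\phi_i\rangle$. Self-duality gives $0\le D_{ai}\le 1$, and the matrix $D$ is doubly stochastic: since maximal frames sum to $u_A$ (Eq.~\eqref{eqDecompUnit} applied to the whole space), $\sum_a\psi_a=u_A=\sum_i\phi_i$, whence $\sum_a D_{ai}=\langle u_A,\phi_i\rangle=u_A(\phi_i)=1$ and $\sum_i D_{ai}=\langle\psi_a,u_A\rangle=u_A(\psi_a)=1$.

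With $\vec\lambda=D\vec p$ and $D$ doubly stochastic, the conclusion follows from Jensen's inequality applied to the concave function $\eta(t)=-t\log t$:
\[
 S(\omega')=\sum_a\eta\Big(\sum_i D_{ai}p_i\Big)\ \ge\ \sum_a\sum_i D_{ai}\,\eta(p_i)=\sum_i\eta(p_i)=S(\omega),
\]
where the two equalities use $\sum_i D_{ai}=1$ and $\sum_a D_{ai}=1$ respectively. (Equivalently, $\vec\lambda\prec\vec p$ and Shannon entropy is Schur-concave.)

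I expect the only real obstacle to be the bookkeeping in the setup rather than any deep inequality: one must check that the eigenvectors of $\omega'$ can indeed be chosen to lie inside the individual faces $F_j$ (which is what forces $\Phi(\psi_a)=\psi_a$), and that both diagonalizing families can be taken to be maximal frames so that $\sum_a\psi_a=\sum_i\phi_i=u_A$ holds and $D$ comes out doubly stochastic. Zero-weight padding together with $0\log 0:=0$ handles the non-full-rank cases. Everything else---self-adjointness of $\Phi$, the orthonormality $\langle\psi_a,\psi_b\rangle=\delta_{ab}$ of frames, and $u_A(\phi_i)=u_A(\psi_a)=1$---is immediate from self-duality and the properties of projective units recorded above.
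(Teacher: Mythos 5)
Your proof is correct, and it takes a genuinely different route from the paper's. The paper argues via relative entropy: it first proves Klein's inequality (Theorem~\ref{theorem:Klein}), $S(\omega\|\omega')\ge 0$, and then shows, using self-adjointness of the $P_a$ and the pinching invariance $\sum_a P_a\log\omega'=\log\omega'$, that $-\braket{\omega,\log\omega'}=S(\omega')$, mirroring the Nielsen--Chuang argument for quantum pinchings. You instead establish the majorization statement directly: the two spectra are related by $\vec\lambda = D\vec p$ with $D_{ai}=\braket{\psi_a,\phi_i}$ doubly stochastic, and Schur concavity (or Jensen) finishes. Your route buys two things the paper's route does not: it avoids the $\log$ functional calculus and the relative-entropy formalism altogether (so no care is needed about vanishing eigenvalues), and it proves strictly more, since $\vec\lambda\prec\vec p$ gives $S_\alpha(\omega')\ge S_\alpha(\omega)$ for \emph{every} R\'enyi spectral entropy $\alpha\in[0,\infty]$, not just the case $\alpha=1$ stated in the theorem. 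Conversely, the paper's detour is not wasted effort on its side: Klein's inequality is reused to prove concavity (Theorem~\ref{thm:EntropyConcave}), and your doubly-stochastic technique is essentially the one the paper deploys elsewhere (in the proof of Klein's inequality itself and in Lemma~\ref{Lemma:Bistochastic} for Theorem~\ref{theorem:MeasIsTherm}), so your argument fits naturally into its toolkit. Two pieces of bookkeeping that you assume or flag do require the paper's lemmas: the mutual orthogonality $P_jP_k=\delta_{jk}P_j$ is \emph{not} part of the hypothesis ``orthogonal projectors forming a valid instrument'' but must be derived from it, which is exactly Lemma~\ref{Lemma:ProjectorsOrthogonal}; and the facts that $P_j\omega\in F_j$ and that $P_j\omega/u_A(P_j\omega)$ admits a classical decomposition by pure states inside $F_j$ (Lemma~\ref{Lemma:DecompositionInFace} and the lemma following it) are what let you assemble the $\psi_a$, padded with zero weights, into a maximal frame with each member in some $F_j$. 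One trivial attribution slip: in your final display, it is the Jensen \emph{inequality} that uses the row sums $\sum_i D_{ai}=1$, while the subsequent equality uses the column sums $\sum_a D_{ai}=1$; the first equality is just the spectral formula for $S(\omega')$. With those lemmas cited, every step goes through.
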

The proof will be given in the appendix. As in standard quantum
  theory, projectors $P_j$ form a valid instrument if and only if
they are mutually orthogonal, i.e.\ $P_i P_j = \delta_{ij}P_i$, and
complete: $\sum_i u_A\circ P_i = u_A$.

\begin{figure*}
\begin{center}\includegraphics[width=0.64\textwidth]{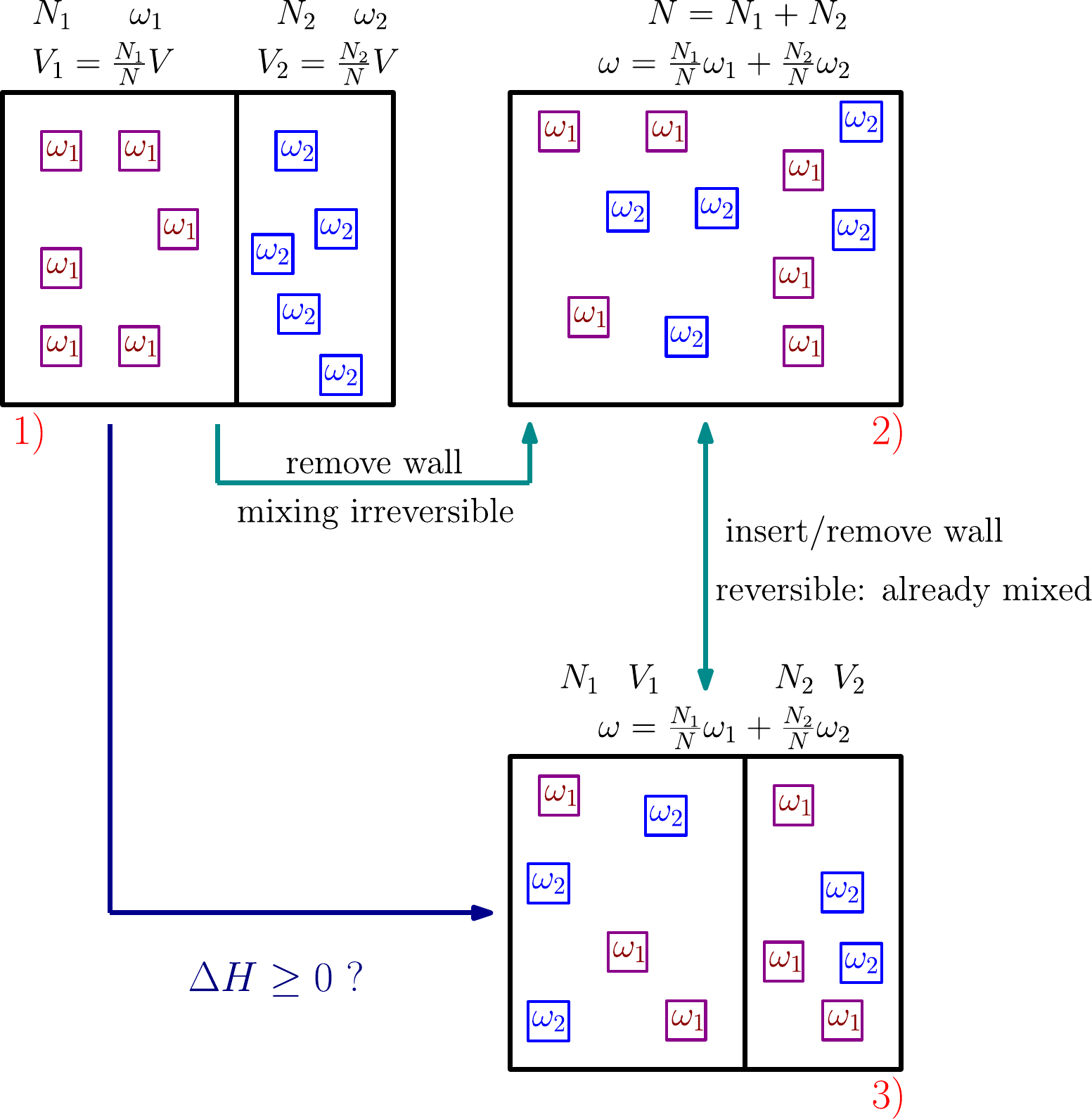}
	\caption{\small \textit{A process mixing gases by removing a separating wall. Theorem~\ref{thm:EntropyConcave} ensures that this process does not decrease entropy, i.e.\ $\Delta H\geq 0$, if thermodynamic entropy $H$ is identified with spectral entropy $S$ as suggested by von Neumann's thought experiment.}}
	\label{FIG:Mixing}
\end{center}
\end{figure*}

Another important manifestation of the second law is in mixing procedures as in Figure~\ref{FIG:Mixing}.
Consider tanks that are separated by walls. Similarly to von Neumann's thought experiment, let the $j$-th tank contain an $N_j$ -particle gas that represents an $\omega_j$-ensemble. Furthermore, assume that all the gases are at the same pressure and density. Identifying thermodynamic entropy $H$ with spectral entropy $S$ (as suggested by Observation~\ref{ObsEntropy}), the entropy of the GPT-ensemble in tank $j$ is $N_j S(\omega_j)$, where $S$ is the entropy per system. Thus the total GPT-entropy is $\sum_j N_j S(\omega_j)$. We remove the walls and let the gases mix. Then we put the walls back in. Now all the tanks contain gases hosting $\sum_j \frac{N_j}N \omega_j$ ensembles at the same conditions as before, where $N = \sum_j N_j$. The total GPT-entropy in the end is given by $\sum_ j N_j S\left(\sum_ k \frac{N_k}N \omega_k\right) = N S\left(\sum_ k \frac{N_k}N \omega_k\right)$. As the gases in the tanks have the same density, volume, temperature and pressure as before, the only difference in entropy is due to the GPT-ensembles. The second law requires that the entropy does not decrease in this process, i.e.\ that $\sum_j N_j S(\omega_j) \le N S\left(\sum_ j \frac{N_j}N \omega_j\right)$ and thus $\sum_j \frac{N_j}N S(\omega_j) \le S\left(\sum_ j \frac{N_j}N \omega_j\right)$. The following theorem shows that our two postulates guarantee that this is true:
\begin{thm} \label{thm:EntropyConcave}
	Assume Postulates 1 and 2. Then entropy is concave, i.e.\ for $\omega_1,\ldots,\omega_n \in \Omega_A$ and $p_1,...,p_n$ a probability distribution, we have
\begin{equation}
		S\left(\sum_j p_j w_j\right) \ge \sum_j p_j S(w_j).
		\label{eqConcavity}
	\end{equation}
\end{thm}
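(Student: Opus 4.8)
The plan is to reduce the statement to the concavity of ordinary Shannon entropy by ``dephasing'' every state with respect to a common basis, namely an eigenbasis of the average state $\bar\omega := \sum_j p_j \omega_j$. By Theorem~\ref{theorem:Observables}, $\bar\omega$ has a spectral decomposition $\bar\omega = \sum_{i=1}^{N_A} \lambda_i \varphi_i$ into a maximal frame of pairwise orthogonal pure states $\varphi_i$, where I allow $\lambda_i=0$ and extend the frame on the support of $\bar\omega$ to a maximal frame using the frame-completion property of faces. Letting $P_i$ be the orthogonal projector onto $\mathrm{span}(\varphi_i)$, the tuple $(P_1,\ldots,P_{N_A})$ is a valid projective instrument in the sense of Theorem~\ref{theorem:Observables}, and $\bar\omega$ is a fixed point of the associated channel $\Phi(\omega):=\sum_i P_i \omega$, since $P_i\varphi_{i'}=\delta_{ii'}\varphi_i$ gives $P_i\bar\omega=\lambda_i\varphi_i$ and hence $\Phi(\bar\omega)=\bar\omega$.

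First I would analyze the image of $\Phi$. Because each $P_i$ maps into the one-dimensional ray $\mathbb{R}_{\ge 0}\varphi_i$, we have $P_i\omega = \langle u_i,\omega\rangle\,\varphi_i$, using self-duality together with $u_i=P_i u_A$ and $u_A(\varphi_i)=1$. Thus for any state $\omega$ the dephased state $\Phi(\omega)=\sum_i \langle u_i,\omega\rangle\,\varphi_i$ is a convex combination of the pairwise distinguishable pure states $\varphi_i$; equivalently, $\Phi$ projects onto the classical simplex generated by the frame $\{\varphi_i\}$. On this simplex, eq.~(\ref{eqSDecomp}) (applied with the pure $\varphi_i$, for which $S(\varphi_i)=0$) shows that spectral entropy reduces to Shannon entropy, $S\big(\sum_i q_i\varphi_i\big)=-\sum_i q_i\log q_i$.

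Next I would combine three ingredients. Applying Theorem~\ref{theorem:SecondLaw} to each $\omega_j$ with this one fixed instrument gives $S(\Phi(\omega_j))\ge S(\omega_j)$. Linearity of $\Phi$ and its fixed-point property give $\sum_j p_j\,\Phi(\omega_j)=\Phi(\bar\omega)=\bar\omega$, so writing $\Phi(\omega_j)=\sum_i c_{j,i}\varphi_i$ with $c_{j,i}=\langle u_i,\omega_j\rangle$ we obtain $\lambda_i=\sum_j p_j c_{j,i}$. Concavity of ordinary Shannon entropy applied to this mixing of the distributions $c_{j,\cdot}$ then yields $S(\bar\omega)=-\sum_i\lambda_i\log\lambda_i\ge\sum_j p_j\big(-\sum_i c_{j,i}\log c_{j,i}\big)=\sum_j p_j\,S(\Phi(\omega_j))$, so that
\[
   S(\bar\omega)\;\ge\;\sum_j p_j\,S(\Phi(\omega_j))\;\ge\;\sum_j p_j\,S(\omega_j),
\]
which is exactly~(\ref{eqConcavity}).

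The main obstacle I anticipate is structural rather than computational: verifying that $(P_1,\ldots,P_{N_A})$ is a genuine instrument to which Theorem~\ref{theorem:SecondLaw} applies, and that $\bar\omega$ is \emph{exactly} invariant under $\Phi$. This needs care when $\bar\omega$ is degenerate or not of full rank, where one must extend the eigenframe on the support of $\bar\omega$ to a maximal frame (so that $\sum_i u_i=u_A$ and the instrument is complete) and check that the projectors onto the chosen rank-one eigenfaces remain mutually orthogonal. Once the dephasing channel and its fixed-point property are in place, the remaining steps follow immediately from the second law (Theorem~\ref{theorem:SecondLaw}) and the classical concavity inequality.
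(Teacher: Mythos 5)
Your proof is correct, but it takes a genuinely different route from the paper's. The paper deduces concavity directly from Klein's inequality (Theorem~\ref{theorem:Klein}): writing $\bar\omega:=\sum_k p_k\omega_k$ and using the relative entropy $S(\omega\|\varphi)=-S(\omega)-\langle\omega,\log\varphi\rangle$, one has $0\le\sum_j p_j S(\omega_j\|\bar\omega)=-\sum_j p_j S(\omega_j)-\langle\bar\omega,\log\bar\omega\rangle=-\sum_j p_j S(\omega_j)+S(\bar\omega)$, using only linearity of the inner product in its first argument --- three lines, no instrument needed. You instead dephase all states in a fixed maximal eigenframe of $\bar\omega$, invoke the second law for projective measurements (Theorem~\ref{theorem:SecondLaw}) on each $\omega_j$, and finish with classical Shannon concavity; the fixed-point identity $\Phi(\bar\omega)=\bar\omega$ ties the pieces together. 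The structural points you flag as potential obstacles do all hold: the rank-one projectors onto the elements of a maximal frame form a valid instrument (apply Theorem~\ref{theorem:Observables} to the observable $\sum_i x_i\varphi_i$ with pairwise distinct $x_i$, so each eigenface is the rank-one face of one $\varphi_i$); $P_i\omega=\langle\varphi_i,\omega\rangle\varphi_i$ because $\mathrm{span}(F_i)$ is one-dimensional, $P_i$ is self-adjoint, and $u_i=P_iu_A=\varphi_i$; and degeneracy or rank-deficiency of $\bar\omega$ is harmless because you build the instrument from an arbitrary maximal frame refining its decomposition, not from its eigenfaces. There is also no circularity: Theorem~\ref{theorem:SecondLaw} is proved in the paper via Klein's inequality, independently of concavity. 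What your route buys is a conceptually appealing reduction --- concavity as a corollary of ``dephasing cannot decrease entropy'' plus classical concavity, mirroring the standard pinching argument in quantum theory, and exhibiting the spectrum of the mixture as a classical mixing of the dephased distributions. What it costs is length, and the fact that, through Theorem~\ref{theorem:SecondLaw}, it still ultimately rests on the same relative-entropy machinery the paper uses directly. (The paper also mentions a second alternative, namely that $S$ equals the measurement entropy $\widehat S$, which is concave by~\cite{WehnerEntropy,BarnumEntropy}; your argument is distinct from that one as well.)
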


Thus, the second law automatically holds for mixing processes. One way to prove~(\ref{eqConcavity}) is to see that $S$ equals ``measurement entropy'' as we will show in Subsection~ \ref{SubsecEntropies}, proven to be concave in~\cite{WehnerEntropy} and~\cite{BarnumEntropy}. However, there is a simpler proof that uses a notion of \emph{relative entropy}, which is an important notion in its own right.
\begin{defn}
   For state spaces $A$ that satisfy Postulates 1 and 2, we define the \emph{relative entropy} of two states $\omega,\varphi\in\Omega_A$ as
\[
			S(\omega\|\varphi) := -S(\omega)-\braket{\omega, \log \varphi}.
\]
	Here, for $\varphi=\sum_j q_j \varphi_j$ any decomposition into a maximal frame, $\log \varphi = \sum_j \log(q_j) \varphi_j$ according to Theorem~\ref{theorem:Observables}. (As in quantum theory, this can be infinite if there are $q_j=0$ such that $\langle \omega,\varphi_j\rangle\neq 0$).
\end{defn}

A notion of relative entropy in GPTs has also been defined in Scandolo's Master thesis~\cite{ScandoloMaster}, but under different assumptions, as discussed in the introduction.
Relative entropy continues to satisfy \emph{Klein's inequality}, a fact that is useful in proving Theorem~\ref{thm:EntropyConcave}. The proof is similar to that within standard quantum theory and deferred to the appendix.

\begin{thm}[Klein's inequality]
\label{theorem:Klein}
	For all $\omega,\varphi \in \Omega_A$,
\[
	S(\omega \|\varphi ) \ge 0 .
\]
\end{thm}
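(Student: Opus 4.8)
The plan is to mimic the standard quantum proof of Klein's inequality, reducing it to an inequality about a doubly stochastic matrix built from the inner product. First I would unfold the definitions: since $S(\omega)=-\langle\omega,\log\omega\rangle$, the relative entropy becomes $S(\omega\|\varphi)=\langle\omega,\log\omega\rangle-\langle\omega,\log\varphi\rangle$. Then I would fix spectral decompositions $\omega=\sum_i p_i\omega_i$ and $\varphi=\sum_j q_j\varphi_j$ into \emph{maximal} frames, extending any shorter frame to a maximal one (assigning vanishing coefficients to the added pure states), which is possible by the frame-extension property recalled in Subsection~\ref{SubsecConsequences}. By Theorem~\ref{theorem:Observables} these give $\log\omega=\sum_i\log(p_i)\,\omega_i$ and $\log\varphi=\sum_j\log(q_j)\,\varphi_j$ in the sense of spectral calculus.

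Next I would introduce the transition matrix $P_{ij}:=\langle\omega_i,\varphi_j\rangle$, which is entrywise nonnegative because the inner product of two states lies in $[0,1]$. Using that a frame is orthonormal, $\langle\omega_i,\omega_k\rangle=\delta_{ik}$, one gets $\langle\omega,\log\omega\rangle=\sum_i p_i\log p_i$ and $\langle\omega,\log\varphi\rangle=\sum_{i,j}p_i P_{ij}\log q_j$, so that $S(\omega\|\varphi)=\sum_i p_i\log p_i-\sum_{i,j}p_i P_{ij}\log q_j$.

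The crucial step --- and the one I expect to be the main obstacle --- is to show that $P$ is doubly stochastic. This is exactly where Postulates 1 and 2 enter, through Eq.~(\ref{eqDecompUnit}): a \emph{maximal} frame sums to the unit effect, $\sum_i\omega_i=u_A=\sum_j\varphi_j$, under the self-dual identification of $u_A$ with a vector. Consequently $\sum_j P_{ij}=\langle\omega_i,u_A\rangle=u_A(\omega_i)=1$ and $\sum_i P_{ij}=\langle u_A,\varphi_j\rangle=u_A(\varphi_j)=1$, since every $\omega_i$ and $\varphi_j$ is normalized. Verifying this marginal-sum identity carefully --- including the point that both frames must be maximal for $\sum_i\omega_i=u_A$ to hold --- is the heart of the argument; the rest is classical.

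Finally I would run the usual scalar estimate. Using $\sum_j P_{ij}=1$ to rewrite $\sum_i p_i\log p_i=\sum_{i,j}p_i P_{ij}\log p_i$, we obtain $S(\omega\|\varphi)=\sum_{i,j}p_i P_{ij}\log(p_i/q_j)$. Applying the elementary inequality $\log t\ge 1-1/t$ with $t=p_i/q_j$ gives $\log(p_i/q_j)\ge 1-q_j/p_i$, whence $S(\omega\|\varphi)\ge\sum_{i,j}p_i P_{ij}-\sum_{i,j}P_{ij}q_j=1-1=0$, where both equalities to $1$ use double stochasticity together with $\sum_i p_i=\sum_j q_j=1$. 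The degenerate cases are handled with the conventions $0\log 0:=0$ and by noting that a term with $q_j=0$ but $p_i P_{ij}>0$ simply sends the expression to $+\infty$, consistent with the claimed nonnegativity and with the remark that $S(\omega\|\varphi)$ may be infinite.
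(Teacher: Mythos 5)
Your proof is correct and follows essentially the same route as the paper's: both decompose $\omega$ and $\varphi$ into maximal frames, form the entrywise nonnegative doubly stochastic matrix $P_{ij}=\langle\omega_i,\varphi_j\rangle$ (with row/column sums equal to one because maximal frames sum to $u_A$ under self-duality), and rewrite $S(\omega\|\varphi)$ in terms of $P$. The only difference is cosmetic: the paper concludes via concavity of $\log$ (Jensen), reducing the expression to a classical relative entropy $\sum_j p_j\log(p_j/r_j)\ge 0$ with $r_j=\sum_k P_{jk}q_k$, whereas you apply the pointwise bound $\log t\ge 1-1/t$ term by term; both are standard proofs of the same classical fact.
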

Klein's inequality can be used to give a simple proof of Theorem~\ref{thm:EntropyConcave}:
\begin{eqnarray*}
	0 &\leq & \sum_j p_j S\big(\omega_j \| \sum_k p_k \omega_k\big)
    \\
	& =& -\sum_j p_j S(\omega_j ) - \left\langle\sum_j p_j \omega_j, \log \left( \sum_k p_k \omega_k \right)\right\rangle \\ &=& -\sum_j p_j S(\omega_j ) + S\left(\sum_k p_k \omega_k\right).
\end{eqnarray*}
Given all the calculations in this subsection in terms of
  orthogonal projections, it may seem at first sight as if every
  statement or calculation in quantum theory can be analogously made
  in the more general state spaces that satisfy Postulates 1 and
  2. However, this may not quite be true, as the fact that the  following  is an open problem shows:\\
  
\noindent
\textbf{Open Problem 2.} \textit{For state spaces satisfying Postulates 1 and 2, if $\omega$ is a pure state, and $P$ an orthogonal projection, then is $P\omega$ also (up to normalization) a pure state?}\\

In classical and quantum state spaces, the answer is ``yes'', but we do not know if a positive answer follows from Postulates 1 and 2 alone. We will return to this problem in Theorem~\ref{ThmAllEquiv}.

Note that Chiribella and Scandolo have applied similar techniques and found beautiful results, including some which are comparable to some of ours, in~\cite[Section 7]{diagonal} (see also~\cite{ScandoloMaster}). They derive diagonalizability of states from a very different set of postulates.

\subsection{Information-theoretic entropies and their relation to physics}
\label{SubsecEntropies}
So far we have considered entropy from a thermodynamic perspective. But entropies also arise in information theory, and as the GPT framework is mostly studied in quantum information theory, indeed there have been many results on entropy from a information-theoretic perspective. Our exposition will mainly follow~\cite{WehnerEntropy}, but has also been given in a slightly different formalism in~\cite{BarnumEntropy}.

Let $e=(e_1,...,e_n)$ and $f=(f_1,...,f_m)$ be two measurements such that there exists a map $M: \{1,...,n\} \rightarrow \{1,...,m\}$ with 
\[
	\sum_{\{ j| M(j) = k \}} e_j = f_k  \qquad (k=1,\ldots,m).
\]
If $M$ is bijective, then the measurement $f$ is simply a \emph{re-labelling} of $e$. If there exists a $k$ with $M(j) \ne k \ \ \forall j$, then because of the normalization of the $e$-measurement, $f_k = 0$, i.e.\ $f_k$ corresponds to a trivial outcome that never happens. If $M$ is not injective, then $f$ is a \emph{coarse-graining} of $e$ (or vice versa, $e$ a \emph{refinement} of $f$) in the sense that $f$ is obtained from $e$ by collecting several outcomes of $e$ and giving them a common new outcome label (and by possibly adding the $0$-effect a few times), see Figure~\ref{FIG:FineGraining}. In this sense, we do not care about which of the $e_j$ triggered the new effect.

\begin{figure}
\begin{center}
\includegraphics[width=0.5\textwidth]{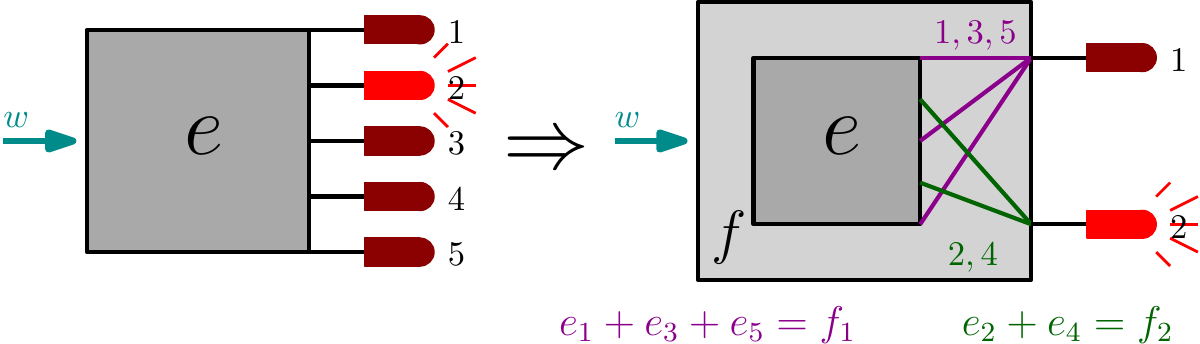}
	\caption{\small \textit{A coarse-graining of a measurement is created by having several measurement results trigger the same output.}}
	\label{FIG:FineGraining}
\end{center}
\end{figure}

However, there exist \emph{trivial} refinements/coarse-grainings: for those, $e_j \propto f_{M(j)} \ \ \forall j$. We write $e_j = p_{j} f_{M(j)}$. Then such a measurement can be obtained by performing $f$, and if outcome $k$ is triggered, we activate a classical random number generator which generates the final outcome $j$ among those $j$ with $M(j) = k$ with probability 
\[
	\frac{p_{j}}{\sum_{\{a | M(a) = k\}} p_a}.
\]
Thus, a trivial refinement does not yield any additional information
about the GPT-system. We call a measurement \emph{fine-grained} if it does not have any non-trivial
refinements. The set of fine-grained measurements on any state space $A$ is denoted $\mathcal{E}_A^*$.

Now we consider the R$\acute{\text{e}}$nyi entropies~\cite{Renyi}, which are defined for probability distributions $\mathbf{p}=(p_1,\ldots,p_n)$ as
\[
		H_\alpha(\mathbf p) = \frac 1 {1-\alpha} \log\left(\sum_j p_j^\alpha\right),
\]
	where $\alpha \in (0,\infty)\setminus\{1\}$. Furthermore,
\[
		H_0(\mathbf p) := \lim_{\alpha \rightarrow 0} H_\alpha(\mathbf p) = \log | \rm{supp}(\mathbf p)| 
\]
	with ${\rm{supp}}(\mathbf p)=\{p_j \ | \ p_j > 0\}$ is called the \emph{max-entropy}, and 
\[
		H_\infty(\mathbf p) := \lim_{\alpha \rightarrow \infty} H_\alpha(\mathbf p) = -\log \max_j p_j
\]
	is called the \emph{min-entropy}.
	Also, 
\[
		H_1(\mathbf p) := \lim_{\alpha \rightarrow 1} H_\alpha(\mathbf p) = -\sum_j p_j \log p_j = H(\mathbf p)
\]
	is just the regular Shannon entropy $H$.
	
	For $\alpha \in [0,\infty]$ and GPTs satisfying Postulates 1 and 2, we generalize the classical R$\acute{\text{e}}$nyi entropies:
\[
		S_\alpha(\omega) := H_\alpha(\mathbf p),
\]
	where $\omega=\sum_j p_j w_j$ is any decomposition into perfectly distinguishable pure states. According to Theorem~\ref{theorem:Observables}, the result is independent of the choice of decomposition. We have $S_1=S$, the spectral entropy of Definition~\ref{DefSpectralEntropy}.

Following~\cite{WehnerEntropy}, for every $\alpha \in [0,\infty]$ and $\omega\in\Omega_A$, we define the \emph{order-}$\alpha$ \emph{R$\acute{\text{e}}$nyi measurement entropy} as
\[
		\widehat{S}_\alpha(\omega) = \inf_{e \in \mathcal{E}_A^*} H_\alpha\left(\strut e_1(\omega),e_2(\omega),\ldots\right),
\]
	where $H_\alpha$ on the right-hand side denotes the classical R$\acute{\text{e}}$nyi entropy.
	The \emph{order-$\alpha$ R$\acute{\text{e}}$nyi decomposition entropy} is defined as
	\begin{equation}
		\check{S}_\alpha(\omega) := \inf_{\omega=\sum_j q_j \varphi_j} H_\alpha(\mathbf q),
	\end{equation}
	where the infimum is over all convex decompositions of $\omega$ into \emph{pure} states $\varphi_j\in\Omega_A$.
	
	The idea of measurement entropy is to characterize the state before a measurement. For
        example, in quantum theory, particles prepared in a state
        $\ket{\psi}$ which all give the same result in energy
        measurements would be said to be in an energy eigenstate. If
        instead we performed a position measurement, the resulting distribution of positions would have non-zero entropy. However, this entropy would arguably not come from the initial state, but from the measurement process itself due to the uncertainty principle.
        
Suppose we would like to prepare a state $\omega$ by using states of maximal knowledge (i.e.\ pure states) $\varphi_j$, and a random number generator which gives output $j$ with probability $p_j$. Then the decomposition entropy quantifies the smallest information content (entropy) of a random number generator that would be necessary to build such a device.
	For more detailed operational interpretations of measurement and decomposition entropy, in particular for $\alpha=1$, see~\cite{WehnerEntropy,BarnumEntropy} Note that in quantum theory, measurement, decomposition and spectral R\'enyi entropies all coincide, with the 
$\alpha=1$ case giving von Neumann entropy, $S(\omega)=-{\rm tr}(\omega\log\omega)$.
	
Our first result is that the spectral and measurement definitions of the entropies agree:
\begin{thm}
\label{TheRenyi}
	Consider any state space $A$ which satisfies Postulates 1 and 2. Then the R$\acute{\text{e}}$nyi entropies $S_\alpha$ and the R$\acute{\text{e}}$nyi measurement entropies $\widehat S_\alpha$ coincide, and upper-bound the R\'enyi decomposition entropy $\check S_\alpha$, i.e.
\[
		\check{S}_\alpha(\omega)\leq S_\alpha(\omega) = \widehat{S}_\alpha(\omega) \mbox{ for all } \omega \in \Omega_A,\ \alpha \in [0,\infty].
\]
	In particular, for $\alpha=1$, the measurement entropy $\widehat S$ is the same as the spectral entropy $S$ from Definition~\ref{DefSpectralEntropy}, which we have identified with thermodynamical entropy $H$ in Observation~\ref{ObsEntropy}.
	\label{theorem:MeasIsTherm}
\end{thm}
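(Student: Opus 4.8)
The plan is to establish the three relations in the displayed chain separately: the easy bound $\check S_\alpha\le S_\alpha$, the upper bound $\widehat S_\alpha\le S_\alpha$, and the lower bound $\widehat S_\alpha\ge S_\alpha$. The decomposition bound is immediate. Fixing a spectral decomposition $\omega=\sum_i p_i\omega_i$ into perfectly distinguishable pure states (which exists by Postulate~2 and whose coefficient vector $\mathbf p$ gives $H_\alpha(\mathbf p)=S_\alpha(\omega)$ independently of the choice by Theorem~\ref{theorem:Observables}), this is in particular a convex decomposition into pure states, hence competes in the infimum defining $\check S_\alpha$, so $\check S_\alpha(\omega)\le H_\alpha(\mathbf p)=S_\alpha(\omega)$.

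For the two measurement bounds I would first record the structure of fine-grained measurements. If $e=(e_1,e_2,\ldots)$ is fine-grained then, identifying effects with vectors via self-duality, each $e_k$ must lie on an extreme ray of $A_+$: otherwise a splitting $e_k=e_k'+e_k''$ with $e_k',e_k''$ not proportional to $e_k$ would be a non-trivial refinement. Thus $e_k=c_k\phi_k$ with $\phi_k$ a pure state, and since $e_k(\phi_k)=c_k\langle\phi_k,\phi_k\rangle=c_k$ (using $\langle\phi_k,\phi_k\rangle=1$ for pure states) and $e_k$ is an effect, $c_k=e_k(\phi_k)\in[0,1]$. The upper bound is then easy: take the frame measurement $e_i:=\omega_i$ built from the spectral frame. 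Orthonormality $\langle\omega_i,\omega_j\rangle=\delta_{ij}$ and $\sum_i\omega_i=u_A$ from~\eqref{eqDecompUnit} show this is a genuine measurement perfectly distinguishing the $\omega_i$, and it is fine-grained because each $\omega_i$ is extreme. Its outcome probabilities are $e_i(\omega)=p_i$, so $\widehat S_\alpha(\omega)\le H_\alpha(\mathbf p)=S_\alpha(\omega)$.

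The heart of the proof is the lower bound, which I would obtain from majorization. Write $q_k:=e_k(\omega)=\sum_i D_{ki}p_i$ with $D_{ki}:=\langle e_k,\omega_i\rangle\ge 0$; completeness $\sum_k e_k=u_A$ gives column sums $\sum_k D_{ki}=u_A(\omega_i)=1$, so $\mathbf q$ is again a probability vector. To prove $\mathbf q\prec\mathbf p$ (i.e.\ $\mathbf q$ majorized by $\mathbf p$), fix $t$ and let $T$ index the $t$ largest $q_k$; then $\sum_{k\in T}q_k=\langle E_T,\omega\rangle$ with $E_T:=\sum_{k\in T}e_k$ a valid effect ($0\le E_T\le u_A$). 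Expanding, $\langle E_T,\omega\rangle=\sum_i p_i\,a_i$ with $a_i:=E_T(\omega_i)\in[0,1]$ and $\sum_i a_i=u_A(E_T)=\sum_{k\in T}c_k\le t$, using $c_k\le 1$. Maximising $\sum_i p_i a_i$ subject to $a_i\in[0,1]$ and $\sum_i a_i\le t$ puts all weight on the largest $p_i$, yielding $\sum_{k\in T}q_k\le\sum_{i=1}^t p_i^{\downarrow}$; together with $\sum_k q_k=\sum_i p_i=1$ this is exactly $\mathbf q\prec\mathbf p$. Since every R\'enyi entropy $H_\alpha$ (including the limits $\alpha\in\{0,\infty\}$) is Schur-concave, $H_\alpha(\mathbf q)\ge H_\alpha(\mathbf p)=S_\alpha(\omega)$, and taking the infimum over fine-grained $e$ gives $\widehat S_\alpha(\omega)\ge S_\alpha(\omega)$. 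Combined with the upper bound, $\widehat S_\alpha=S_\alpha$, and the case $\alpha=1$ specialises to $\widehat S=S=H$.

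I expect the main obstacle to be the majorization step, and within it the two structural inputs it relies on: the extreme-ray form $e_k=c_k\phi_k$ of fine-grained effects and the coefficient estimate $c_k\le 1$ that delivers $\sum_{k\in T}c_k\le t$. This is where self-duality and the inner-product normalisation do the real work, whereas the Schur-concavity of $H_\alpha$ and the partial-sum bookkeeping are routine. A minor point still to verify is that the frame measurement genuinely participates in the infimum defining $\widehat S_\alpha$, which follows from the extremality of pure states already used above.
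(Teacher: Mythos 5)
Your proof is correct, and its skeleton coincides with the paper's: both rest on the facts that every effect of a fine-grained measurement is a sub-normalized multiple of a pure state ($e_k=c_k\phi_k$ with $c_k\le 1$, forced by $\langle\phi_k,\phi_k\rangle=1$), that the measurement built from a maximal frame is fine-grained and reproduces the spectral probabilities, and that the outcome distribution $\mathbf q$ of any fine-grained measurement is majorized by the spectral distribution $\mathbf p$, after which Schur-concavity of $H_\alpha$ finishes the argument. The one genuine divergence is how $\mathbf q\prec\mathbf p$ is established. The paper (Lemma~\ref{Lemma:Bistochastic}) completes the matrix of conditional probabilities $q_{l|j}=e_l(\omega_j)$ to an explicit $N\times N$ bistochastic matrix using the row deficits $1-c_l$, and then invokes Birkhoff's theorem; you instead verify the Hardy--Littlewood--P\'olya partial-sum criterion directly, bounding $\sum_{k\in T}q_k=\langle E_T,\omega\rangle$ by a one-line optimization whose binding constraint $\sum_i E_T(\omega_i)=\sum_{k\in T}c_k\le t$ is precisely where $c_k\le 1$ enters. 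Your route is more elementary (no Birkhoff, no matrix completion), and it handles general $\alpha$ more cleanly: the paper's displayed averaging-over-permutations inequality uses concavity, which holds for Shannon but fails for R\'enyi entropies with $\alpha>1$, whereas Schur-concavity---which is all that majorization requires---holds for every $\alpha\in[0,\infty]$, including the limits $\alpha=0$ and $\alpha=\infty$. The paper's version, in exchange, records an explicit bistochastic map, which is the form of the statement it reuses. Your extreme-ray phrasing of the two structural lemmas is equivalent to the paper's Lemmas~\ref{Lemma:FineGrainedDetails} and~\ref{Lemma:PerfectlyDistinguishingAndFineGrained} (extreme rays of $A_+$ are exactly the rays through pure states), and carries the same implicit assumption that the effects produced by splitting are allowed. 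Two details you should make explicit: extend the spectral frame to a \emph{maximal} frame, padding $\mathbf p$ with zeros, so that $\sum_i\omega_i=u_A$ and the frame measurement is genuinely normalized (the paper does this in both lemma proofs); and set aside the degenerate case $e_k=0$ separately, since the zero effect lies on no extreme ray.
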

The inequality $\check{S}_\alpha\leq S_\alpha$ is easy to see: for a decomposition $\omega=\sum_i p_i\omega_i$ into perfectly distinguishable pure states $\omega_i$, the states $\omega_i$ can also be seen as a fine-grained measurement, yielding outcome probabilities $p_i$. So taking the infimum over all decompositions gives \emph{at most} $H_\alpha(\mathbf{p})=S_\alpha(\omega)$. The equality between $S_\alpha$ and $\widehat S_\alpha$ is shown in the appendix.

We do not know in general whether Postulates 1 and 2 imply that 
$\check{S}_\alpha=S_\alpha$ for all $\alpha$. Interestingly, we know it for $\alpha=2$ and $\alpha=\infty$:
\begin{thm} \label{thm:H2}
If a state space satisfies Postulates 1 and 2, then $\check{S}_2(\omega) = S_2(\omega)$ and $\check{S}_\infty(\omega) = S_\infty(\omega)$
	for all states $\omega$.
\end{thm}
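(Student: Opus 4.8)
The plan is to prove only the reverse inequality $S_\alpha(\omega)\le\check S_\alpha(\omega)$ for $\alpha\in\{2,\infty\}$, since the opposite direction $\check S_\alpha\le S_\alpha$ has already been established (the spectral decomposition is itself a decomposition into pure states). The key observation is that for these two special values of $\alpha$ the spectral entropy admits a manifestly decomposition-independent expression in terms of the self-dual inner product obtained from Postulates 1 and 2 in~\cite{Skalarprodukt}, which can then be compared directly against an arbitrary pure-state decomposition $\omega=\sum_j q_j\varphi_j$. Throughout I would use that perfectly distinguishable pure states form an orthonormal system, $\langle\omega_i,\omega_j\rangle=\delta_{ij}$, that $\langle\varphi,\varphi\rangle=1$ for pure $\varphi$, and crucially that $\langle\varphi_i,\varphi_j\rangle\ge 0$ for any two states.

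For $\alpha=2$, I would first rewrite $S_2$. Writing $\omega=\sum_j p_j\omega_j$ in a maximal frame and using orthonormality gives $\langle\omega,\omega\rangle=\sum_j p_j^2$, so that $S_2(\omega)=-\log\langle\omega,\omega\rangle$ depends only on $\omega$. For an arbitrary decomposition $\omega=\sum_j q_j\varphi_j$ into pure states, expanding yields $\langle\omega,\omega\rangle=\sum_j q_j^2+\sum_{i\ne j}q_iq_j\langle\varphi_i,\varphi_j\rangle\ge\sum_j q_j^2$, because every off-diagonal term is non-negative. Since $H_2(\mathbf q)=-\log\sum_j q_j^2$ and $-\log$ is decreasing, this gives $H_2(\mathbf q)\ge S_2(\omega)$, and taking the infimum over decompositions yields $\check S_2(\omega)\ge S_2(\omega)$.

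For $\alpha=\infty$, I would first establish the variational identity $\max_j p_j=\max_{\varphi\text{ pure}}\langle\varphi,\omega\rangle$. For any pure $\varphi$, the value $\langle\varphi,\omega\rangle=\sum_j p_j\langle\varphi,\omega_j\rangle$ is a convex combination of the eigenvalues $p_j$, because the weights $\langle\varphi,\omega_j\rangle$ lie in $[0,1]$ and sum to $\langle\varphi,u_A\rangle=1$ (using $\sum_j\omega_j=u_A$ from~(\ref{eqDecompUnit})); hence $\langle\varphi,\omega\rangle\le\max_j p_j$, with equality at $\varphi=\omega_{j^*}$ for a dominant eigenvalue. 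Consequently $S_\infty(\omega)=-\log\max_{\varphi\text{ pure}}\langle\varphi,\omega\rangle$ is decomposition-independent. Now for an arbitrary pure decomposition, let $q_{j^*}=\max_j q_j$ and test against $\varphi_{j^*}$: non-negativity of cross terms gives $\langle\varphi_{j^*},\omega\rangle\ge q_{j^*}$, while the variational identity gives $\langle\varphi_{j^*},\omega\rangle\le\max_j p_j$. Thus $\max_j q_j\le\max_j p_j$, so $H_\infty(\mathbf q)\ge S_\infty(\omega)$, and taking the infimum yields $\check S_\infty(\omega)\ge S_\infty(\omega)$.

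The main obstacle, and the reason the argument does not extend to general $\alpha$, is that only for $\alpha=2$ and $\alpha=\infty$ does the relevant quantity ($\sum_j q_j^2$, resp.\ $\max_j q_j$) admit an inner-product characterization that is monotone under adding the non-negative off-diagonal contributions $q_iq_j\langle\varphi_i,\varphi_j\rangle$. The non-negativity $\langle\varphi_i,\varphi_j\rangle\ge 0$, guaranteed by the self-duality implied by Postulates 1 and 2, is precisely what powers both estimates; for other values of $\alpha$ no comparable monotone identity is available, which is why $\check S_\alpha=S_\alpha$ remains open in general.
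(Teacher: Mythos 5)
Your proposal is correct and takes essentially the same approach as the paper: for $\alpha=2$ the expansion $\langle\omega,\omega\rangle=\sum_j q_j^2+\sum_{i\neq j}q_iq_j\langle\varphi_i,\varphi_j\rangle\geq\sum_j q_j^2$ together with orthonormality of the maximal frame is exactly the paper's argument, and for $\alpha=\infty$ the paper establishes the same variational identity (that the maximum of $\langle\omega,\varphi\rangle$ equals the largest spectral coefficient $p_1$, via the weights $\langle\omega_j,\varphi\rangle$ summing to $1$) and then tests it against the dominant pure state of the arbitrary decomposition, just as you do. The only cosmetic difference is that the paper maximizes over all normalized states rather than only pure ones, which changes nothing in the argument.
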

\begin{proof}
To give the reader an idea of the kind of arguments involved, we present the proof for $S_2$, but defer the proof for $S_\infty$ to the appendix.
If $\omega=\sum_j p_j \omega_j$ is any convex decomposition into a maximal set of perfectly distinguishable pure states (without loss of generality $p_1\geq p_2\geq\ldots$), and $\omega=\sum_j q_j \varphi_j$ any (other) convex decomposition into pure states $\varphi_j$ (also with $q_1\geq q_2\geq\ldots$), then $\sum_j p_j^2 =\braket{\omega , \omega} = \sum_j q_j^2 + \sum_{j \ne k} q_j q_k \braket{\varphi_j , \varphi_k} \ge \sum_j q_j^2$ since $\langle \varphi_j,\varphi_k\rangle\geq 0$. Thus, we have
\[
	S_2(\omega)=-\log\sum_j p_j^2 \leq -\log \sum_j q_j^2 = H_2(\mathbf{q}),
\]
and since $\check{S}_2(\omega)$ is defined as the infimum over the right-hand side, we obtain that $\check{S}_2(\omega)\geq S_2(\omega)$; we find the converse inequality in Theorem~\ref{TheRenyi}.
\end{proof}

We do not know whether the same identity holds for the most interesting case $\alpha=1$, the case of standard thermodynamic entropy $S=S_1$. In the max-entropy case $\alpha=0$, however, we have a surprising relation to higher-order interference:

\begin{thm}
\label{ThmAllEquiv}
	Consider a state space satisfying Postulates 1 and 2. Then the following statements are all equivalent:
\begin{itemize}
\item[(i)] The state space does not have third-order interference.
\item[(ii)] The measurement and decomposition versions of max-entropy coincide, i.e.\ $\check{S}_0(\omega)=S_0(\omega)$ for all states $\omega$.
\item[(iii)] The state space is either classical, or one on the list of 
Theorem~\ref{LemList}.
\item[(iv)] If $\omega$ is a pure state and $P_F$ any orthogonal projection onto any face $F$, then $P_F\omega$ is a multiple of a pure state.
\item[(v)] The ``atomic covering property'' of quantum logic holds.
\end{itemize}
\end{thm}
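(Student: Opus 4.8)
The plan is to prove all five statements equivalent through a web of implications arranged so that only one step carries real difficulty. I would first dispose of the equivalence (i) $\Leftrightarrow$ (iii): this is essentially Theorem~\ref{LemList} (Lemma 33 of~\cite{Postulates}) read in both directions, since that result classifies exactly the spaces satisfying Postulates 1 and 2 \emph{and} lacking third-order interference, and each space on the resulting list is a formally real Jordan-algebraic system known to exhibit only pairwise interference. With this in hand, it suffices to tie the remaining conditions (ii), (iv), (v) to the classified case, and I would do most of the work among these three ``soft'' conditions using only the face lattice, its rank function $|\cdot|$, and self-duality, all of which are available under Postulates 1 and 2.

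The core of the soft part is the equivalence (ii) $\Leftrightarrow$ (iv) $\Leftrightarrow$ (v), which I would derive from a single observation about ranks. Recall that $S_0(\omega)=\log|F_\omega|$, where $F_\omega$ is the smallest face containing $\omega$, while $\check S_0(\omega)=\log$ of the minimal number of pure states in any convex decomposition of $\omega$. Since the pure states $\varphi_j$ appearing in any such decomposition all lie in $F_\omega$ and together generate it, the identity $\check S_0(\omega)=S_0(\omega)$ for all $\omega$ is precisely the statement that rank is subadditive under joins, $|F\vee G|\le|F|+|G|$. That subadditivity is in turn equivalent to the covering property (v): joining an atom $p\not\le F$ raises the rank of $F$ by exactly one, and iterating shows that any join of $k$ atoms has rank at most $k$. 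For the contrapositive, a failure of covering, say $|F\vee p|\ge|F|+2$, is witnessed directly in (ii) by mixing a frame generating $F$ together with $p$: this yields a state decomposed into $|F|+1$ pure states but of rank at least $|F|+2$, so $\check S_0<S_0$. Finally, I would translate (v) into the operational statement (iv) by identifying the atoms of the face lattice with pure states and checking that ``$F\vee p$ covers $F$'' holds if and only if the orthogonal projection of the atom $p$ onto the relevant face is a multiple of a single pure state rather than a higher-rank element; this is exactly the positive answer to Open Problem 2 demanded in (iv).

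It then remains only to bridge the soft trio to the classified theories, namely to prove (iii) $\Rightarrow$ (v) and (v) $\Rightarrow$ (iii). The forward bridge (iii) $\Rightarrow$ (v) is a finite case check: each space on the list of Theorem~\ref{LemList} is Euclidean-Jordan-algebraic, and its lattice of faces is known to satisfy the covering law (cf.~\cite{ASBook}). The reverse bridge (v) $\Rightarrow$ (iii) is the main obstacle, and I expect it to carry essentially all the mathematical weight of the theorem. The difficulty is that the covering law is a purely order-theoretic statement about how ranks of faces combine, whereas (iii) -- equivalently, via (i), the validity of the superoperator identity~(\ref{eqNoThirdOrder}) -- is a rigid algebraic constraint on how the projectors $P_F$ act on states \emph{outside} their faces. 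My approach would be to use the already-established (v) $\Leftrightarrow$ (iv) to convert covering into the Jordan-like behaviour of projections ($P_F$ sends pure states to multiples of pure states), and then to feed this into the structural analysis of~\cite{Postulates}, where the homogeneity supplied by Postulate 1 together with self-duality brings the state space into the scope of the characterization of Jordan-algebraic cones; this then yields (iii) and, through Theorem~\ref{LemList}, closes the cycle back to (i). The hard part is precisely this conversion of the abstract convex-geometric and order-theoretic hypotheses into the algebraic structure of the classification, since everywhere else the arguments are either direct lattice computations or appeals to the already-proved Theorem~\ref{LemList}.
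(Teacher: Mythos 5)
Your treatment of the only part of this theorem that the paper proves itself --- the equivalence of (ii) with the covering property (v) --- is essentially the paper's own argument. The paper reduces (ii) to the one-sided inequality $\check{S}_0\geq S_0$ via Theorem~\ref{TheRenyi}, then proves exactly your two steps: for $\neg(\mathrm{v})\Rightarrow\neg(\mathrm{ii})$ it mixes the normalized projective unit of $F$ with the offending pure state, producing a state that admits a decomposition into $|F|+1$ pure states although its carrier face has rank at least $|F|+2$; for $(\mathrm{v})\Rightarrow(\mathrm{ii})$ it adjoins the pure states of an optimal decomposition one at a time, so that the face they generate has rank at most $n$. Your ``rank is subadditive under joins'' formulation is a correct repackaging of this (using $S_0(\omega)=\log|F_\omega|$, which rests on Theorem~\ref{theorem:Observables} and Lemma~\ref{Lemma:DecompositionInFace}), not a different proof. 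For everything else the paper simply cites~\cite{Postulates}; in particular your (iv)$\Leftrightarrow$(v), which you describe as a soft lattice check, is in fact Proposition 9.7 of~\cite{ASBook} (Proposition 4.2 of~\cite{AlfsenShultzJordan}) --- citable, but not a direct computation.

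The genuine flaw is in your bridge $(\mathrm{v})\Rightarrow(\mathrm{iii})$, which you yourself identify as carrying all the weight. You assert that ``the homogeneity supplied by Postulate 1 together with self-duality'' places the cone within the scope of the Jordan-algebraic (Koecher--Vinberg) characterization. Postulates 1 and 2 do \emph{not} supply homogeneity of $A_+$. If they did, then together with self-duality (which they do imply,~\cite{Skalarprodukt}) every state space satisfying the postulates would already be Jordan-algebraic; all five conditions of Theorem~\ref{ThmAllEquiv} would then hold identically, Open Problem 1 would be closed, and the theorem would be an equivalence among universally true statements --- contradicting the paper's explicit statement that it is unknown whether Postulates 1 and 2 admit models with higher-order interference. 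The actual content of~\cite{Postulates} is that homogeneity is derived from Postulates 1 and 2 \emph{plus} one of the conditions (i), (iv), (v); extracting homogeneity from the covering law is precisely the hard step, and your sketch presupposes its conclusion. As written, your cycle closes only if this step is replaced by what the paper does directly: an outright citation of the $(\mathrm{i})\Leftrightarrow(\mathrm{iii})\Leftrightarrow(\mathrm{iv})\Leftrightarrow(\mathrm{v})$ equivalences proved in~\cite{Postulates}.
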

The equivalences $(i)\Leftrightarrow(iii)\Leftrightarrow(iv)\Leftrightarrow(v)$ are shown in~\cite{Postulates}; our new result is the equivalence to (ii), which is shown in the appendix.

 Absence of third-order interference is meant in the
  sense of eq.~(\ref{eqNoThirdOrder}), as introduced originally by
  Sorkin~\cite{Sorkin}: only pairs of mutually exclusive alternatives
  can possibly interfere.  It is interesting that this is related to
  an information-theoretic property of max-entropy $S_0$, as given in
  (ii). We do not currently know whether $S_0$ (or, in particular, the identity of $\check S_0$ and $S_0$) has any thermodynamic relevance in the class of theories that we are considering, but it certainly does within quantum theory, where it attains operational meaning in single-shot thermodynamics~\cite{HorodeckiOppenheim,Gour}.
  
   As (iii) shows, this theorem is closely related to Open
  Problem 1: it gives properties of conceivable state
spaces that satisfy Postulates 1 and 2, but are not on the list of
known examples (namely, they do not satisfy any of $(i)-(v)$).
Similarly, (iv) shows the relation of higher-order interference to
Open Problem 2, and (v) relates all these items to quantum
logic. In fact, one can show that Postulates 1 and 2 imply that the set of faces of the state space has the structure of an \emph{orthomodular lattice}, which is often seen as the definition of quantum logic. For readers who are familiar with the terminology of quantum logic, we give some additional remarks in Subsection~\ref{AppendixQuantumLogic} in the appendix.

\section{Conclusions}
As discussed in the introduction, many works (dating back at least to the 1950s) have considered quantum theory as just one particular example of a probabilistic theory: a single point in a large space of theories that contains classical probability theory, as well as many other possibilities that are non-quantum and non-classical. More recent works have focused on the information-theoretic properties of quantum theory, for example deriving quantum theory as the unique structure that satisfies a number of information-theoretic postulates. 

Rather than attempt a derivation of quantum theory from postulates, this paper has examined the thermodynamic properties of quantum theory and of those theories that are similar enough to quantum theory to admit a good definition of thermodynamic entropy, and of some version of the Second Law. Postulate 1 states that there is a reversible transformation between any two sets of $n$ distinguishable pure states. This can be thought of as an expression of the universality of the representation of information, in particular that a choice of basis is arbitrary, and also allows for reversible microscopic dynamics, as is crucial for thermodynamics. Postulate 2 states that every state can be written as a convex mixture of perfectly distinguishable pure states. This ensures that a mixed state describing an ensemble of many particles can be treated as if each particle has an unknown microstate, drawn from a set of distinguishable possibilities.  

Much follows from Postulates 1 and 2, without needing to assume any other aspects of the standard formalism of quantum theory. In order to derive thermodynamic conclusions, we considered the argument originally employed by von Neumann in his derivation of the mathematical expression for the thermodynamic entropy of a quantum state. The argument involves a thought experiment with a gas of quantum particles in a box, and semi-permeable membranes that allow a particle to pass or not depending on the outcome of a quantum measurement. By applying the same thought experiment, we showed that given any theory satisfying Postulates 1 and 2, there is a unique expression for the the thermodynamic entropy, equal to both the spectral entropy and the measurement entropy. By way of contrast, a fictitious system defined by a square state space, which arises as Alice's local system of an entangled pair producing stronger-than-quantum ``PR box'' correlations, does not satisfy either Postulate. This system -- the gbit -- does not admit a sensible notion of thermodynamic entropy, at least not one that is given to it by the von Neumann or Petz arguments. While many works have discussed the inability of quantum theory to produce arbitrarily strong nonlocal correlations, this connection with thermodynamics deserves further investigation. It would be very interesting, for example, if Tsirelson's bound on the strength of quantum nonlocal correlations could be derived from a thermodynamic argument.  

There are many other consequences of Postulates 1 and 2 for both thermodynamic and information-theoretic entropies. For example, a form of the Second Law holds in that neither projective measurements nor mixing procedures can decrease the thermodynamic entropy. The spectral and measurement order-$\alpha$ Renyi entropies coincide for any $\alpha$. The spectral and decomposition order-$\alpha$ Renyi entropies coincide for $\alpha = 2$ or $\infty$. An open question is whether any theory satisfying Postulates 1 and 2 is completely satisfactory from the thermodynamic point of view. While the von Neumann and Petz arguments can be run with no trouble in the presence of Postulates 1 and 2 as we have shown, there could still be a different physical scenario, in which theories would fail to exhibit sensible behaviour unless they have even more of the structure of quantum theory.

Finally, another major open question is whether quantum-like theories exist, satisfying Postulates 1 and 2, that are distinct from quantum theory in that they admit higher-order interference. Roughly speaking, this means that three or more possibilities can interfere in order to produce an overall amplitude, unlike in quantum theory, where different possibilities only interfere in pairs. We extend the results of Ref.~\cite{Postulates}, where it was shown that in the context of Postulates 1 and 2 the existence of higher-order interference is equivalent to each of three other statements.  We provide an equivalent entropic condition: there is higher-order interference if and only if the measurement and decomposition versions of the max entropy do not coincide.

Our understanding of quantum theory would be greatly improved if higher-order interference could be ruled out by simple information-theoretic, thermodynamic, or other physical arguments. On the other hand, if theories with higher-order interference exist and are eminently sensible, an immediate question is whether an experimental test could be performed to distinguish such a theory from quantum theory. While previous experiments~\cite{Sinha,Sinha2009,Soellner,Kauten,Hickmann,Park} only tested for a zero versus non-zero value of higher-order interference, sensible higher-order theories that satisfy Postulates 1 and 2 (if they exist) could help to inform future experiments by supplying concrete models that can be tested against standard quantum theory.\\

\textbf{Acknowledgments.} We would like to thank
  Matt Leifer for many useful discussions, and we are grateful to the
  participants of the ``Foundations of Physics working group'' at
  Western University for helpful feedback. We would also like to thank Giulio Chiribella and Carlo Maria Scandolo for coordinating the arXiv posting of their work with us.
  This research was
supported in part by Perimeter Institute for Theoretical
Physics. Research at Perimeter Institute is supported by the
Government of Canada through the Department of Innovation, Science and
Economic Development Canada and by the Province of Ontario through the
Ministry of Research, Innovation and Science. This research was
undertaken, in part, thanks to funding from the Canada Research Chairs
program. This research was supported by the FQXi
  Large Grant ``Thermodynamic vs.\ information theoretic entropies in
  probabilistic theories''. HB thanks the Riemann Center for
  Geometry and Physics at the Institute for Theoretical Physics,
  Leibniz University Hannover, for support as a visiting fellow during part of the time this paper was in preparation.

\onecolumngrid

\section{Appendix}

\subsection{Proofs}

\subsubsection{Proof that observables are well-defined}
In this appendix, a decomposition of a state into perfectly distinguishable pure states (which always exists due to Postulate 2) will be called a ``classical decomposition''.
\begin{lem} \label{Lemma:DecompositionInFace}
	Assume Postulates 1 and 2. Let $F \ne \{0\}$ be a face of $A_+$ and $\omega \in \Omega_A \cap F$. Then there exists a classical decomposition $\omega = \sum_j p_j \omega_j$ with $\omega_j \in F$ for all $j$.
\end{lem}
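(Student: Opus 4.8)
The plan is to observe that the statement is essentially forced by the defining property of a face together with Postulate 2, so the real work is only in setting up the correspondence between faces of the cone and faces of the base. First I would reduce to the base: since $F$ is a face of $A_+$, by definition it is the set of non-negative multiples of a face $\tilde F := F\cap\Omega_A$ of the convex set $\Omega_A$, and our hypothesis gives $\omega\in\Omega_A\cap F=\tilde F$. Thus it suffices to exhibit a classical decomposition of $\omega$ whose pure components all lie in $\tilde F$, since they then lie in $F$ a fortiori.

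Next I would simply take \emph{any} classical decomposition supplied by Postulate 2, say $\omega=\sum_{j=1}^n p_j\omega_j$ into perfectly distinguishable pure states, and discard the terms with $p_j=0$ so that $p_j>0$ for all $j$. The claim is then that each $\omega_j$ automatically lies in $\tilde F$. This is the standard fact that every convex decomposition of a point of a face has all its summands in that face, and I would make it explicit to match the two-term wording of the face definition given in the footnote: write $\omega=p_1\omega_1+(1-p_1)\omega'$ with $\omega':=(1-p_1)^{-1}\sum_{j\ge 2}p_j\omega_j$, which is a convex combination of elements of $\Omega_A$ and hence again an element of $\Omega_A$. For $n\ge 2$ we have $0<p_1<1$, so the face property of $\tilde F$ (applied with $C=\Omega_A$) yields $\omega_1\in\tilde F$ and $\omega'\in\tilde F$; an induction on $\omega'\in\tilde F$ then places $\omega_2,\ldots,\omega_n$ in $\tilde F$ as well, the base case $n=1$ being trivial since there $\omega=\omega_1\in\tilde F$.

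There is no deep obstacle here beyond bookkeeping: the only points that need care are (i) identifying $F\cap\Omega_A$ as a genuine face of $\Omega_A$, so that the footnote definition applies verbatim, and (ii) promoting that two-term definition to the $n$-term convex combination via the regrouping and induction above. In fact the argument proves slightly more than is claimed, namely that \emph{every} classical decomposition of $\omega$ (not merely some chosen one) has all its components in $F$; the weaker existential statement is all that Lemma~\ref{Lemma:DecompositionInFace} requires.
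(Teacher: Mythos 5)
Your proposal is correct and follows essentially the same route as the paper's own (very terse) proof: take an arbitrary classical decomposition from Postulate~2, discard zero coefficients, and conclude from the face property that every component lies in $F$. The extra steps you spell out---identifying $F\cap\Omega_A$ as a face of $\Omega_A$ and iterating the two-term face definition by regrouping and induction---are exactly the bookkeeping the paper leaves implicit in the phrase ``As $\omega \in F$ and $F$ a face, $\omega_j \in F$ for all $j$.''
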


\begin{proof}
	Let $\omega = \sum_j p_j \omega_j$ be a classical decomposition with $p_j \ne 0$. As $\omega \in F$ and $F$ a face, $\omega_j \in F$ for all $j$.
\end{proof}

\begin{proof}[Proof of Theorem \ref{theorem:Observables}]
	Let $x \in A$ be arbitrary. By Lemma 5.46 from \cite{UdudecDoktor} there exists a frame $\{\omega_j\}$ and $x'_j \in \mathbb R$ such that $x = \sum_j x'_j \omega_j$. We extend $\{\omega_j\}$ to a maximal frame by adding $x'_j := 0$ for the new indices $j$. Now we group together the $j$ with the same $x_j'$ value, and by relabelling we find that $x = \sum_{k=1}^n x_k \sum_i \omega_{k;i}$ where the $x_k$ are pairwise different values of the $x_j'$ and the $\omega_{k;i}$ are the $\omega_j$ that belong to this $x_j'$ value. For any given $k$, the $\omega_{k;i}$ generate a face $F_k$ with projective unit $u_k = \sum_i \omega_{k;i}$.
	
	Therefore we find a decomposition $x = \sum_{k=1}^n x_k u_k$ with $x_k$ pairwise different real numbers and $u_k$ order units of faces $F_k$ and $\sum_{k=1}^n u_k = u_A$.\\
	
	Now we show that the faces $F_k$ are mutually orthogonal:\\
	Let $\omega \in F_k$ be an arbitrary normalized state. By Lemma \ref{Lemma:DecompositionInFace} it has a classical decomposition $\omega = \sum_j p_j \omega_j^{(k)}$ which uses only pure states $\omega_j^{(k)} \in F_k$. Wlog we assume that these pure states form a generating frame of $F_k$, by extending the frame and adding $p_j=0$ to the decomposition. Consider another face $F_m$, i.e. $m\ne k$. Likewise to $\omega$, let $\omega' \in F_m$ be an arbitrary normalized state and $\omega' = \sum_j q_j \omega_{j}^{(m)}$ be a classical decomposition with $\omega_{j}^{(m)}$ a generating frame for $F_m$. For the other faces define $\omega_j^{(i)} := \omega_{i;j}$. Then $u_i = \sum_j \omega_j^{(i)}$ and in total $u_A = \sum_i u_i = \sum_i \sum_j \omega_{j}^{(i)}$. As $\braket{\nu,\nu}=1$ for all pure states $\nu \in \Omega_A$, this implies that the $\omega_{j}^{(i)}$ are mutually orthogonal: 
\[
		1 = u_A(\omega_h^{(g)}) = \sum_i \sum_j \braket{\omega_{j}^{(i)},\omega_h^{(g)}} = 1 + \sum_{(i,j) \ne (g,h)} \braket{\omega_{j}^{(i)},\omega_h^{(g)}}
\]
	and therefore $\braket{\omega_{j}^{(i)},\omega_h^{(g)}} \ge 0$ implies $\braket{\omega_{j}^{(i)},\omega_h^{(g)}} = 0$ for all $(i,j) \ne (g,h)$. Thus we find $\braket{\omega, \omega'} = \sum_j \sum_b p_j q_b \braket{\omega_j^{(k)},\omega_b^{(m)}} = 0$ because $m \ne k$. As $\omega \in F_k$ and $\omega' \in F_m$ were arbitrary (normalized) states, this implies that $F_k$ and $F_m$ are orthogonal. As $k\ne m$ were arbitrary, all the faces are mutually orthogonal.
	\\
	
	Now we will show that the decomposition $x= \sum_j x_j u_j$ is unique. So assume there are two decompositions $x=\sum_{j=1}^{n_a} a_j u_{j}^{(a)} = \sum_{j=1}^{n_b} b_j u_{j}^{(b)}$ with $a_j \in \mathbb R$ pairwise different and projective units $u_{j}^{(a)}$ that add up to the order unit (analogously for $b$) and belong to pairwise orthogonal faces $F_j^{(a)}$. Wlog we assume that the $a_j$ and $b_j$ are ordered by size, i.e. $a_1<a_2<...<a_{n_a}$. We want to show $a_1 = b_1$. The $u_{j}^{(a)}$ generate the faces $F_j^{(a)}$. Let $\omega_{j;i}^{(a)}$ be a generating frame for the face $F_j^{(a)}$, especially $\sum_i \omega_{j;i}^{(a)} = u_j^{(a)}$. As the faces are mutually orthogonal and the projective units add up to $u_A$, the $\omega_{j;i}^{(a)}$ form a maximal frame; in particular they add up to $u_A$ (likewise for $b$). Therefore:
\[
		a_1 = \braket{\omega_{1;j}^{(a)}, x} = \sum_{k,i} b_k \braket{\omega_{1;j}^{(a)}, \omega_{k;i}^{(b)}} \ge \sum_{k,i} b_1 \braket{\omega_{1;j}^{(a)}, \omega_{k;i}^{(b)}} = b_1 u_A(\omega_{1;j}^{(a)}) = b_1.
\]
	Analogously show $b_1 \ge a_1$, i.e. $b_1 = a_1$ in total.
	
	Now suppose there was a $k>1$ and an $i$ with $\braket{\omega_{1;j}^{(a)},\omega_{k;i}^{(b)}} \ne 0$, i.e. $\braket{\omega_{1;j}^{(a)},\omega_{k;i}^{(b)}} > 0$. Then
	\begin{align*}
		a_1 &= \braket{\omega_{1;j}^{(a)},x} = \sum_{k,i} b_k \braket{\omega_{1;j}^{(a)},\omega_{k;i}^{(b)}} = a_1 \sum_i \braket{\omega_{1;j}^{(a)},\omega_{1;i}^{(b)}} + \sum_{k > 1,i} b_k \braket{\omega_{1;j}^{(a)},\omega_{k;i}^{(b)}}
		> a_1 \sum_i \braket{\omega_{1;j}^{(a)},\omega_{1;i}^{(b)}} + \sum_{k > 1,i} a_1 \braket{\omega_{1;j}^{(a)},\omega_{k;i}^{(b)}} \\
		&= a_1 \sum_{k,i} \braket{\omega_{1;j}^{(a)},\omega_{k;i}^{(b)}} = a_1 u_A(\omega_{1;j}^{(a)}) = a_1.
	\end{align*}
	This is a contradiction. Thus $\braket{\omega_{1;j}^{(a)},\omega_{k;i}^{(b)}} = 0$ for all $k > 1$ and $i$.
	Therefore we find $u_1^{(b)}(\omega_{1;i}^{(a)}) = \sum_j \langle \omega_{1;j}^{(b)},\omega_{1;i}^{(a)}\rangle=\sum_{j,k}\langle \omega_{k,j}^{(b)},\omega_{1;i}^{(a)}\rangle=u_A(\omega_{1;i}^{(a)}) = 1$ and analogously $u_1^{(a)}(\omega_{1;i}^{(b)})=1$. By Proposition 5.29 from \cite{UdudecDoktor}, we have $\Omega_A \cap F = \{ \omega \in \Omega_A \ | \ u_F(\omega) = 1 \}$. Therefore a generating frame of $F_1^{(a)}$ is contained in $F_1^{(b)}$ and vice versa. Thus we find $F_1^{(a)} = F_1^{(b)}$ and $u_1^{(a)} = u_1^{(b)}$.
	
For the remaining indices, we construct an inductive proof: Choose $L\in\mathbb{R}$ large enough such that $a_1+L>\max\{a_{n_a},b_{n_b}\}$, and define $x' := x + L\cdot u_1^{(a)}$, i.e. $x' = \sum_{j=1}^{n_a} (a_j+\delta_{j,1}\cdot L) u_j^{(a)} = \sum_{j=1}^{n_b} (b_j+\delta_{j,1}\cdot L) u_j^{(b)}$. Furthermore defining $a_1' := a_2$, $a_2' := a_3$,..., $a_{n_a}' := a_1 + L$, $u_1^{(a')} := u_2^{(a)}$, $u_2^{(a')} := u_3^{(a)}$,...,$u_{n_a}^{(a')} := u_1^{(a)}$ and likewise for $b_j'$, we find $x' = \sum_{j=1}^{n_a} a_j' u_j^{(a')} = \sum_{j=1}^{n_b} b_j' u_j^{(b')}$ with $a_1' < a_2' <...<a_{n_a}'$ and $b_1' < b_2' <...<b_{n_b}'$. Repeating the exact same procedure as before, we obtain $a_1' = b_1'$ and $u_1^{(a')} = u_1^{(b')}$, i.e. $a_2=b_2$ and $u_2^{(a)} = u_2^{(b)}$. We iterate to find $a_j = b_j$ and $u_j^{(a)} = u_j^{(b)}$ for all $j$. Note that as all maximal frames have the same size and as the projective units add up to $u_A$, necessarily $n_a = n_b$.\\
	
	At last we construct the projective measurement that corresponds to measuring the observable $x$:
	
	For $F_k$, let $P_k$ be the orthogonal projector onto the span of $F_k$ (in particular, $P_k : A \rightarrow \mathrm{span}(F_k)$ surjective). We know that these projectors are positive and linear and satisfy $u_A \circ P_k = u_k$. Furthermore $0\le u_k = u_A \circ P_k \le u_A$ and $\sum_k u_A \circ P_k = \sum_k u_k = u_A$, i.e. we obtain a well-defined measurement; therefore the $P_k$ form a well-defined instrument. As they are projectors, the $P_k$ leave the elements of $F_k$ unchanged.
\end{proof}

\subsubsection{Proof of Observation~\ref{ObsEntropy}}
In order to show that $H(\omega)=S(\omega)$ is consistent with Assumptions~\ref{AssThoughtExp}, we only have to show that $\omega\mapsto S(\omega)$ is continuous, to comply with assumption (d). According to Theorem~\ref{TheRenyi} (which we will prove below), the spectral entropy $S(\omega)$ equals measurement entropy $\widehat S(\omega)$. But it is well-known~\cite{BarnumEntropy} and easy to see from its definition that $\widehat S$ is continuous.

It remains to show eq.~(\ref{eqSDecomp}). So let $\omega=\sum_j p_j \omega_j$ be any decomposition of $\omega$ into perfectly distinguishable, not necessarily pure states $\omega_i$. Decompose all the $\omega_i$ into perfectly distinguishable pure states $\omega_j^{(i)}$, i.e.\ $\omega_j=\sum_i q_j^{(i)} \omega_j^{(i)}$. Perfectly distinguishable states live in orthogonal faces, thus $\langle \omega_i,\omega_j\rangle=0$ for $i\neq j$ (note that this is a conclusion that follows from Postulates 1 and 2, but could not be drawn from bit symmetry alone in~\cite{Skalarprodukt}). Thus, we also have $\langle \omega_i^{(j)},\omega_k^{(l)}\rangle=0$ for $i\neq k$ or $j\neq l$, and so $\omega=\sum_{ij} p_j q_j^{(i)} \omega_j^{(i)}$ is a decomposition of $\omega$ into perfectly distinguishable pure states. Define the real function $\eta:[0,1]\to\mathbb{R}$ via $\eta(x):=-x\log x$ for $x>0$ and $\eta(0)=0$. Due to Theorem~\ref{theorem:Observables} and $\eta(xy)=-xy\log x -xy \log y$, we have
\[
   \eta(\omega)=\sum_{ij} \eta(p_j q_j^{(i)})\omega_j^{(i)} = -\sum_{ij} p_j q_j^{(i)} \log p_j \omega_j^{(i)} - \sum_{ij} p_j q_j^{(i)} \left(\log q_j^{(i)}\right)\omega_j^{(i)},
\]
and therefore
\[
   S(\omega)=u_A(\eta(\omega)) = 
   -\sum_{ij} p_j q_j^{(i)} \log p_j - \sum_{ij} p_j q_j^{(i)} \log q_j^{(i)} = -\sum_j p_j \log p_j + \sum_j p_j\underbrace{\left(-\sum_i q_j^{(i)} \log q_j^{(i)}\right)}_{S(\omega_j)}
\]
This completes the proof of Observation~\ref{ObsEntropy}.
\qed

\subsubsection{Proof of the second half of Theorem~\ref{thm:H2}}
Use the notation of the first half of the proof. We claim that $\max_{\varphi\in\Omega}\langle\omega,\varphi\rangle=p_1$. The inequality ``$\geq$'' is trivial (consider the special case $\varphi=\omega_1$). To see the inequality ``$\leq$'', note that $\langle \omega,\varphi\rangle = \sum_j p_j \lambda_j$, where $\lambda_j:=\langle \omega_j,\varphi\rangle\in [0,1]$ satisfies $\sum_j \lambda_j = \langle \sum_j \omega_j,\varphi\rangle=\langle u,\varphi\rangle=1$, and so $\langle \omega,\varphi\rangle\leq p_1$ for all $\varphi$. Thus
\[
S_\infty(\omega)=-\log p_1 = -\log \max_{\varphi\in\Omega} \langle \omega,\varphi\rangle \leq -\log \langle\omega,\varphi_1\rangle
=-\log\left(q_1+\sum_{j\geq 2} q_j \langle\varphi_j,\varphi_1\rangle\right) \leq -\log q_1
= H_\infty(\mathbf{q}).
\]
Similarly as in the first part of the proof, we obtain $\check{S}_\infty(\omega)\geq S_\infty(\omega)$.
The converse inequality from Theorem~\ref{TheRenyi} for $\alpha=\infty$ concludes the proof.
\qed

\subsubsection{Proof of Klein's Inequality and the Second Law for projective measurements}
We consider an ensemble of systems described by an arbitrary state $\omega \in \Omega_A$. To all systems of this ensemble we apply a projective measurement described by orthogonal projectors $P_a$ which form an instrument, resulting in a new ensemble state $\omega'$. The $P_a$ project onto the linear span of faces $F_a$ that replace the eigenspaces from quantum theory. We want to show that the measurement cannot decrease the entropy of the ensemble, i.e.
\[
	S(\omega') \ge S(\omega).
\]
We decompose the proof into several steps. Our basic idea follows the proof of a similar statement for quantum theory in \cite{NielsenChuang}: We reduce the proof of the Second Law to Klein's inequality. But as we do not have access to an underlying pure state Hilbert space, we will need to use a different argument for why Klein's inequality implies the Second Law for projective measurements.

So at first we prove Klein's inequality, adapting the proof of \cite{NielsenChuang}. We note that a similar proof has also been found by Scandolo~\cite{ScandoloMaster}, albeit under different assumptions.
\begin{proof}[Proof of Theorem \ref{theorem:Klein}]
	We consider two arbitrary states $\omega, \nu$ with classical decompositions $\omega = \sum_j p_j \omega_j$, $\nu = \sum_k q_k \nu_k$, where wlog the $\omega_j$ and the $\nu_k$ form maximal frames. 
	We define the matrix $P_{jk} := \braket{\omega_j , \nu_k}$. All its components are non-negative, i.e. $P_{jk} \ge 0$, because the scalar product itself is non-negative for all states. As all maximal frames have the same size, the matrix is a square matrix; as maximal frames sum to $u_A$, the rows and columns sum to one: $\sum_j P_{jk} = \sum_k P_{jk} = 1$. Thus, we get
\[
		S(\omega \| \nu) = -S(\omega) - \braket{\omega , \log \nu} =\sum_j p_j \log p_j - \sum_{jk} p_j \log q_k \braket{\omega_j, \nu_k} = \sum_j p_j \left( \log p_j - \sum_k P_{jk} \log q_k \right).
\]
	We define $r_j := \sum_k P_{jk} q_k$. Note that the $r_j$ form a probability distribution: $r_j \ge 0$ and $\sum_j r_j = \sum_k \sum_j P_{jk} q_k = \sum_k q_k = 1$. Using the strict concavity of the logarithm, we find:
\[
		\log r_j = \log\left(\sum_k P_{jk} q_k\right) \ge \sum_k P_{jk} \log q_k.
\]
	Therefore we get
\[
		S(\omega \| \nu) = \sum_j p_j \left( \log p_j - \sum_k P_{jk} \log q_k \right) \ge \sum_j p_j \left( \log p_j - \log r_j  \right) = \sum_j p_j \log \left( \frac{p_j}{r_j} \right).
\]
	We recognize the last expression as the classical relative entropy of the probability distributions $p_j$ and $r_j$. This classical relative entropy has the important property that it is never negative. Thus:
\[
		S(\omega \| \nu) \ge 0.
\]
\end{proof}

In order to get the main proof less convoluted, we will state some technical parts as lemmas.

\begin{lem} \label{Lemma:ProjectorsOrthogonal}
	Assume Postulate 1 and 2. Consider orthogonal projectors $P_j$ which form an instrument. Then the $P_j$ are mutually orthogonal:
\[
		P_k P_j = \delta_{jk} P_j.
\]
\end{lem}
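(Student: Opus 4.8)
The plan is to reduce the claim to elementary linear algebra once I have exhibited a single orthonormal maximal frame adapted to all the projectors simultaneously. Each $P_j$ is the orthogonal projector (in the self-dual inner product) onto $\mathrm{span}(F_j)$ for some face $F_j$, and its induced effect is the projective unit $u_j := u_A \circ P_j$. The instrument condition is precisely $\sum_j u_j = u_A$. I would begin by choosing, for each $j$, a generating frame $\omega_1^{(j)},\ldots,\omega_{|F_j|}^{(j)}$ of $F_j$; by eq.~(\ref{eqDecompUnit}) these satisfy $u_j = \sum_i \omega_i^{(j)}$, so that $u_A = \sum_{j,i} \omega_i^{(j)}$.

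The crux is to show that the combined collection $\{\omega_i^{(j)}\}_{j,i}$ is an orthonormal set, and hence a maximal frame. This I would establish exactly as in the proof of Theorem~\ref{theorem:Observables}: for any fixed pure state $\omega_h^{(g)}$ in the collection, normalization gives
\[
   1 = u_A(\omega_h^{(g)}) = \sum_{j,i} \braket{\omega_i^{(j)}, \omega_h^{(g)}} = 1 + \sum_{(j,i)\neq(g,h)} \braket{\omega_i^{(j)}, \omega_h^{(g)}},
\]
using $\braket{\omega_h^{(g)},\omega_h^{(g)}}=1$ for pure states. Since every inner product between states is non-negative, every cross term must vanish: $\braket{\omega_i^{(j)}, \omega_h^{(g)}} = 0$ whenever $(j,i)\neq(g,h)$. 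In particular the frames belonging to different faces are mutually orthogonal, so the faces $F_j$ themselves are pairwise orthogonal.

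With orthonormality in hand, I would use that a face is generated by its frame: by Lemma~\ref{Lemma:DecompositionInFace} every state in $F_j$ is a non-negative combination of $\{\omega_i^{(j)}\}_i$, whence $\mathrm{span}(F_j) = \mathrm{span}\{\omega_i^{(j)}\}_i$. Because the $\omega_i^{(j)}$ are orthonormal, the orthogonal projector onto this span has the explicit form $P_j x = \sum_i \braket{\omega_i^{(j)}, x}\,\omega_i^{(j)}$. Then for $j\neq k$,
\[
   P_k P_j x = \sum_i \braket{\omega_i^{(j)}, x}\, P_k \omega_i^{(j)} = \sum_{i,l} \braket{\omega_i^{(j)}, x}\braket{\omega_l^{(k)}, \omega_i^{(j)}}\,\omega_l^{(k)} = 0,
\]
since each $\braket{\omega_l^{(k)}, \omega_i^{(j)}}$ vanishes. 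Together with $P_j^2 = P_j$ (as $P_j$ is a projector), this gives $P_k P_j = \delta_{jk} P_j$.

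The step I expect to require the most care is verifying the identity $\mathrm{span}(F_j) = \mathrm{span}\{\omega_i^{(j)}\}_i$ and the resulting rank-one-sum form of $P_j$; everything else is the non-negativity argument already used for observables. An alternative that avoids the explicit form would be to argue directly that $P_k$ annihilates $\mathrm{span}(F_j)$ for $j\neq k$: once the faces $F_j$ and $F_k$ are known to be orthogonal, their spans are orthogonal, and an orthogonal projector kills the orthogonal complement of its range.
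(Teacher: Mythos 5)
There is a genuine gap, and it sits exactly where you predicted trouble: the identity $\mathrm{span}(F_j)=\mathrm{span}\{\omega_i^{(j)}\}_i$ is false, and with it the formula $P_j x=\sum_i\langle\omega_i^{(j)},x\rangle\,\omega_i^{(j)}$. Lemma~\ref{Lemma:DecompositionInFace} does \emph{not} say that every state in $F_j$ is a non-negative combination of the \emph{fixed} generating frame; it only provides a classical decomposition into \emph{some} pure states lying in $F_j$, and those need not belong to your chosen frame. Already in quantum theory the two spans differ: for a face of rank $k\geq 2$, $\mathrm{span}(F_j)$ is the $k^2$-dimensional real space of Hermitian matrices supported on the corresponding Hilbert subspace, while the frame spans only the $k$-dimensional ``diagonal'' subspace. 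Your formula is the orthogonal projection onto the latter (the pinching map $X\mapsto\sum_i \langle i|X|i\rangle\, |i\rangle\langle i|$), not the face projection $X\mapsto\pi X\pi$ that the lemma concerns. The same false claim undermines your fallback argument: its last step (orthogonal faces have orthogonal spans, and an orthogonal projector annihilates the orthogonal complement of its range) is sound, but you only established orthogonality of the two \emph{chosen frames}, and the passage from frame-orthogonality to face-orthogonality is precisely what rested on the false span identity. The gap is repairable with the technique the paper uses to prove Theorem~\ref{theorem:Observables}: since by eq.~(\ref{eqDecompUnit}) the projective unit $u_j$ is the same for \emph{every} generating frame of $F_j$, your non-negativity argument applies to all choices of generating frames simultaneously; every pure state of $F_j$ extends to a generating frame of $F_j$, so all pure states of $F_j$ are orthogonal to all pure states of $F_k$ for $j\neq k$, and then Lemma~\ref{Lemma:DecompositionInFace} plus bilinearity gives orthogonality of arbitrary states in the two faces, hence of their spans. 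With that repaired, your fallback argument closes the proof.

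You should also know that the paper's own proof is entirely different and much more economical. For $\omega\in A_+$ with $P_j\omega\neq 0$, it applies $u_A=\sum_k u_A\circ P_k$ to the normalized state $P_j\omega/(u_A\circ P_j)(\omega)$, uses $P_jP_j=P_j$ to peel off the $k=j$ term, and concludes $\sum_{k\neq j} u_A(P_kP_j\omega)=0$; since each $P_kP_j\omega$ lies in $A_+$ by positivity and $u_A$ vanishes only at the origin of the cone, every $P_kP_j\omega$ is zero. This uses only positivity, idempotence, the instrument condition, and strict positivity of $u_A$ on $A_+\setminus\{0\}$ --- no inner product, no frames, and notably no assumption that the $P_j$ project onto spans of faces, an assumption your argument adds on top of the lemma's stated hypotheses.
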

\begin{proof}
	We prove $P_k P_j \omega = 0$ for all $\omega \in A$, $j \ne k$. If $P_j \omega = 0$ this is trivial, so from now on assume $P_j \omega \ne 0$. As the cone is generating  (i.e. $\mathrm{Span}(A_+) = A$) and the projectors linear, it is sufficient to show $P_k P_j \omega = 0$ for all $w \in A_+$.\\
	As $P_j$ is positive, $P_j \omega \ne 0$ implies that $(u_A \circ P_j)(\omega) > 0$ because only the zero-state is normalized to $0$.\\
	Using $u_A = u_A \circ (\sum_j P_j) = \sum_j u_A \circ P_j$ and $P_j P_j= P_j$:
	\begin{align*}
		1 &= u_A\left( \frac{P_j \omega}{(u_A \circ P_j)(\omega)} \right) = \left(u_A \circ \sum_k P_k \right)\left( \frac{P_j \omega}{(u_A \circ P_j)(\omega)} \right) =  u_A\left( \frac{ P_j P_j \omega}{(u_A \circ P_j)(\omega)} \right) + u_A \left( \frac{\sum_{k | k \ne j} P_k P_j \omega}{(u_A \circ P_j)(\omega)} \right)\\
		&= u_A\left( \frac{ P_j \omega}{(u_A \circ P_j)(\omega)} \right) + u_A \left( \frac{\sum_{k | k \ne j} P_k P_j \omega}{(u_A \circ P_j)(\omega)} \right) \qquad \Rightarrow \qquad 0 = u_A \left( \frac{\sum_{k | k \ne j} P_k P_j \omega}{(u_A \circ P_j)(\omega)} \right) = \sum_{k | k \ne j} u_A \left( P_k P_j \omega \right).
	\end{align*}
	As the projectors are positive and only the zero-state is normalized to $0$, this shows $P_k P_j \omega = 0$ for $k\ne j$.
\end{proof}

\begin{lem}
	Assume Postulates 1 and 2. Consider an orthogonal projector $P$ which projects onto the linear span of a face $F$ of $A_+$. Then for all states $\omega \in A_+$ we find $P\omega \in F$.
\end{lem}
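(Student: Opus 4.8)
The plan is to combine the two facts we already have about the orthogonal projector $P$ onto $\mathrm{span}(F)$: that $P$ is \emph{positive}, so $P\omega \in A_+$ for every $\omega \in A_+$ (as recalled from~\cite{Postulates}), and that by construction $P\omega$ lies in $\mathrm{span}(F)$. The whole statement then reduces to the purely convex-geometric claim that $A_+ \cap \mathrm{span}(F) = F$ for any face $F$ of the cone $A_+$, since $P\omega$ automatically sits in the left-hand side.

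To establish that claim I would first note that, $F$ being a convex cone, its span can be written as a difference set $\mathrm{span}(F) = F - F$: the set $F-F$ is already a subspace (closed under addition, negation and scaling) containing $F$, and any linear combination of elements of $F$ splits into its positive and negative parts, each lying in $F$ because $F$ is closed under addition and nonnegative scaling. Hence, writing $\eta := P\omega \in A_+ \cap \mathrm{span}(F)$ as $\eta = f - g$ with $f,g \in F$, I get $f = \eta + g$ with $\eta \in A_+$ and $g \in F \subseteq A_+$. Displaying this as $f = \tfrac12(2\eta) + \tfrac12(2g)$ exhibits $f \in F$ as a proper convex combination of the two cone elements $2\eta,\,2g \in A_+$, so the defining property of a face forces $2\eta \in F$, and since $F$ is a cone, $\eta = P\omega \in F$, as required.

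The only genuinely delicate point I expect is the reduction itself, namely verifying that faces of $A_+$ are convex cones so that ``$F$ closed under nonnegative scaling'' is legitimate and the identity $\mathrm{span}(F)=F-F$ applies; in this paper's conventions $F$ is by construction the set of nonnegative multiples of a face of $\Omega_A$, so this holds immediately, but it is worth stating, since $A_+\cap\mathrm{span}(F)=F$ would fail without conicity. As an alternative that stays closer to the machinery already developed, one could instead check membership via the projective unit: using $u_F = u_A\circ P$ and $Pu_F = u_F$ (the latter because $u_F=\sum_j\omega_j\in\mathrm{span}(F)$ by~(\ref{eqDecompUnit}) and $P$ is self-adjoint), one computes $u_F(P\omega)=\langle Pu_F,\omega\rangle=\langle u_F,\omega\rangle=u_A(P\omega)$, and then Proposition~5.29 of~\cite{UdudecDoktor}, in its homogeneous cone form $\eta\in F \Leftrightarrow u_F(\eta)=u_A(\eta)$, yields $P\omega\in F$ after normalizing by $u_A(P\omega)$ (the case $P\omega=0$ being trivial).
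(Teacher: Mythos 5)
Your proposal is correct and follows essentially the same route as the paper: positivity of $P$ gives $P\omega\in A_+$, the projection property gives $P\omega\in\mathrm{span}(F)$, and the statement reduces to the convex-geometric identity $F=\mathrm{span}(F)\cap A_+$. The only difference is that the paper simply cites this identity (Proposition 2.10 in \cite{PfisterMaster}), whereas you prove it inline via $\mathrm{span}(F)=F-F$ and the defining property of a face (and sketch a second, projective-unit route); both fill-ins are correct and self-contained.
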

\begin{proof}
From basic convex geometry (see e.g.\ Proposition 2.10 in \cite{PfisterMaster}), we know that $F={\rm span}(F)\cap A_+$. Since $P$ is positive, we have $P\omega\in A_+$; furthermore, since $P$ projects onto $F$, we have $P\omega\in {\rm span}(F)$, thus $P\omega\in F$.
\end{proof}

\begin{proof}[Proof of Theorem \ref{theorem:SecondLaw}]
	We know that $S(\omega\|\omega') = -S(\omega) - \braket{\omega, \log \omega'} \ge 0$. As in Theorem 11.9 from \cite{NielsenChuang} , we claim $-\braket{\omega, \log \omega'} = S(\omega')$ and therefore $-S(\omega) + S(\omega') \ge 0$. Thus we only have to prove $-\braket{\omega, \log \omega'} = S(\omega')$. But as we do not have access to an underlying pure state Hilbert space, our proof is different from~\cite{NielsenChuang}.

	By Lemma \ref{Lemma:ProjectorsOrthogonal}, the $P_a$ are mutually orthogonal, i.e. $P_a P_b = \delta_{ab} P_b$. By symmetry of the $P_a$ also the $P_a \omega$ are mutually orthogonal: $\braket{P_a \omega, P_b \omega} = \braket{\omega, P_a P_b \omega} = 0$ for $a \ne b$. This also shows that the $F_a$ are mutually orthogonal. If $P_a \omega = 0$ we use the decomposition $P_a \omega = u_A(P_a \omega) \sum_k r_{ak} w_{ak}$ with $r_{ak} = \delta_{ak}$ and $w_{ak}$ an arbitrary generating frame of $F_a$. If $P_a \omega \ne 0$, then $\frac{P_a \omega}{u_A(P_a \omega)} \in F_a \cap \Omega_A$ and by Lemma \ref{Lemma:DecompositionInFace}, there is a classical decomposition $\frac{P_a \omega}{u_A(P_a \omega)} = \sum_k r_{ak} \omega_{ak}$ with $\omega_{ak} \in F_a$. We complete the $\omega_{ak}$ to generating frames of the $F_a$ by adding terms with $r_{ak} = 0$. As we are using classical decompositions/frames, we know $\braket{\omega_{aj},\omega_{ak}} = \delta_{jk}$. Furthermore, as the $F_a$ are mutually orthogonal, we know $\braket{\omega_{aj}, \omega_{bk}} = 0$ for $a \ne b$.
	
	We note that the the $w_{aj}$ form a maximal frame:
\[
		u_A = \sum_a u_A \circ P_a  = \sum_a u_{F_a} = \sum_a \sum_j \omega_{aj}.
\]
For $a \ne b$ we have $P_b \omega_{aj} = P_b P_a \omega_{aj} = 0$, so we have a classical decomposition
\[
		\omega' = \sum_a P_a \omega = \sum_a \sum_j u_A(P_a \omega) r_{aj} \omega_{aj} 
\]
	with $\omega_{aj}$ a maximal frame that satisfies $P_a \omega_{bj} = \delta_{ab} \omega_{bj}$.	Note that we do not need to normalize $\omega'$ as the measurement itself is required to be normalized. Using
\[
		\sum_a P_a \log \omega' = \sum_{bj} \log(u_A(P_b \omega) r_{bj}) \sum_a P_a \omega_{bj} = \sum_{bj} \log(u_A(P_b \omega) r_{bj}) \omega_{bj} = \log \omega'
\]
	and
	\begin{equation}
		-S(\omega') = - \sum_{bj} (u_A(P_b \omega) r_{bj}) \log(u_A(P_b \omega) r_{bj}) = \braket{\omega', \log \omega'}
	\end{equation}
	as well as the symmetry of the $P_a$ we finally find:
	\begin{equation}
		-S(\omega') = \braket{\omega', \log \omega'} = \braket{\sum_a P_a \omega, \log \omega'} = \braket{\omega, \sum_a P_a \log \omega'} = \braket{\omega , \log \omega'}.
	\end{equation}
\end{proof}

\subsubsection{Proof that measurement and spectral entropies are identical}
In the main text we encountered different ways to define the entropy. One of them is to adapt classical entropy definitions by using the coefficients of a classical decomposition. Another is to adapt classical entropy definitions by using measurement probabilities and minimizing over all fine-grained measurements. Here we will show that in the context of Postulates 1 and 2, these two concepts yield the same R\'{e}nyi entropies.

To prove this, we will first analyze fine-grained measurements in further detail. The results will allow us to reproduce the quantum proof found in~\cite{WehnerEntropy} for our GPTs.

\begin{lem} \label{Lemma:FineGrainedDetails}
	Assume Postulates 1 and 2. Consider an arbitrary fine-grained measurement $(e_1,...,e_n)$. Then for all $j$ there exist some $c_j \in [0,1]$ and a pure state $\omega_j \in \Omega_A$ such that $e_j = c_j \braket{\omega_j, \cdot}$.
\end{lem}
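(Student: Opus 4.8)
The plan is to exploit self-duality, which holds under Postulates 1 and 2, in order to regard the effect $e_j$ as an element of the cone $A_+$, and then to use fine-grainedness to force the corresponding state to be pure.

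First I would recall that self-duality lets us identify each effect with a vector in $A_+$ acting through the inner product, so that $e_j(\cdot)=\langle e_j,\cdot\rangle$ and $e_j\in A_+$. Writing $c_j:=u_A(e_j)\geq 0$, the case $c_j=0$ forces $e_j=0$, and the claim then holds with any pure $\omega_j$; so I assume $c_j>0$ and set $\psi_j:=e_j/c_j\in\Omega_A$, giving $e_j=c_j\langle\psi_j,\cdot\rangle$. It then remains to show that $\psi_j$ is pure and that $c_j\in[0,1]$.

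Next I would suppose for contradiction that $\psi_j$ is mixed, so it admits a nontrivial convex decomposition $\psi_j=q\alpha+(1-q)\beta$ with $0<q<1$ and distinct states $\alpha\neq\beta$. Define $e_j':=c_j q\,\alpha$ and $e_j'':=c_j(1-q)\,\beta$, viewed again as vectors in $A_+$; by construction $e_j'+e_j''=e_j$. I would verify that these are legitimate effects: non-negativity $\langle e_j',\omega\rangle\geq 0$ follows from self-duality, since both arguments lie in $A_+$, while the upper bound follows from $u_A-e_j'=(u_A-e_j)+e_j''\in A_+$, using that $u_A-e_j=\sum_{i\neq j}e_i\in A_+$ and $e_j''\in A_+$. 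Replacing $e_j$ by the pair $(e_j',e_j'')$ then yields a genuine measurement refining $(e_1,\ldots,e_n)$.

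I would then argue that this refinement is non-trivial: if it were trivial we would need $e_j'\propto e_j$, i.e.\ $\alpha\propto\psi_j=q\alpha+(1-q)\beta$, which forces $\beta$ to be proportional to $\alpha$, hence $\alpha=\beta$ since both are normalized, contradicting $\alpha\neq\beta$. Thus $(e_1,\ldots,e_n)$ has a non-trivial refinement, contradicting fine-grainedness, and so $\psi_j$ is pure. Setting $\omega_j:=\psi_j$, purity gives $\langle\omega_j,\omega_j\rangle=1$, whence $c_j=c_j\langle\omega_j,\omega_j\rangle=e_j(\omega_j)\leq 1$ because $e_j$ is an effect and $\omega_j$ a state; together with $c_j\geq 0$ this yields $c_j\in[0,1]$ and $e_j=c_j\langle\omega_j,\cdot\rangle$, as claimed.

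The main obstacle I anticipate is the middle step: cleanly certifying that the split pieces $e_j',e_j''$ are valid effects and that the resulting collection is an admissible refinement. Both rely essentially on self-duality, which lets us move freely between the state and effect pictures and read off positivity directly from the cone; this is precisely what Postulates 1 and 2 supply, and the remaining arguments are routine.
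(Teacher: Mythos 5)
Your proof is correct, and it shares the paper's overall skeleton: use self-duality to write $e_j=c_j\langle\psi_j,\cdot\rangle$ with $\psi_j\in\Omega_A$, assume $\psi_j$ is mixed, build a refinement of the measurement by splitting $e_j$ according to a convex decomposition of $\psi_j$, and contradict fine-grainedness. The difference lies in how the decomposition is chosen and how non-triviality of the refinement is certified. The paper invokes Postulate 2 to take a \emph{classical} decomposition $\psi_j=\sum_k p_k\omega_k$ into perfectly distinguishable pure states, splits $e_j$ into the corresponding pieces, and then uses orthogonality of the frame ($e'_n(\omega_0)=c_jp_0=e_n(\omega_0)$ but $e'_n(\omega_1)=0\neq e_n(\omega_1)$) to see that the refinement is non-trivial. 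You instead take an arbitrary binary decomposition $\psi_j=q\alpha+(1-q)\beta$ with $\alpha\neq\beta$ (available for any mixed state by definition) and rule out triviality by a proportionality-plus-normalization argument: $e'_j\propto e_j$ would force $\alpha=\psi_j$ and hence $\alpha=\beta$. Your route is slightly more economical: it does not re-invoke Postulate 2 at this step, so it establishes the lemma in any theory with the self-dual inner-product structure (e.g.\ under bit symmetry alone, as in the reference \cite{Skalarprodukt}), whereas the paper's argument leans on the classical-decomposition machinery it has already set up. The paper's version, in turn, fits naturally with the subsequent lemmas, which repeatedly manipulate classical decompositions and orthogonal frames. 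Both your verification that the split pieces are valid effects and your bound $c_j\le 1$ via $\langle\omega_j,\omega_j\rangle=1$ match the paper's reasoning.
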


\begin{proof}
	If $e_j = 0$, we can just take $c_j = 0$ and any pure state $\omega_j$. So from now on assume $e_j \ne 0$.
	
	Because of self-duality there exists some $\omega' \in A_+$ such that $\braket{\omega' , \cdot} = e_j$. As $e_j \ne 0$ also $\omega' \ne 0$ and therefore $u_A(\omega') \ne 0$. With $A_+ = \mathbb R_{\ge 0} \cdot \Omega_A$ and $c_j := u_A(\omega') > 0$ there exists an $\omega \in \Omega_A$ such that $\omega' = c_j \cdot \omega$. We want to prove that $\omega$ is pure, so assume it was not pure. Then it has a classical decomposition $\omega = \sum_{k=0}^N p_k \omega_k$ with $p_k > 0$ and $N \ge 1$. By relabelling we can assume $j = n$, i.e. we consider $e_n = c_j \sum_{k=0}^N p_k \braket{\omega_k , \cdot}$. Define a measurement $(e_1',...,e_{n+N}')$ by $e_k' := e_k$ for all $k = 1,2,...,n-1$ and $e_{n+i} ':= c_j p_i \braket{\omega_i,\cdot}$ for all $i = 0,1,...,N$. Because of $0\leq c_j p_i \braket{\omega_i,\cdot} = e_{n+i}'$ and $\sum_{k=1}^{n+N} e_k' = \sum_{k=1}^{n-1} e_k + \sum_{i=0}^N c_j p_i \braket{\omega_i,\cdot} = \sum_{k=1}^n e_k = u_A$ this is a well-defined measurement.
	
	Now define $M : \{1,...,n+N\} \rightarrow \{1,...,n\}$ by $M(i) := i$ for all $i = 1,...,n-1$ and $M(i) := n$ for all $i=n,...,n+N$. Then we get
	\begin{align*}
		\sum_{ \{a | M(a) = i\}} e_a' &= e_i & \text{ for } i < n  \\	
		\sum_{ \{a | M(a) = i\}} e_a' &= \sum_{a=n}^{n+N} e_i' = e_n & \text{ for } i = n.
	\end{align*}
	Thus the measurement $(e_1',...,e_{n+N}')$ is a refinement of $(e_1,...,e_n)$. With $e_n'(\omega_0) = c_j p_0 = e_n(\omega_0)$ and $ e_n'(\omega_1) = 0 \ne e_n(\omega_1)$ we find that $e_n'$ is not proportional to $e_n$, thus the fine-graining is non-trivial. This is in contradiction to our assumptions. Thus $\omega$ has to be pure. Furthermore $1= u_A(\omega) \ge e_j(\omega) = c_j \braket{\omega,\omega} = c_j$.
	
	So in total we have found that $e_j = c_j \braket{\omega,\cdot}$ with $\omega \in \Omega_A$ pure and $c_j \in [0,1]$. 
\end{proof}

\begin{lem} \label{Lemma:PerfectlyDistinguishingAndFineGrained}
	Assume Postulates 1 and 2. Let $\omega \in \Omega_A$ and $\omega = \sum_{j=1}^d p_j \omega_j$ be a decomposition into a maximal frame. Then the measurement that perfectly distinguishes the $\omega_j$ (i.e.\ $e_k(\omega_j) = \delta_{jk}$) can be chosen to be fine-grained.
\end{lem}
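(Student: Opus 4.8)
The plan is to exhibit an explicit distinguishing measurement and show directly that it admits no nontrivial refinement. Using self-duality, I identify each effect with a vector in $A_+$ via $e(\cdot)=\langle e,\cdot\rangle$, and take the candidate measurement $e_k:=\langle\omega_k,\cdot\rangle$ for $k=1,\ldots,d$. First I would check that this is a valid measurement that perfectly distinguishes the frame. Since the $\omega_j$ form a frame they are pure and pairwise perfectly distinguishable, so $\langle\omega_j,\omega_j\rangle=1$ and $\langle\omega_j,\omega_k\rangle=0$ for $j\neq k$; hence $e_k(\omega_j)=\delta_{jk}$. Normalization follows from eq.~(\ref{eqDecompUnit}) applied to the full face: a maximal frame satisfies $\sum_k\omega_k=u_A$, so $\sum_k e_k=\langle\sum_k\omega_k,\cdot\rangle=\langle u_A,\cdot\rangle=u_A$.

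The substantive step is to show $(e_1,\ldots,e_d)$ is fine-grained, i.e.\ has no nontrivial refinement. Suppose a measurement $(f_1,\ldots,f_m)$ refines it via a map $M$, so that $e_k=\sum_{\{j\,:\,M(j)=k\}} f_j$ for each $k$. By self-duality each $f_j$ is a vector $\psi_j\in A_+$, and the refinement identity reads $\omega_k=\sum_{\{j\,:\,M(j)=k\}}\psi_j$ as vectors in $A_+$. The key observation is that purity of $\omega_k$ means it generates an \emph{extreme ray} of the cone $A_+$. Writing $t_j:=u_A(\psi_j)$, the indices $j$ with $t_j>0$ give normalized states $\psi_j/t_j$, and $\omega_k=\sum_j t_j\,(\psi_j/t_j)$ is a convex combination (note $\sum_j t_j=u_A(\omega_k)=1$); since $\omega_k$ is extreme in $\Omega_A$, each such $\psi_j/t_j$ equals $\omega_k$, while $t_j=0$ forces $\psi_j=0$ because $u_A$ vanishes only at the origin. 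Hence every $\psi_j$ is a nonnegative multiple of $\omega_k$, so $f_j=t_j\langle\omega_k,\cdot\rangle=t_j\,e_{M(j)}$ is proportional to $e_{M(j)}$, i.e.\ the refinement is trivial.

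I expect the main obstacle to be making the extreme-ray argument airtight: one must verify that decomposing $\omega_k$ into cone elements $\psi_j$ really collapses everything onto the ray $\mathbb{R}_{\ge 0}\,\omega_k$, which relies on $u_A$ being strictly positive on $A_+\setminus\{0\}$ and on $\omega_k$ being an extreme point of $\Omega_A$. Lemma~\ref{Lemma:FineGrainedDetails}, which states that the effects of \emph{any} fine-grained measurement necessarily have the form $c_j\langle\omega_j,\cdot\rangle$ with $\omega_j$ pure, provides a useful consistency check: our candidate effects already have exactly this shape with $c_j=1$, and the extreme-ray argument above supplies the converse implication that is actually needed for this lemma.
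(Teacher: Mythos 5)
Your proposal is correct and follows essentially the same route as the paper's proof: the same candidate measurement $e_k=\braket{\omega_k,\cdot}$, normalization via $\sum_k\omega_k=u_A$, and then self-duality to write any refinement's effects as cone vectors whose sum reconstructs $\omega_k$, with purity (extremality) of $\omega_k$ forcing every piece onto the ray $\mathbb{R}_{\ge 0}\,\omega_k$. Your handling of the $t_j=0$ case and the strict positivity of $u_A$ on $A_+\setminus\{0\}$ just makes explicit details the paper leaves implicit.
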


\begin{proof}
	Define $e_j := \braket{\omega_j, \cdot}$. As maximal frames add up to the order unit, this is a well-defined measurement and it satisfies $e_j(\omega_k) = \delta_{jk}$. It remains to show that this measurement is fine-grained.
	
	Consider a fine-graining $e_k'$ with $e_i = \sum_{\{ j | M(j) = i \}} e_j'$. By self-duality, there exist $c_j \ge 0$ and $\omega_j' \in \Omega_A$ such that $e_j' = c_j \braket{\omega_j',\cdot}$ and therefore $\sum_{\{ j | M(j) = k \}}c_j \omega_j' = \omega_k$. As $1 = u_A(\omega_k) = \sum_{\{ j | M(j) = k \}}c_j u_A(\omega_j') = \sum_{\{ j | M(j) = k \}}c_j$ we find that $\sum_{\{ j | M(j) = k \}}c_j \omega_j' = \omega_k$ is a convex decomposition of a pure state. This requires $c_j = 0$  or $\omega_j' = \omega_k$. In both cases $e_j' = c_j \braket{\omega_k , \cdot} = c_j e_k$ holds true for all $j$ with $M(j) = k$. Therefore, the fine-graining is trivial.
\end{proof}

\begin{lem} \label{Lemma:Bistochastic}
	Assume Postulates 1 and 2. Consider a fine-grained measurement $\mathbf e = (e_1,...,e_N) \in \mathcal E^*$. Then the maximal number of perfectly distinguishable states $d$ (often denoted as $N_A$) satisfies $d \le N$.
	
	Furthermore, consider a state $\omega \in \Omega_A$ with classical decomposition $\omega = \sum_{j=1}^d p_j \omega_j$ into a maximal frame. 
	Define the vector $\mathbf q := (e_j(\omega))_{1\le j \le N}$ of outcome probabilities and the $N$-component vector $\mathbf p = (p_1,...,p_d,0,...,0)\in\mathbb{R}^N$. Then $\mathbf q \prec \mathbf p$, i.e. there exists a bistochastic $N\times N$-matrix $M$ with $\mathbf q = M \mathbf p$. 
\end{lem}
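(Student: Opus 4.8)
The plan is to combine the explicit structure of fine-grained effects with a single normalization identity, after which both claims fall out with essentially no further work. First I would invoke Lemma~\ref{Lemma:FineGrainedDetails} to write each effect as $e_k = c_k\langle\psi_k,\cdot\rangle$ with $\psi_k\in\Omega_A$ pure and $c_k\in[0,1]$. Since $(e_1,\ldots,e_N)$ is a measurement, $\sum_k e_k=u_A$, which via self-duality becomes the vector identity $\sum_{k=1}^N c_k\psi_k=u_A$. This representation is the workhorse for everything that follows.

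The key quantitative step is the identity $\sum_{k=1}^N c_k = N_A = d$, which I would obtain by computing $\langle u_A,u_A\rangle$ in two ways. Pairing the vector identity above with $u_A$ and using $\langle\psi_k,u_A\rangle=u_A(\psi_k)=1$ gives $\langle u_A,u_A\rangle=\sum_k c_k$. On the other hand, writing $u_A=\sum_{j=1}^d\omega_j$ for the maximal frame $\{\omega_j\}$ via~(\ref{eqDecompUnit}) and using orthonormality of frames, $\langle\omega_i,\omega_j\rangle=\delta_{ij}$, gives $\langle u_A,u_A\rangle=d$. Comparing yields $\sum_k c_k=d$. The bound $d\le N$ is then immediate, since $d=\sum_k c_k\le\sum_k 1=N$ as each $c_k\le 1$.

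For the majorization, I would define the $N\times N$ matrix on its first $d$ columns by $M_{kj}:=e_k(\omega_j)=c_k\langle\psi_k,\omega_j\rangle$, so that $q_k=\sum_{j=1}^d M_{kj}p_j=\sum_{j=1}^N M_{kj}p_j$ because $p_j=0$ for $j>d$; this gives $\mathbf q=M\mathbf p$ exactly. These first $d$ columns already sum to one, $\sum_k M_{kj}=\sum_k e_k(\omega_j)=u_A(\omega_j)=1$, while the partial row sums evaluate to $\sum_{j=1}^d M_{kj}=c_k\langle\psi_k,u_A\rangle=c_k$. It then remains to fill the last $N-d$ columns so that every row sums to $1$ and every new column sums to $1$. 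The residual row sums $1-c_k\ge 0$ total $N-\sum_k c_k=N-d$, which matches the $N-d$ required column totals, so the product assignment $M_{kj}:=(1-c_k)/(N-d)$ for $j>d$ does the job; in the boundary case $N=d$ one necessarily has every $c_k=1$, so no columns are added and $M=(\langle\psi_k,\omega_j\rangle)_{k,j}$ is already bistochastic. The resulting $M$ is non-negative with all row and column sums equal to $1$, hence bistochastic, which establishes $\mathbf q\prec\mathbf p$.

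The only genuinely substantive point is the identity $\sum_k c_k=d$; everything else is either a direct consequence of Lemma~\ref{Lemma:FineGrainedDetails} or the elementary transportation fact that a non-negative matrix with prescribed non-negative row and column marginals exists whenever the two marginal totals agree. I would take care to treat the $N=d$ case separately so as not to divide by $N-d$, but this case is forced to be trivial and needs only a remark.
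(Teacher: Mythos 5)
Your proposal is correct and follows essentially the same route as the paper's proof: both rest on Lemma~\ref{Lemma:FineGrainedDetails}, the normalization identity $\sum_k c_k = \langle u_A,u_A\rangle = d$, and the same bistochastic matrix with first $d$ columns $e_k(\omega_j)$ and remaining entries $(1-c_k)/(N-d)$. Your explicit treatment of the boundary case $N=d$ (where all $c_k=1$ are forced) is a small point of added care that the paper leaves implicit.
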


\begin{proof}
	By Lemma \ref{Lemma:FineGrainedDetails} there exist $c_j \in [0,1]$ and pure $\omega_j' \in \Omega_A$ such that $e_j = c_j \braket{\omega_j',\cdot}$. Define $q_l := e_l(\omega) = c_l \braket{\omega_l',\omega}$. Using $\sum_{j=1}^N e_j = u_A$ and $\sum_{j=1}^d \nu_j = u_A$ for an arbitrary maximal frame $(\nu_1,...,\nu_d)$ we find:
\[
		\sum_{j=1}^N c_j = \sum_{j=1}^N c_j u_A(\omega_j') = u_A\left(\sum_{j=1}^N c_j \omega_j'\right) = u_A(u_A) = u_A\left(\sum_{j=1}^d \nu_j\right) = \sum_{j=1}^d u_A(\nu_j) = \sum_{j=1}^d 1 = d.
\]
	As $c_j \in [0,1]$, $\sum_{j=1}^N c_j = d$ shows $d \le N$.
	
	Set $q_{l|j} := e_l(\omega_j)$, introduce the $N$-component vector $\mathbf p := (p_1,...,p_d,0,...,0)$ and use that measurement effects and states of maximal frames add up to the order unit:
	\begin{align*}
		\sum_{j=1}^d q_{l|j} p_j &= \sum_{j=1}^d e_l(p_j \omega_j) = e_l(\omega) = q_l,\\
		\sum_{j=1}^d q_{l|j} &= \sum_{j=1}^d e_l(\omega_j) = c_l \sum_{j=1}^d \braket{\omega_l',\omega_j} = c_l u_A(\omega_l') = c_l,\\
		\sum_{l=1}^N q_{l|j} &= \sum_{l=1}^N e_l(\omega_j) = u_A(\omega_j) = 1.
	\end{align*}
	For $j \le d$ we define $M_{l,j} := q_{l|j}$. If $d<N$ we also define $M_{l,j} := \frac{1-c_l}{N-d}$ for $N \ge j > d$. M is an $N \times N$-matrix and it is bistochastic: first of all, $M_{l,j} \ge 0$ for all $l,j$. Furthermore:
	\begin{align*}
		\sum_{l=1}^N M_{l,j} &= \sum_{l=1}^N q_{l|j} = 1 &\text{ for } j \le d,\\
		\sum_{l=1}^N M_{l,j} &= \sum_{l=1}^N \frac{1-c_l}{N-d} = \frac{N-d}{N-d} = 1 &\text{ for } j > d,\\
		\sum_{j=1}^N M_{l,j} &= \sum_{j=1}^d q_{l|j} + (N-d)\cdot \frac{1-c_l}{N-d} = c_l + 1 -c_l = 1.
	\end{align*}
	This bistochastic matrix maps $\mathbf p$ to $\mathbf q$, i.e. $M \cdot \mathbf p = \mathbf q$:
\[
		\sum_{j=1}^N M_{l,j} p_j = \sum_{j=1}^d q_{l|j}  p_j = q_l.
\]
\end{proof}

	Now we come to the proof of the theorem:	
\begin{proof}[Proof of Theorem \ref{theorem:MeasIsTherm}]
	Consider an arbitrary fine-grained measurement $(e_1,...,e_N)$ and an arbitrary state $\omega \in \Omega_A$ with classical decomposition $\omega = \sum_{j=1}^d p_j \omega_j$ into a maximal frame. Define $q_l := e_l(\omega)$ and the $N$-component vector $\mathbf p = (p_1,...,p_d,0,...,0)$. Let $M$ be the bistochastic matrix from Lemma \ref{Lemma:Bistochastic} with $\mathbf q = M \cdot \mathbf p$. By Birkhoff's theorem, it is a convex combination of permutation matrices, i.e. $M = \sum_{\sigma \in S_N} a_\sigma P_\sigma$ for a probability distribution $a_\sigma$ and permutation matrices $P_\sigma$. Wlog we only consider the Shannon entropy; the proof for the R\'{e}nyi entropies works exactly the same way. As the Shannon entropy is Schur-concave and invariant under permutations:
\[
		H(\mathbf q) \ge \sum_{\sigma \in S_N} a_\sigma H(P_\sigma \cdot \mathbf p) = \sum_{\sigma \in S_N} a_\sigma H(\mathbf p) = H(\mathbf p) = S(\omega).
\]
	Furthermore $H(\mathbf p) = -\sum_{j=1}^d p_j \log p_j = S(\omega)$ is the entropy of a measurement that perfectly distinguishes the $\omega_j$, i.e. $e_j(\omega_k) = \delta_{jk}$. Because of Lemma \ref{Lemma:PerfectlyDistinguishingAndFineGrained}, such a measurement can be chosen to be finegrained. Therefore we find:
\[
		\widehat{H}(\omega) = \inf_{\mathbf e \in \mathcal E^*} H(\mathbf e(\omega)) = H(\mathbf p) = S(\omega).
\]
\end{proof}

\subsubsection{Proof of Theorem~\ref{ThmAllEquiv}}
As mentioned in the main text, the equivalences $(i)\Leftrightarrow(iii)\Leftrightarrow(iv)\Leftrightarrow(v)$ are shown in~\cite{Postulates}. We will now prove the equivalence $(ii)\Leftrightarrow(v)$, which proves Theorem~\ref{ThmAllEquiv}. Taking into account Theorem~\ref{TheRenyi}, and formulating the atomic covering property in the context of theories that satisfy Postulates 1 and 2, it remains to show the equivalence of the following two statements:
\begin{itemize}
\item[(ii')] For all states $\omega\in\Omega_A$, we have $\check S_0(\omega)\geq S_0(\omega)$.
\item[(v')] If $F$ is any face of $A_+$, and $\omega$ is any pure state, then the smallest face $G$ that contains both $F$ and $\omega$ has rank $|G|\leq |F|+1$. (Note that this is trivial if $\omega\perp F$.)
\end{itemize}

We will first prove that $(ii')\Rightarrow(v')$, which is equivalent to $\neg(v')\Rightarrow\neg(ii')$. So suppose that there exists some face $F$ of $A_+$ and a pure state $\omega$ such that the face $G$ generated by both has rank $|G|\geq |F|+2$. Let $\omega_1,\ldots,\omega_{|F|}$ be a frame that generates the face $F$. Then $F$ is also generated by $\nu:=\frac 1 {|F|} \sum_{j=1}^{|F|} \omega_j$, i.e.\ the normalized projective unit of $F$. This is because every face containing $\nu$ also contains all the $\omega_j$ (and vice versa), and $F$ is the smallest face with this property.

Now consider the state $\frac 1 2 \omega+\frac 1 2 \nu$. The smallest face that contains this state must be $G$. If this state had a decomposition into $|F|+1$ or fewer perfectly distinguishable pure states, then these would also generate $G$, and so $|G|\leq |F|+1$, in contradiction to our assumption. Thus any decomposition of $\frac 1 2 \omega+\frac 1 2\nu$ into perfectly distinguishable pure states uses at least $|F|+2$ states with non-zero coefficients, i.e.\ $S_0(\frac 1 2 \omega+\frac 1 2 \nu)\geq \log(|F|+2)$. But $\frac 1 2\omega+\frac 1 2 \nu=\frac 1 {2|F|} \sum_{j=1}^{|F|} \omega_j +\frac 1 2 \omega$ is a convex decomposition into $|F|+1$ pure states, thus
\[
   \check S_0\left(\frac 1 2 \omega+\frac 1 2 \nu\right)\leq\log(|F|+1)<\log(|F|+2)\leq S_0\left(\frac 1 2 \omega+\frac 1 2 \nu\right).
\]
It remains to show that $(v')\Rightarrow(ii')$. So suppose that (v') holds, but that there is a state $\omega\in\Omega_A$ with $\check S_0(\omega)<S_0(\omega)$; we will show that this leads to a contradiction. By definition of $\check S_0$, if this is the case, then there exist pure states $\omega_1,\ldots,\omega_n$ with $n=\exp(\check S_0(\omega))$ and $p_1,\ldots,p_n\geq 0$, $\sum_i p_i=1$, such that $\omega=\sum_{i=1}^n p_i\omega_i$. Using property (v'), and recursively looking at the faces generated by $\omega_1$, generated by $\omega_1,\omega_2$, generated by $\omega_1,\omega_2,\omega_3$ and so forth, shows that the rank of the face $G$ generated by $\omega_1,\ldots,\omega_n$ can be at most $n$. Since $\omega\in G$, this shows that $\omega$ can be decomposed into $n$ or fewer pure perfectly distinguishable states. Therefore $S_0(\omega)\leq\log n =\check S_0(\omega)$.\qed

\subsection{Some additional remarks on Theorem~\ref{ThmAllEquiv} and quantum logic}
\label{AppendixQuantumLogic}

The GPT framework has a close relation to quantum logic. This is not surprising, since much of the terminology of generalized probabilistic theories has appeared much earlier, in work on quantum logic and beyond. The approach via convex
  sets of states and observables can be traced back
  to Mackey~\cite{Mackey} (who immediately made connections to quantum
  logic), and was developed further through the 1960s and beyond.  A partial list
  of references includes~\cite{Mielnik, Gunson, LudwigAx}, the last
  two of which offer axiomatic characterizations of quantum theory.
  Interaction with the quantum logic tradition continued, with the
  orthomodularity of the lattice of faces of the state and/or effect
  spaces often providing a point of contact, especially in Ludwig's
  work~\cite{LudwigAx}.  Also closely related to the convex sets
  approach was the work of Foulis and Randall~\cite{Foulis-Randall66,
    Foulis-Randall74, Foulis-Randall81, Foulis-Randall81a} who, for
  example, studied ways of combining probabilistic systems.

Since Postulates 1 and 2 imply that the state cone $A_+$ is
self-dual (so coincides with the effect cone), and that each face of
this cone is the intersection of the cone with the image of a
\emph{filter} (equivalently given self-duality, a \emph{compresssion}
in the sense of \cite{ASBook}), we have from these postulates
alone, via e.g. \cite{ASBook}, Theorem 8.10, that the face
lattice is orthomodular.  The notion of orthomodular partially ordered
set, or its special case, the notion of orthomodular lattice, is often
taken to \emph{define} the notion of quantum logic, so we can say that
Postulates 1 and 2 imply that the face lattice of the cone of states,
(equivalently of the cone of effects, or of the set of normalized states) is a
quantum logic.

The covering property, in its most common variant the
\emph{atomic covering property}, states that for any element $x$ of the lattice
and any atom $a$ not below or equal to $x$, $a \join x$ covers $x$.\footnote{Here ``$y$ covers $x$'' means ``$y \ge x$, $y \ne x$, and
  there is no $z$ distinct from $x$ and $y$ such that $y \ge z \ge x$,
  i.e. ``$y$ is above $x$ with nothing in between.  An atom is an
  element that covers $0$.}  The equivalence of (iv) and (v), in a
setting more general than Postulates 1 and 2, is Proposition 9.7 of
\cite{ASBook} (first appearing in Proposition 4.2 in
\cite{AlfsenShultzJordan} and the discussion preceding it).  Along
with orthomodularity, the covering law was one of the assumptions of
Piron's famous lattice-theoretic characterization (\cite{Piron}; also
\cite{PironBook} and see the discussion in \cite{WilceLogic} or for
more detail and proofs, \cite{Varadarajan} pp.\ 18-38, 114-122) of a
class of lattices close to, although larger than, that of real,
complex, and quaternionic quantum theory.  A generalization of the
covering law was also used in Ludwig's axiomatization (see
e.g. \cite{LudwigAx}\footnote{The results in \cite{LudwigAx} were
    mostly obtained in a series of papers in the late 1960s and early
    1970s.}) of quantum theory within the convex sets framework, in
which the relevant lattice is a lattice of faces of the state space
(equivalent to a lattice of extremal effects in his context), and the
result characterized real, complex, and quaternionic quantum
theory.\footnote{Both Piron and Ludwig also made an atomicity
    assumption (which may be considered more technical than
    substantive, and always holds in finite dimension) and also
    assumed lattice dimension 4 or greater, so Hilbert spaces of
    dimension 3 or less were not dealt with, nor were spin factors or
    the exceptional Jordan algebra.  These low-dimensional cases also
    satisfy Piron's, and Ludwig's premises, but a theorem ruling out
    other instances satisfying them appears to be lacking.}

\end{document}